\providecommand{\U}[1]{\protect\rule{.1in}{.1in}}
\newtheorem{theorem}{Theorem}
\newtheorem{lemma}[theorem]{Lemma}
\newenvironment{proof}[1][Proof]{\noindent\textbf{#1.} }{\ \rule{0.5em}{0.5em}}
\begin{document}
\title{Robust Clustering Using Tau-Scales}
\author[1]{Juan D. Gonzalez} 
\author[2]{Victor J. Yohai} 
\author[3]{Ruben H. Zamar}
\affil[1]{\small Underwater Sound Division. Argentinian Navy Research Office (DIIV) UNIDEF-CONICET.}
\affil[1,2]{\small Instituto de C\'alculo, Universidad de Buenos Aires -
CONICET}
\affil[3]{\small Department of Statistics, University of British Columbia}

\maketitle

\begin{abstract}
K means is a popular   non-parametric clustering procedure introduced by
\citet{steinhaus1956division} and further developed by \citet{macqueen1967some}. It is known,
however, that K means does not perform well in the presence of outliers.
Cuesta-Albertos et al (1997) introduced a robust alternative, trimmed
K means, which can be tuned to be robust or efficient, but cannot achieve
these two properties simultaneously in an adaptive way. To overcome this
limitation we propose a new robust clustering procedure called
K Tau Centers, which is based on the concept of Tau scale introduced by
\citet{yohai1988high}. We show that K Tau Centers  performs well in extensive
simulation studies and real data examples. We also show that the centers found by the proposed method are
consistent estimators of the ``true'' centers defined as the minimizers of the 
the objective function at the population level.
\end{abstract}
\section{Introduction}

\textbf{\ }

Clustering is a useful tool in unsupervised data analysis. Several $p$%
-dimensional measurements, $\mathbf{x}_{1},\mathbf{...,x}_{n},$ are made on $%
n$ items and used to find a number, $K$, of homogeneous groups called 
\textit{clusters}. 

One way to define the clusters is by giving the centers of the clusters.
Suppose that the centers of the clusters $\boldsymbol{\mu}_{1},...%
\boldsymbol{\mu}_{K}$ are given, where each center is an element of $\mathbb{%
R}^{p}.$ Then the clusters $C_{k},$ $1\leq k\leq K$ can be defined by 
\begin{equation}
C_{k}=\{\mathbf{x}_{i}:\min_{1\leq j\leq K}||\mathbf{x}_{i}-\boldsymbol{\mu }%
_{j}||=||\mathbf{x}_{i}-\boldsymbol{\mu}_{k}||\},  \label{one}
\end{equation}
where $\| \cdot \|$ is the Euclidean norm.
In this paper we consider the robust estimation of the cluster  centers.

There are many parametric and nonparametric approaches
for clustering. One of the most popular non-parametric procedures is
\emph{K means}, introduced by \citet{steinhaus1956division},
%Steinhaus (1956)% 
and popularized \citet{macqueen1967some}. K means 
optimizes a very natural objective function and therefore is  conceptually simple and appealing. Moreover,  K means has been efficiently implemented in statistical software.

The K means clustering procedure can be described as follows. Let 
\begin{equation*}
D(\mathbf{x}_{i,}\boldsymbol{\mu }_{1},...\boldsymbol{\mu }_{K})=\min_{1\leq
j\leq K}||\mathbf{x}_{i}-\boldsymbol{\mu }_{j}||,1\leq i\leq n
\end{equation*}%
Then the centers of the clusters are obtained by 
\begin{equation*}
(\widehat{\boldsymbol{\mu }}_{1},...\widehat{\boldsymbol{\mu }}_{K})=\arg
\min_{\boldsymbol{\mu }_{1},...\boldsymbol{\mu }_{K}}\sum_{i=1}^{n}D^{2}(%
\mathbf{x}_{i,}\boldsymbol{\mu }_{1},...\boldsymbol{\mu }_{K})
\end{equation*}
Unfortunately K-means is very sensitive to the presence of 
outliers, defined as points that lay far away from all the  clusters. To illustrate this point,  in Figure \ref{intro1} (A) we show two clusters generated by two bivariate normal distributions.
The K-mean cluster centers, marked as triangles, are well identified in panel (A). In panel  (B) we add 10\% percent of
outliers and observe that the K-means cluster centers are no longer well identified.

Procedures that are not much affected by the presence of outliers are called
robust. %Cuesta-Albertos et al. (1997) 
\citet{EspaniolesTKMeans} noted that K-means is not robust and
proposed a robust alternative, called trimmed K-means (TK-means).This
procedure find the centers as K-means after eliminating a fraction $\alpha $
of the observations. The trimmed points are iteratively defined as those further away from the 
current cluster centers. K-means is then applied to the remaining points. When the trimming constant $\alpha$ is well specified, the outliers are likely  identified and  trimmed. However,   in practice $\alpha$ is unknown and difficult to estimate. 

\begin{figure}[htb]
\begin{center}
\begin{tabular}{cc}
(A) & (B) \\ 
\includegraphics[scale=0.42]{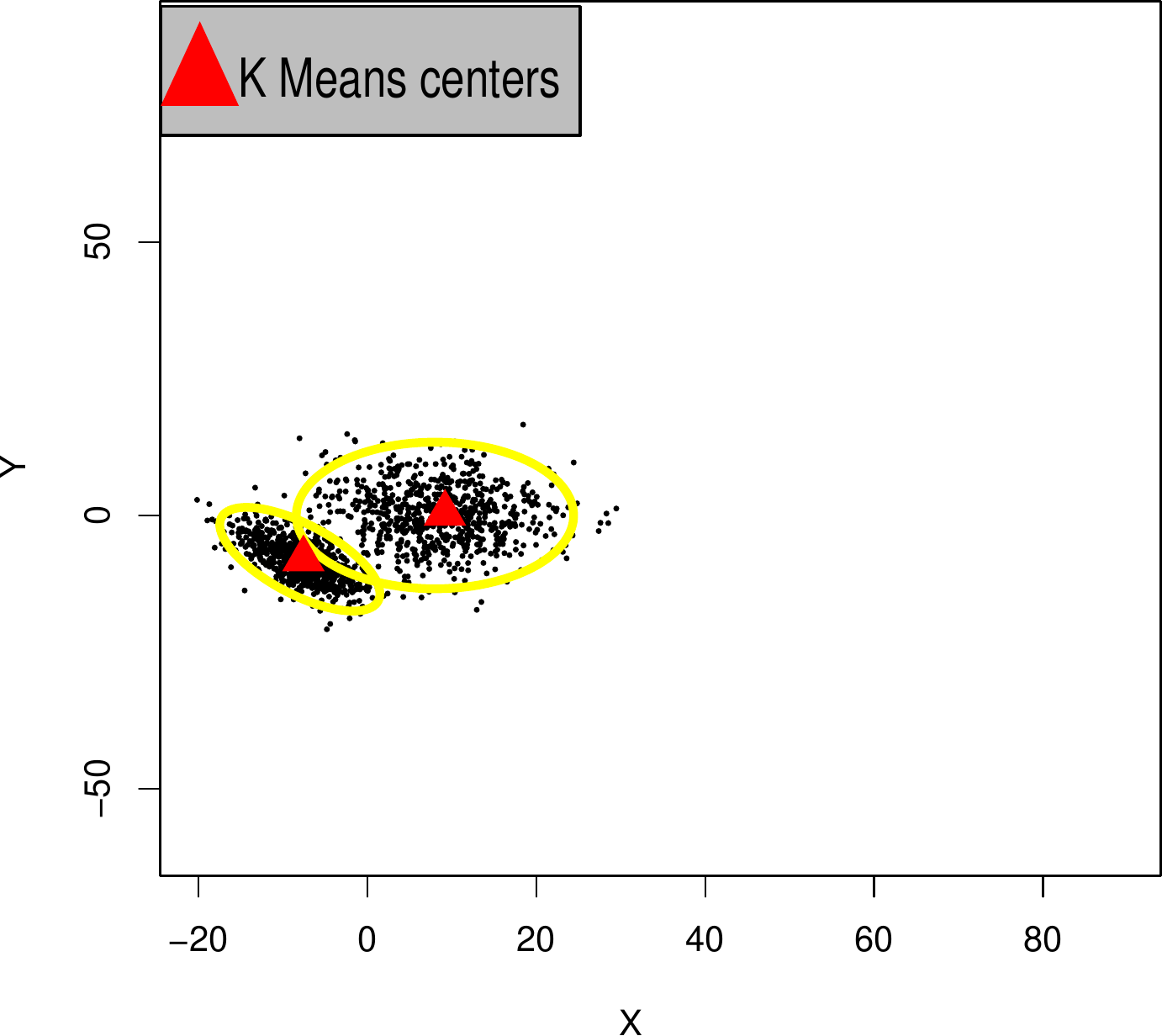} & %
\includegraphics[scale=0.42]{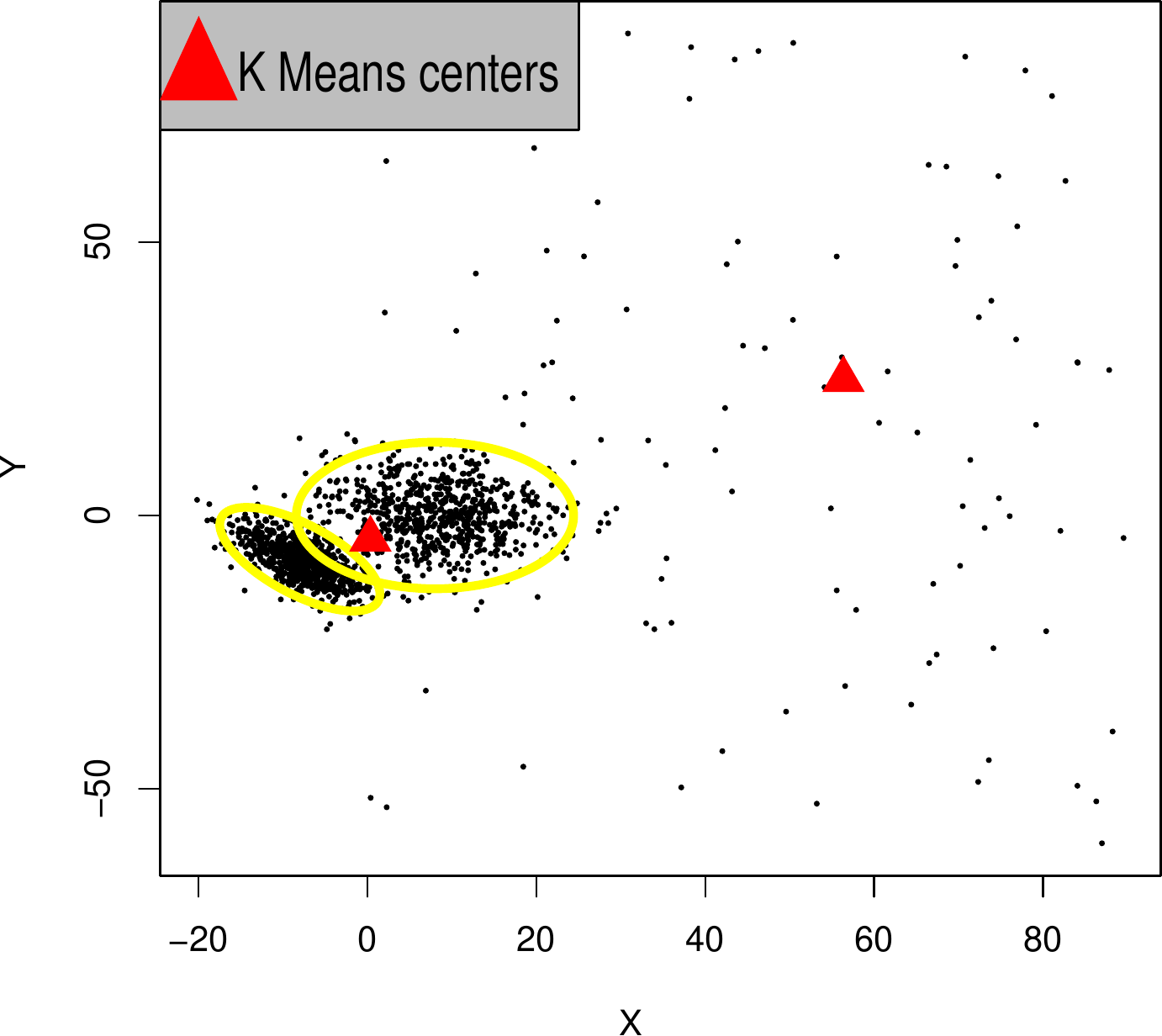}%
\end{tabular}%
\end{center}
\caption{  K means applied to  data before (A) and after (B)  10\% of
outliers are added to the sample }
\label{intro1}
\end{figure}

The rest of the paper is organized as follows. In Section 2 we define the
K-Tau-Centers clustering procedure  (K-Tau) and give an algorithm to 
compute it. In Section 3 we establish the consistency of the estimator.
That is, we show that the estimated cluster centers approach 
%in the
%Hausdorff metric 
the population cluster centers defined as the solution
of  K-Tau  at the population level. In Section 4 we conduct a
simulation study to compare K Tau with K means and TK means. In Section 5 we
apply K-Tau to the processing of satellite images and illustrate the possible
application of robust clustering procedures  to the search of missing objects. %All the proofs and
%technical details are given in the Appendix.
In Appendix I we derive the estimating equation of K-Tau. In Appendix II we prove the strong consistency of this procedure.
\section{K-Tau-Centers}

Scale estimators play a  central role in the definition of our procedure and therefore they are briefly reviewed in the next section.

\subsection{\textbf{Scale estimators}}

Given a sample $u_{1},...,u_{n}$ of real numbers, a scale estimator $%
s(u_{1},...,u_{n})$ is a measure of how large in absolute value are the
elements of a sample. An scale estimator should have the following
properties: (1)  $s(u_{1},...,u_{n})$ only depend on $|u_{1}|,...,|u_{n}|$; (2) $s(0,...,0)=0$; (3) $|v_{1}|\geq |u_{1}|,...,|v_{n}|\geq |u_{n}|$ implies that  $s(v_{1},...,v_{n})\geq s(u_{1},...,u_{n})$; (4) $s(\lambda u_{1},...,\lambda u_{n})=|\lambda |s(u_{1},...,u_{n})$; (5) $s(u_{i_{1}},...,u_{i_{n}})=s(u_{1},...,u_{n})$ for any permutation $i_{1},...,i_{n}$.
%\begin{enumerate}
%\item
%
%\item
%
%\item 
%
%\item
%
%\item 
%\end{enumerate}
The following are examples of scale estimators:

\medskip 
 1) L$_{2}$ scale $s_{2}(u_{1},...,u_{n})=\left( {%
\sum_{i=1}^{n}}u_{i}^{2}/n\right) ^{1/2}$
\medskip 

2)  L$_{1}$ scale $s_{1}(u_{1},...,u_{n})=(1/n){\sum\limits_{i=1}^{n}}%
|u_{i}|$
\medskip 

3)  $\alpha$-trimmed scale $s_{\alpha}(u_{1},...,u_{n}) = 
\sum_{ i \leq n(1-\alpha)}|u_{(i)}|/[ n(1-\alpha)]$
\medskip 

4) Median scale $s_{\text{med}}(u_{1},...,u_{n})=$median$%
(|u_{1}|,...,|u_{n}|)$%\item 3. Median scale
%$$s_{\text{med}}(u_{1},...,u_{n})={\text{median}}(|u_{1}|,..,|u_{n}|) $$
%we order the absolute values $|u|_{(1)}\leq |u|_{(2)}\leq...\leq |u|_{(n)}$
%and\newline
%if $n=2p+1$ then $s_{\text{med}}(u_{1},...,u_{n})=|u|_{p}$\newline
%if $n=2p$ then $s_{\text{med}}(u_{1},...,u_{n})=(|u|_{p}+|u|_{p+1})/2$ 
\medskip 

5) M scale, Huber (1964). The value $s_{\text{M}%
}(u_{1},...,u_{n})$ is implicitly  defined by a value $s$ that solves 
\begin{equation}
\frac{1}{n}{\displaystyle\sum }_{i=1}^{n}\rho \left({u_{i}}/{s}\right)
=b
\label{sdefff}
\end{equation}%
where $\rho :\mathbb{R}\rightarrow \mathbb{R}_{\geq 0}$ is even, non
decreasing in the absolute value and bounded. Usually $b=\max \rho /2$ \ for
breakdown point equal to $1/2$. The breakdown point is a measure of the
robustness of an estimator. It is the minimum fraction of outliers that may
take the value of the estimator to the boundary of the parameter space. In the case of a scale
estimator the extremes cases  are infinite and zero. The function $\rho $ may be for example a
a member of  the bi-square family given by 
\begin{equation*}
\rho _{_{T}}(u,c)=1-\left( 1-\left({u}/{c}\right)^{2}\right) ^{3}I(|u|\leq c).
\end{equation*}%
\medskip 

6) Tau scale, \citet{yohai1988high}. These scale estimators combine high Gaussian efficiency and high breakdown point. 
This is an advantage over M scales which cannot achieve these two properties simultaneously. To define 
a $\tau $ scale we need two functions, $\rho_1$ and $\rho_{2}$ satisfying the conditions specified in the
definition of M scales. Given a sample $u_{1},...,u_{n}$, we first compute an M scale $%
s=s(u_{1},...,u_{n})$ using $\rho _{1}$. The $\tau$ scale is then defined as
\begin{equation*}
s_{\tau }(u_{1},...,u_{n})=\left(\sum_{i=1}^{n}\rho _{2}\left({%
u_{i}}/{s}\right)/n\right)s.
\end{equation*}

The first two scales are not robust because a single  observation can make
them  arbitrarily large. Scales 3-5 can  achieve high breakdown point and  high efficiency
(but not both properties together). The tau scale  can be tuned to achieve robustness and efficiency simultaneously.

%\color{red} {\bf I suggest to remove Figure 2 and this paragraph} \color{black}
%In Figure \ref{grafico 1} we show two functions in the bisquare family: $\rho
%_{1}$ and $\rho _{2}$ \ with $c=$ 1.54 and 4   that can be used to
%define   a $\tau $ -scale estimator with breakdown 0.5   and high
%efficiency for normal samples. \ 
%
%\begin{figure}[tbp]
%\begin{center}
%\includegraphics[scale=0.35]{rhofunctions.pdf}
%\end{center}
%\caption{ $\protect\rho_1$ and $\protect\rho_2$ bisquare functions }
%\label{grafico 1}
%\end{figure}

\subsection{K-Tau clustering}

First we observe that  K means  can be formulated as a scale minimization problem. In fact, the K means cluster centers are given by 
\begin{equation*}
(\widehat{\boldsymbol{\mu}}_{1},...\widehat{\boldsymbol{\mu}}_{K})=\arg
\min_{\boldsymbol{\mu}_{1},...\boldsymbol{\mu}_{K}}s_{2}(D(\mathbf{x}_{1},%
\boldsymbol{\mu}_{1},...,\boldsymbol{\mu}_{K}),...,D(\mathbf{x}_{n},%
\boldsymbol{\mu}_{1},...,\boldsymbol{\mu}_{K})).
\end{equation*}
It is clear from this formulation that the lack of robustness of K means derives from  the lack of robustness of  $s_{2}$.
Following this reasoning, we define a robust and efficient clustering procedure by replacing  $s_{2}$ by $s_{\tau
}.$ The cluster centers are now defined as: 
% $(\widetilde{%
%\boldsymbol{\mu }}_{1},...\widetilde{\boldsymbol{\mu }}_{K})$ as follows:
\begin{equation*}
(\widetilde{\boldsymbol{\mu }}_{1},...\widetilde{\boldsymbol{\mu }}%
_{K})=\arg \min_{\boldsymbol{\mu }_{1},...\boldsymbol{\mu }_{K}}s_{\tau }(D(%
\mathbf{x}_{1},\boldsymbol{\mu }_{1},...,\boldsymbol{\mu }_{K}),...,D(%
\mathbf{x}_{n},\boldsymbol{\mu }_{1},...,\boldsymbol{\mu }_{K}))
\end{equation*}%
The corresponding cluster partition $\{C_{1},C_{2},...,C_{K}\}$ is given by equation (\ref{one})
with $\boldsymbol{\mu} _{k}=\widetilde{\boldsymbol{\mu }}_{k},$ $\ 1\leq k \leq K.$

\subsection{Estimating equations}

In the case of K means, the clusters centers that minimize $s_{2}$  are simply
the mean of each group. On the other hand, the cluster centers corresponding to  K-TAU
 satisfy the
following fixed point equations 
\begin{equation}
\boldsymbol{\mu }_{k}=\frac{\sum_{i\in C_{k}}w\left({\Vert \mathbf{x}%
_{i}-\boldsymbol{\mu }_{k}\Vert }/{s}\right) \mathbf{x}_{i}}{\sum_{i\in
C_{k}}w\left( {\Vert \mathbf{x}_{i}-\boldsymbol{\mu }_{k}\Vert }/{s}%
\right) }\quad 1\leq k\leq K,  \label{taueq1}
\end{equation}%
where the weight function $w$ is defined as 
\begin{equation}
w(t)=A\frac{\psi _{1}(t)}{t}+B\frac{\psi _{2}(t)}{t},  \label{peso}
\end{equation}%
with  $\psi _{i}=\rho _{i}^{\prime },$ $i=1,2,$ and $A$ and $B$ are data
dependent, namely

%{\color{red}
%\begin{equation}
%A=\sum_{i=1}^{n}\left[ 2\rho_{2}\left({D(\mathbf{x}_{i},\boldsymbol{\mu%
%}_{1},...,\boldsymbol{\mu}_{K})}/{s}\right) -\psi_{2}\left( {D(\mathbf{x}%
%_{i},\boldsymbol{\mu}_{1},...,\boldsymbol{\mu}_{K})}/{s}\right) \right]
%\label{eqDdefOLD}
%\end{equation}
%\begin{equation}
%B=\ \sum_{i=1}^{n}\psi_{1}\left( {D(\mathbf{x}_{i},\boldsymbol{\mu}%
%_{1},...,\boldsymbol{\mu}_{K})}/{s}\right) \   \label{eqEdefOLD}
%\end{equation}
%}

\begin{equation}
A=\sum_{i=1}^{n}\left[ 2\rho_{2}\left(d_i/{s}\right) -\psi_{2}\left(d_i/s\right)(d_i/s) \right]
\label{eqDdef}
\end{equation}
\begin{equation}
B=\ \sum_{i=1}^{n}\psi_{1}\left(d_i/{s}\right)(d_i/{s}),  \   \label{eqEdef}
\end{equation}
where $d_i= {D(\mathbf{x}_{i},\boldsymbol{\mu}_{1},...,\boldsymbol{\mu}_{K})}$, and $s$ is the M-scale of all the distances from  each observation to the center
of its  cluster, that is 
\begin{equation}
\frac{1}{n}\sum_{i=1}^{n}\rho_{1}\left( {D(\mathbf{x}_{i},\boldsymbol{%
\mu}_{1},...,\boldsymbol{\mu}_{K})}/{s}\right) =b.  \label{eqSdef}
\end{equation}
The Appendix I contains the derivation of these estimating
equations.

\subsection{Computing algorithm}

The estimating equations given above suggest the following iterative
algorithm to compute the cluster centers.
\medskip

%\subsubsection*{Initialization Step} 

\noindent {\bf Initialization Step:}
$K$ random  points  (or otherwise selected points) are needed
  to start the iterations. 
%$\ \boldsymbol{\mu}%
%_{1}^{(0)},\dots,\boldsymbol{\mu}_{K}^{(0)}$ are obtained to start the
%algorithm.
\medskip

\noindent{\bf Updating Step:} Suppose that centers $\boldsymbol{\mu}%
_{1}^{(l)},\dots,\boldsymbol{\mu}_{K}^{(l)}$  are given. Let $C_{1}^{(l)},...,C_{K}^{(l)}$ be the
corresponding clusters. 
First, we compute the M scale $s^{(l+1)}$  by solving 
\begin{equation*}
\frac{1}{n}\sum_{i=1}^{n}\rho_{1}\left( {D(\mathbf{x}_{i},\boldsymbol{%
\mu}_{1}^{(l)},...,\boldsymbol{\mu}_{K}^{(l)})}/{s}\right) =0.5.
\end{equation*}
Next, the constants $A^{(l+1)}$, $B^{(l+1)}$ are computed using equations (%
\ref{eqDdef}) and (\ref{eqEdef}), replacing $(\boldsymbol{\mu }_{1},...,%
\boldsymbol{\mu }_{K})$ and $s$ by $\boldsymbol{\mu }_{1}^{(l+1)},...,%
\boldsymbol{\mu }_{K}^{(l+1)}$ and $s^{(l+1)\text{ }}$ respectively. The new
weight function $\ w^{(l+1)}$ is then defined by equation (\ref{peso}) 
replacing $A$ and $B$ by $A^{(l+1)}$ and $B^{(l+1)}$. Finally, the new centers
are given by 
\begin{equation*}
\boldsymbol{\mu }_{k}^{(l+1)}=\frac{\sum_{i\in C_{k}}w^{(l+1)}\left( {%
\Vert \mathbf{x}_{i}-\boldsymbol{\mu }_{k}^{(l)}\Vert }/{s^{(l+1)}}\right) 
\mathbf{x}_{i}}{\sum_{i\in C_{k}}w^{(l+1)}\left({\Vert \mathbf{x}_{i}-%
\boldsymbol{\mu }_{k}^{(l)}\Vert }/{s^{(l+1)}}\right) },1\leq k\leq K.
\end{equation*}
\medskip 

\noindent \textbf{Stopping Rule:} Given a prescribed tolerance value $\varepsilon $,
the algorithm stops when
\begin{equation*}
{\left\Vert \boldsymbol{\mu }_{k}^{(l+1)}-\boldsymbol{\mu }%
_{k}^{(l)}\right\Vert }/{\left\Vert \boldsymbol{\mu }_{k}^{(l)}\right\Vert }%
\leq \varepsilon ,\ 1\leq k \leq K.
\end{equation*}
Notice that when $\rho_{1}(t)=\rho_{2}(t)=t^{2}$ and $b=1$ we have $s_{\tau}=s_2$ and  the presented computing
algorithm  reduces to the classic Lloyd's K means Algorithm \citet%
{lloyd1982least}.
\medskip

\noindent {\bf Further details:}
We repeat the above procedure $H$ times with different
 starting points. 
%The starting points for the initialization step
%should be specified. 
One possibility is to choose the initial $K$ centers
at random from the observations. There exist more sophisticated ways
for  obtaining starting points such as the ROBIN algorithm, \citet{ROBIN}. This algorithm 
 looks for observations in high density areas that are far from each other. We use ROBIN algorithm once and  random starts $(H-1)$ times. Let 
%\begin{equation*}
$\boldsymbol{\mu }_{1}^{h},...,\boldsymbol{\mu }_{k}^{h},..,.\boldsymbol{\mu }
_{K}^{h}$, $1\leq h\leq H$ be the cluster centers obtained in the h$^{th}$ replication. 
%\boldsymbol{\mu }_{k}^{h}\in R^{p}
%\end{equation*}%
%the values obtained with the iterated procedure the $h$-th time, $1\leq
%h\leq H$.
Let
\begin{equation*}
h_{0}=\arg\min_{1\leq h\leq H}s_{\tau}(D(\mathbf{x}_{1},\boldsymbol{\mu}%
_{1}^{h},...,\boldsymbol{\mu}_{K}^{h_{{}}}),...,D(\mathbf{x}_{n},\boldsymbol{%
\mu }_{1}^{h},...,\boldsymbol{\mu}_{K}^{h}))
\end{equation*}
The final cluster centers are 
$(\hat{\boldsymbol{\mu}}_{1},...,\hat{\boldsymbol{\mu}}_{K}) = (\boldsymbol{\mu}_{1}^{h_{0}},...,\boldsymbol{\mu}_{K}^{h_{0}})$.

In our implementation we use $%
\rho _{1}=\rho (t/c_{1})$ and $\rho _{2}=\rho (t/c_{2})$, with $\rho $ equal
to the smooth hard-rejection loss function given by 
\begin{equation}
\rho (t)\!=\!\left\{ 
\begin{array}{lcc}
1.38t^{2} & \mbox{ if } & |t|<\frac{2}{3} \\ 
0.55\!-\!2.69t^{2}\!+\!10.76t^{4}\!-\!11.66t^{6}+\!4.04t^{8} & \mbox{ if } & 
\frac{2}{3}\leq |t|\leq 1 \\ 
1 & \mbox{ if} & |t|>1.%
\end{array}%
\right.  \label{rhoc}
\end{equation}%
The constants $c_{1}$ and $c_{2}$ depend on $p$ are such that the
tau scale is robust and efficient, following \citet{maronna2017robust}. The $%
\rho _{i}$ functions are smooth, quadratic at the center and quickly reach their
maximum value of one at $c_{i}$.
\medskip 

\noindent {\bf Outliers detection:} \label{OutlierRecognition} 
%Robust clustering  algorithms are expected to flag outliers. 
We flag an observation as a potential outlier if it falls outside the region 
$\mathcal{E}= \cup_{k=1}^K \mathcal{E}_k$, where $\mathcal{E}_k$ is the
confidence ellipsoid determined by 
\begin{equation}
\mathcal{E}_k=\{ \mathbf{x} \in\mathbb{R}^{p}: d^2(\mathbf{x},\hat{%
\boldsymbol{\mu}}_{k},\hat{\Sigma}_{k}) \leq\chi^{2}_{p,1-\beta} \},
\label{ellipse1}
\end{equation}
where $d^2(\mathbf{x}, \boldsymbol{\mu},\Sigma)= (\mathbf{x} -\boldsymbol{%
\mu })^{t}\Sigma^{-1}(\mathbf{x} -\boldsymbol{\mu })$ is the squared
Mahalanobis distance and $(\hat{\boldsymbol{\mu}}_{k},\hat{\Sigma}_{k})$ are
robust estimators of location and scatter matrix. We use the generalized S-estimators defined in   
 \citet{GSEPaper} and implemented in  \citet{PACKAGEGSE}. 

\subsection{Improved K-Tau.}

\label{improvedKtau} We
add the following  step to 
account for possibly  different cluster shapes and sizes. First, for each cluster $C_{k}$, we compute new S-estimators estimators of location and scatter
denoted  $\widetilde{\boldsymbol{\mu }}_k$
 and $\widetilde{\Sigma }_k$, 
 respectively, using  the 
package GSE  cited before in Section \ref{OutlierRecognition}. 
%The
%new centers will be $\widetilde{\boldsymbol{\mu }}_{k},1\leq k\leq K$.
%Suppose that we have a cluster with center $\boldsymbol{\mu} $ and scatter
%matrix $\Sigma $. 
%A natural way 
%to  account for  possibly elliptical cluster shapes  
%to measure the distance from a point $\mathbf{x}$
%\end{equation*}
%It is observed that when $\Sigma =I$, then Mahalanobis - distance is the
%usual distance $||\mathbf{x}-\boldsymbol{\mu} ||$. 
Improved  clusters are defined
as
%\begin{equation*}
%G_{k}=\left\{ \mathbf{x}_{i}:d(\mathbf{x}_{i},\widetilde{\boldsymbol{\mu }}_{k},\widetilde{\Sigma }_{k})=
%\min_{1\leq j\leq K}d(\mathbf{x}_{i},%
%\widetilde{\boldsymbol{\mu }}_{j},\widetilde{\Sigma }_{j})\right\} .
%\end{equation*}
\begin{equation*}
G_{k}=\left\{ \mathbf{x}_{i}:
\min_{1\leq j\leq K}d(\mathbf{x}_{i},%
\widetilde{\boldsymbol{\mu }}_{j},\widetilde{\Sigma }_{j}) 
 = d(\mathbf{x}_{i},\widetilde{\boldsymbol{\mu }}_{k},
\widetilde{\Sigma }_{k})\right\},
\end{equation*}
where 
%\begin{equation*}
$d^{2}(\mathbf{x},\boldsymbol{\mu },\Sigma )$ is 
	 the square  Mahalanobis distance
	$(\mathbf{x}-\boldsymbol{\mu })^T\Sigma
^{-1}(\mathbf{x}-\boldsymbol{\mu })$.
Once the new cluster are computed, possible
outliers are flagged using the procedure described in Section \ref{OutlierRecognition}. 
\medskip

\noindent{\bf Example:}
We use the  dataset  M5-data from \citet{garcia2008general}. These data consists of 1800 points generated from three
bivariate normals  with means and covariance matrices given by
\begin{align*}
\boldsymbol{\mu }_{1}& =(0,8),\text{ }\boldsymbol{\mu }_{2}=(8,0),%
\boldsymbol{\mu }_{3}=(-8,8), \mbox{~~~~ and} \\
\\
\Sigma _{1}& =\left( 
\begin{array}{cc}
1 & 0 \\ 
0 & 1%
\end{array}%
\right) ,\Sigma _{2}=\left( 
\begin{array}{cc}
45 & 0 \\ 
0 & 30%
\end{array}%
\right) ,\Sigma _{3}=\left( 
\begin{array}{cc}
15 & -10 \\ 
-10 & 15%
\end{array}%
\right) .
\end{align*}%
There are also 200 outliers generated from an
uniform distribution in a rectangular box around the bulk of data. The regular points are 20\%
in  cluster 1, 40 \% in  cluster 2 and 40\% in cluster
3. Panel A in  Figure \ref{figM5data} shows  the data with 0.95-level ellipsoid
around each cluster. Two of the clusters show some 
overlap. Panel B  shows the true cluster in colors red,
green and blue. The outliers are shown in black . 
The results from  K-Tau and improved K-Tau are shown in   panel C and   D, respectively. 
Improved K-Tau performs better because the shape of the
clusters is far from  spherical.

\begin{figure}[tbp]
\begin{equation*}
\begin{array}{ccc}
\includegraphics[scale=0.45]{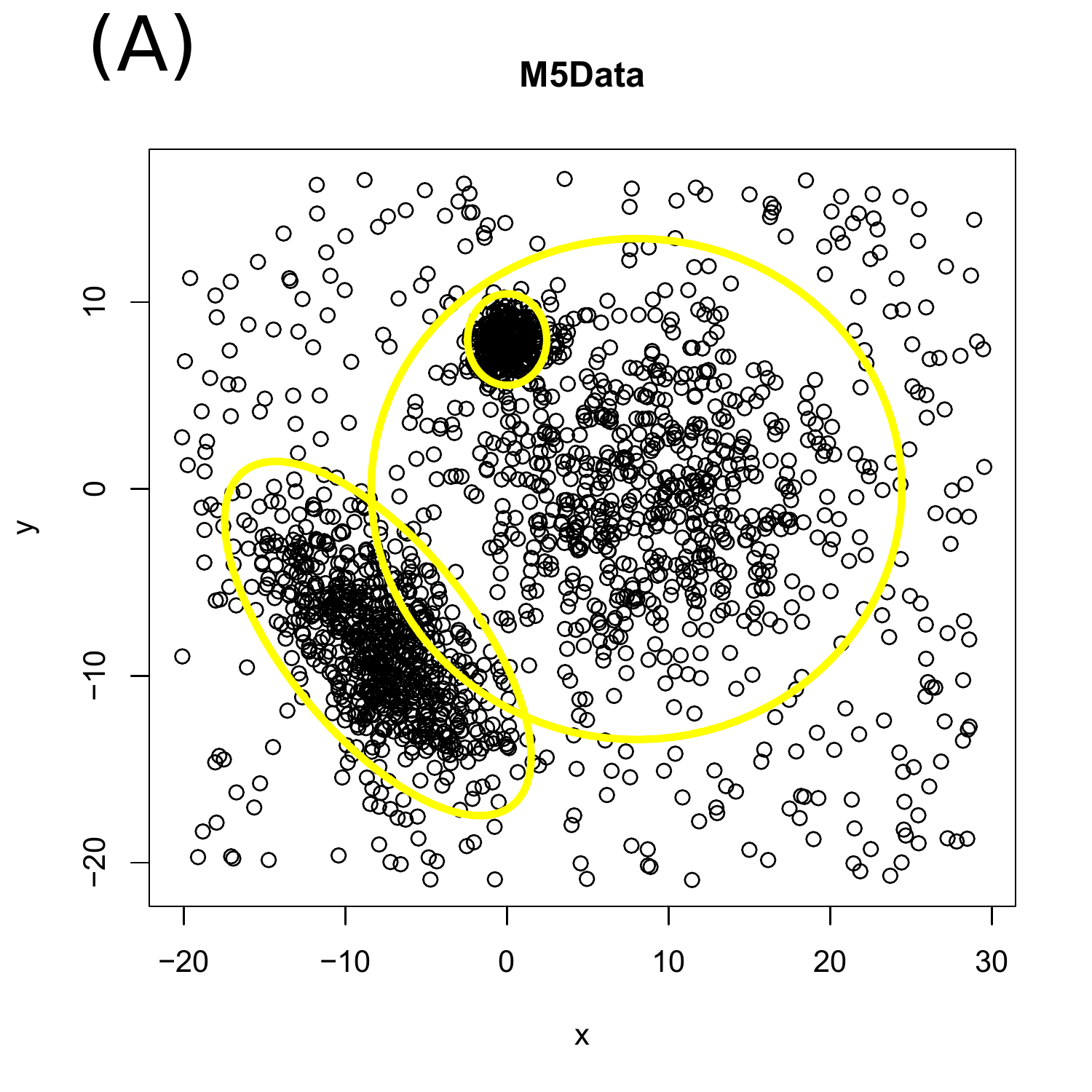} & %
\includegraphics[scale=0.45]{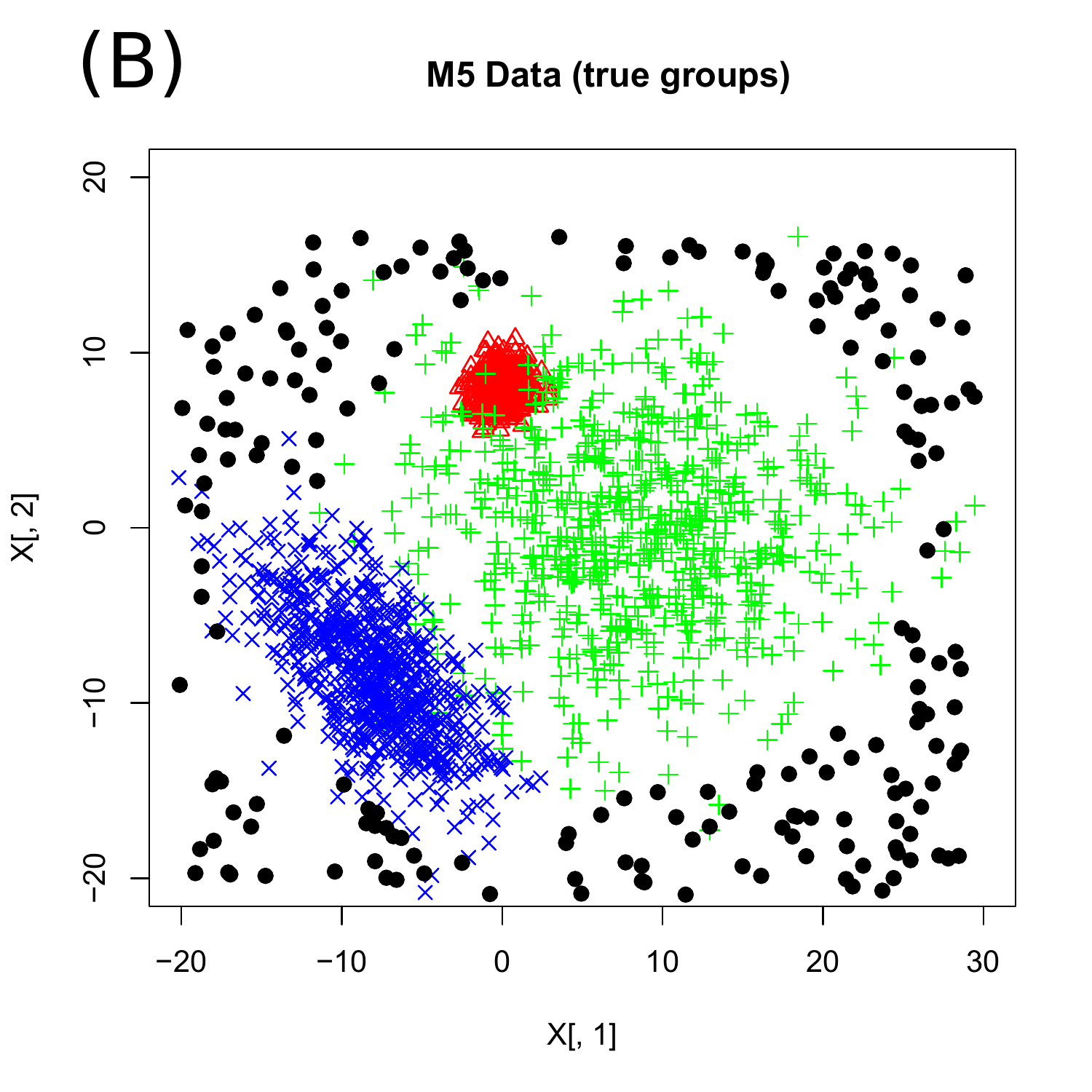} &  \\ 
\includegraphics[scale=0.45]{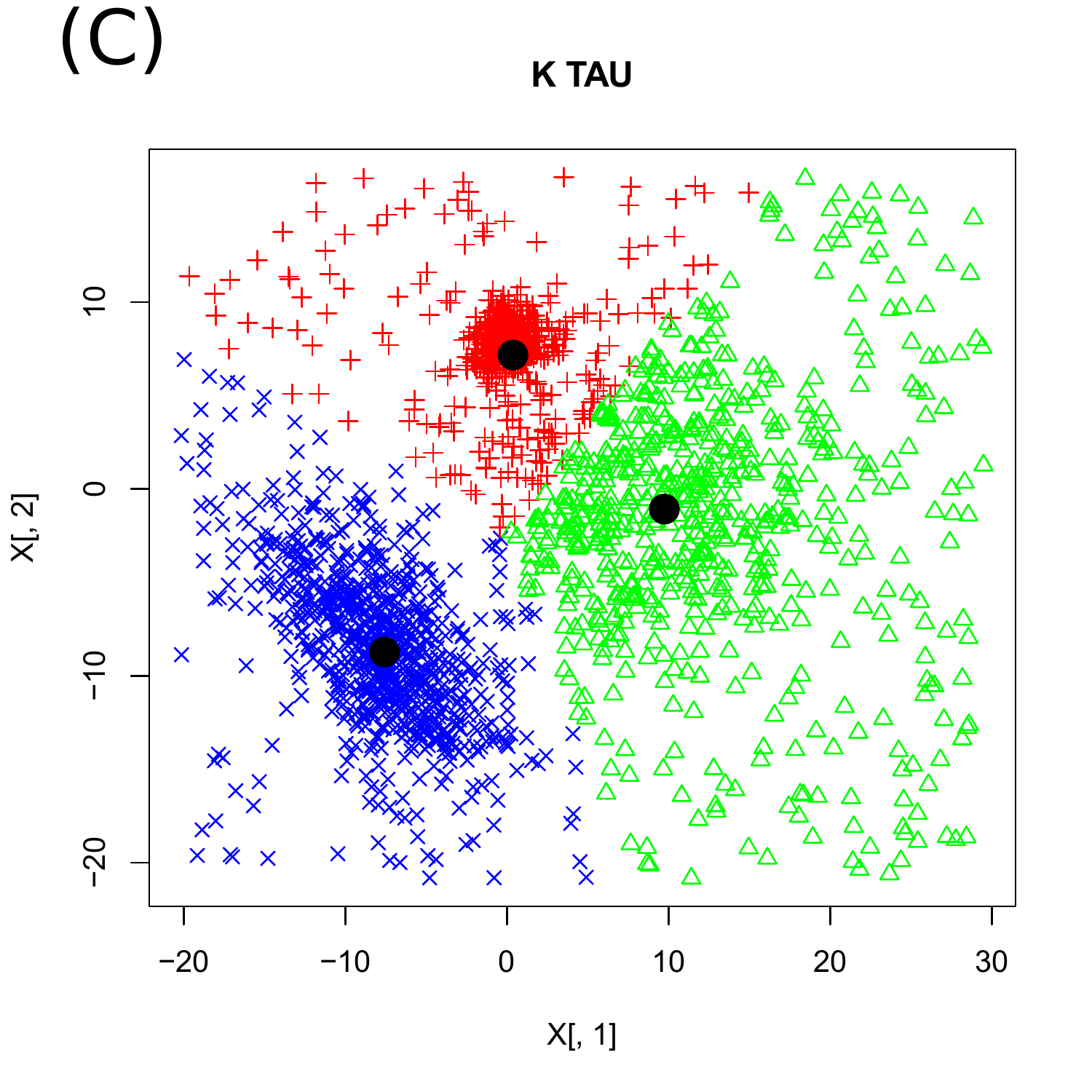} & %
\includegraphics[scale=0.45]{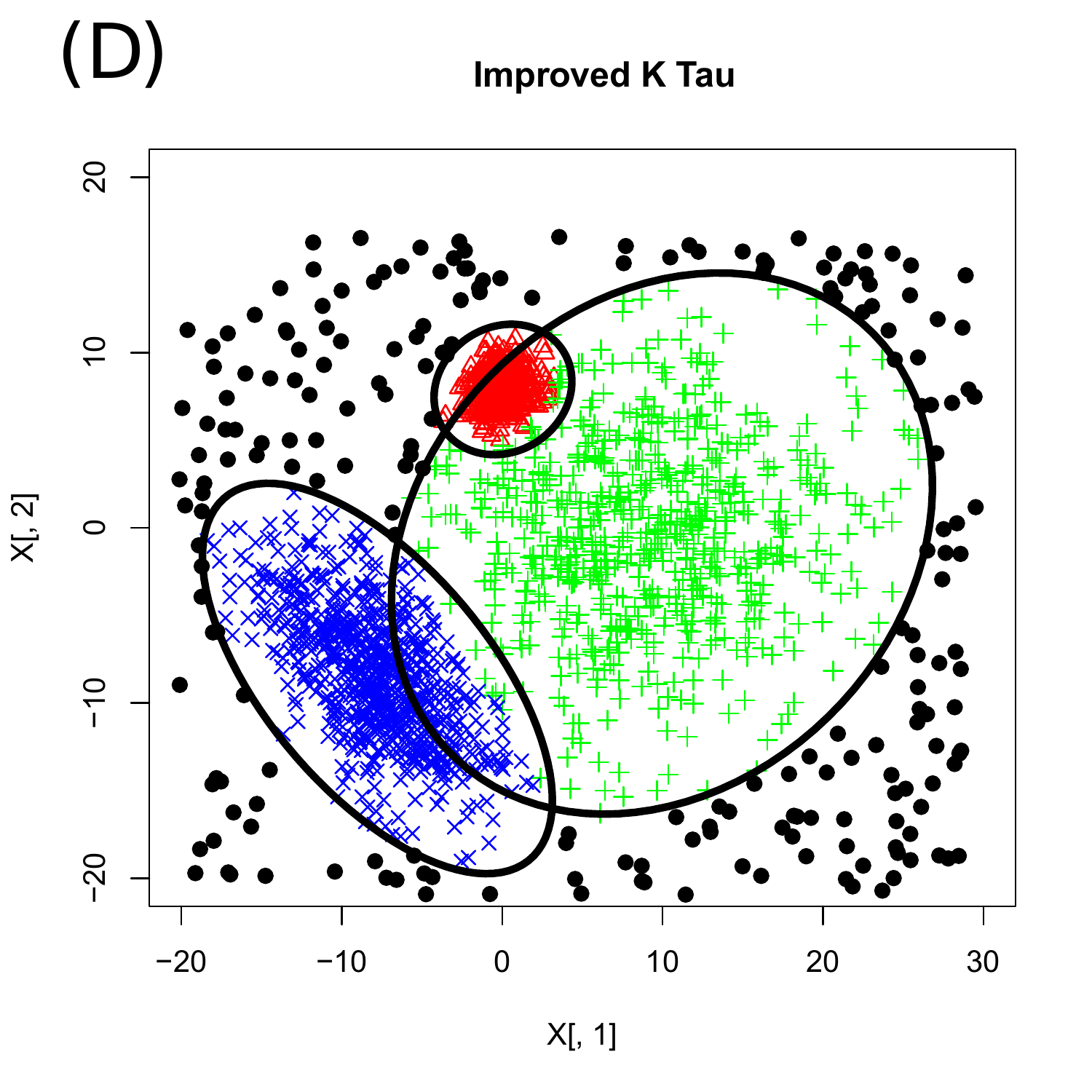} & 
\end{array}%
\end{equation*}%
\caption{(A) Data-set ``M5''.  Yellow solid lines show the $95\%$ confidence ellipsoids
 for the three clusters. (B) True clusters, with outliers shown in black. (C)
Results from K-Tau (D) Results from  improved K-Tau. }
\label{figM5data}
\end{figure}

\section{Consistency of K-Tau Centers}

%\subsection{ Population Definition of Tau Estimator}
%cambio \boldsymbol{\nu} a \mathcal{A} por consistencia con el material adicional 
Let $\mathbf{x}$ be a random vector in $\mathbb{R}^p$ with distribution
function $F$ and let $\mathcal{A}=\{\boldsymbol{\mu}_{1},...,\boldsymbol{\mu}%
_{K}\}$ be a set of $K$ distinct points in $R^p$. We define the $M$-scale
functional , $S_M(F)=S_M(F,\mathcal{A})$, as the solution in $s$ of the
equation 
\begin{equation*}
E_F\left(\rho_1\left(D(\mathbf{x},\mathcal{A})/s\right) \right) = 1/2,
\end{equation*}
with $D(\mathbf{x},\mathcal{A})$ defined by 
\begin{equation*}
D(\mathbf{x},\mathcal{A})=\min_{\boldsymbol{\mu} \mathcal{A} \in } \|\mathbf{%
x}-\boldsymbol{\mu}\|
\end{equation*}
The corresponding $\tau$-scale functional is 
\begin{equation*}
S_\tau(F,\mathcal{A}) = S_M(F,\mathcal{A})\left[E\left(\rho_1\left(D(\mathbf{%
x},\mathcal{A})/S_M(F,\mathcal{A})\right) \right) \right]^{1/2}.
\end{equation*}
%Let $\boldsymbol{\nu}_0$  be the minimum value of the Tau-scale functional, that is,

%we will use the following notation  
%\begin{equation*}
%\tau(\mathcal{A},F)=S_{\tau }(D(\mathbf{x}, 
%\boldsymbol{\mu}_1,\boldsymbol{\mu}_2, \dots, \boldsymbol{\mu}_K),F) \quad \mbox{y}
%\quad M(\mathcal{A},F)=S_{M}(D(\mathbf{x}, 
%\boldsymbol{\mu}_1,\boldsymbol{\mu}_2, \dots, \boldsymbol{\mu}_K),F).
%\end{equation*}
%We will assume that:
%
%\begin{itemize}
%\item[A1.]  There is $a<0.5$ such that the probability of any set of at most $k$ points doesn't exceed $a$.
%\item[A2.]
%Exists a unique $\mathcal{A}_0$ such that 
%$
%\min_{\mathcal{A}} S_\tau(F,\mathcal{A}) = S_\tau(F,\mathcal{A}_0). \label{defpob}
%$
%\end{itemize}

The consistency of K-TAU centers is established the following theorem.

\begin{theorem}
\label{consistencia} Let $\mathbf{x}_{1},...,\mathbf{x}_{n}$ be a random
sample from $F$ and let $\mathcal{A}_n$ be the corresponding set of
K-TAU-cluster centers. Suppose that

A1. There exists $a<0.5$ such that the probability of any set of at most $k$
points doesn't exceed $a$.

A2. There exists a unique $\mathcal{A}_0$ such that $\min_{\mathcal{A}}
S_\tau(F,\mathcal{A}) = S_\tau(F,\mathcal{A}_0). \label{defpob} $

Then $\mathcal{A}_{n}\rightarrow ^{H}\mathcal{A}_{0}$ a.s., where .$%
\rightarrow ^{H}$ denotes convergence in the Haudorff metric.
\end{theorem}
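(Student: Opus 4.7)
The plan is to follow the classical Huber--Pollard scheme adapted to $\tau$-scales: establish uniform convergence of the empirical $\tau$-scale functional to its population counterpart over compact sets of configurations, show that the minimizers $\mathcal{A}_n$ cannot escape to infinity nor collapse, and then use uniqueness of $\mathcal{A}_0$ (A2) to promote convergence of values into convergence of argmins in the Hausdorff metric.

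First I would set up continuity and uniform convergence of the $M$-scale functional. Because $\rho_1$ is bounded, non-decreasing and continuous, and because $\mathcal{A}\mapsto D(\mathbf{x},\mathcal{A})$ is continuous, the map $(F,\mathcal{A},s)\mapsto E_F\rho_1(D(\mathbf{x},\mathcal{A})/s)$ is jointly continuous on $F$ in the weak topology and on $(\mathcal{A},s)$ in the usual topology, uniformly over $\mathcal{A}$ in any compact subset of $(\mathbb{R}^p)^K$ and $s$ in any compact subset of $(0,\infty)$. An implicit function argument then gives that $S_M(F,\mathcal{A})$ is continuous in $(F,\mathcal{A})$ on this domain. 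By the Glivenko--Cantelli theorem applied to the class $\{\mathbf{x}\mapsto \rho_1(D(\mathbf{x},\mathcal{A})/s):\mathcal{A}\in K, s\in[s_0,s_1]\}$, which has a square-integrable envelope (in fact a bounded one since $\rho_1\le 1$) and is a VC-type class of the translates-and-scales sort, one obtains $\sup_{\mathcal{A}\in K}|S_M(F_n,\mathcal{A})-S_M(F,\mathcal{A})|\to 0$ a.s. The same argument gives uniform convergence of $E_{F_n}\rho_2(D(\mathbf{x},\mathcal{A})/S_M(F_n,\mathcal{A}))$ to its population counterpart, so $S_\tau(F_n,\mathcal{A})\to S_\tau(F,\mathcal{A})$ uniformly on compact sets.

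Next I would prove two qualitative properties that pin down where the minimizer lies. The lower bound: assumption A1 says that no $K$-point set carries $F$-mass exceeding $a<1/2$, so for any $\mathcal{A}$ and any $\varepsilon>0$, $P(D(\mathbf{x},\mathcal{A})\ge \varepsilon)\ge 1-a-o(1)$ as $\varepsilon\downarrow 0$; since $\rho_1(0)=0$ and $\rho_1$ is non-decreasing, this forces $S_M(F,\mathcal{A})\ge s_*>0$ uniformly in $\mathcal{A}$, hence $S_\tau(F,\mathcal{A})\ge s_*$ as well; the same inequality holds in probability for the empirical version because $F_n$ puts mass at most $a+o(1)$ a.s. on any $K$-point set, and we may strengthen this to a uniform-in-$\mathcal{A}$ statement via A1 and Glivenko--Cantelli applied to $\{\mathbf 1_{\|\mathbf{x}-\boldsymbol\mu\|<\varepsilon}\}$. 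The coercivity: if one $\boldsymbol\mu_k\to\infty$, the remaining $K-1$ centers carry mass at most $a$ by A1, so at least a $1-a$ fraction of $\mathbf{x}$ have $D(\mathbf{x},\mathcal{A})\to\infty$, forcing $S_M(F,\mathcal{A})\to\infty$ and therefore $S_\tau(F,\mathcal{A})\to\infty$. Together these bounds confine every near-minimizer of $S_\tau(F,\cdot)$, and by uniform convergence also of $S_\tau(F_n,\cdot)$ for $n$ large, to a common compact set $K^*\subset (\mathbb{R}^p)^K$.

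Finally I would close the argument by the standard argmin-continuity reasoning. Fix any $\eta>0$ and let $V_\eta$ be the open $\eta$-neighbourhood of $\mathcal{A}_0$ in the Hausdorff metric intersected with $K^*$. By A2 and continuity of $S_\tau(F,\cdot)$ on the compact set $K^*\setminus V_\eta$, the minimum of $S_\tau(F,\cdot)$ on that set strictly exceeds $S_\tau(F,\mathcal{A}_0)$, say by $3\delta>0$. Uniform convergence on $K^*$ gives, almost surely for $n$ large, $\sup_{\mathcal{A}\in K^*}|S_\tau(F_n,\mathcal{A})-S_\tau(F,\mathcal{A})|<\delta$, and we can evaluate at $\mathcal{A}_0$ to see that $\min_{\mathcal{A}\in K^*}S_\tau(F_n,\mathcal{A})<S_\tau(F,\mathcal{A}_0)+\delta$; hence the empirical minimizer $\mathcal{A}_n$ cannot lie in $K^*\setminus V_\eta$, which gives $\mathcal{A}_n\in V_\eta$ eventually, i.e.\ $\mathcal{A}_n\to^H \mathcal{A}_0$ a.s.

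The main obstacle is the uniform convergence step, because the $\tau$-scale is a nested functional: the scale $s$ used inside $\rho_2$ is itself the implicit solution $S_M(F_n,\mathcal{A})$, which depends on $\mathcal{A}$. Propagating uniform convergence through the implicit equation requires the two-sided bound $0<s_*\le S_M(F_n,\mathcal{A})\le S^*<\infty$ on the relevant compact set of configurations (obtained from A1 and coercivity) together with the strict monotonicity of $s\mapsto E_F\rho_1(D(\mathbf{x},\mathcal{A})/s)$ at the solution; once these are in hand, the rest of the proof is essentially bookkeeping.
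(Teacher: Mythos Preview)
Your overall architecture --- uniform convergence of $S_\tau(F_n,\cdot)$ to $S_\tau(F,\cdot)$ on compacta, confinement of the empirical minimizers to a fixed compact set, then argmin continuity via uniqueness --- is exactly the paper's scheme, and your treatment of the first and third steps is essentially correct and matches the paper (Lemmas~\ref{ConvUnifDesacop}, \ref{ConvUnifacop} and \ref{funcCompactas}).

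The gap is in your coercivity step. You write: ``if one $\boldsymbol\mu_k\to\infty$, the remaining $K-1$ centers carry mass at most $a$ by A1, so at least a $1-a$ fraction of $\mathbf{x}$ have $D(\mathbf{x},\mathcal{A})\to\infty$, forcing $S_M(F,\mathcal{A})\to\infty$.'' This is false: $D(\mathbf{x},\mathcal{A})=\min_j\|\mathbf{x}-\boldsymbol\mu_j\|$ is a minimum over \emph{all} centers, so if the other $K-1$ centers stay bounded, $D(\mathbf{x},\mathcal{A})$ stays bounded for every $\mathbf{x}$ in a compact set. What actually happens when one center escapes is that $S_\tau(F,\mathcal{A})$ converges to the $\tau$-scale computed with the remaining $K-1$ centers, which is finite. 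So $S_\tau(F,\cdot)$ is not coercive in the usual sense, and your compactness argument does not go through.

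The paper closes this gap differently (Lemma~\ref{lemaCompacto1}). From A2 it extracts the strict inequality $\nu_0^{(K-1)}:=\inf_{\#\mathcal{A}\le K-1}S_\tau(F,\mathcal{A})>\nu_0$: if equality held, a $(K-1)$-point minimizer would exist and one could manufacture distinct $K$-point minimizers, contradicting uniqueness. Then, if some center of $\mathcal{A}_n$ escaped, the empirical $\tau$-scale of $\mathcal{A}_n$ would be close to that of $\mathcal{A}_n\cap B(R)$ (a configuration with at most $K-1$ points), hence (via uniform convergence on the compact of $(K-1)$-configurations) close to at least $\nu_0^{(K-1)}$; but $\tau(\mathcal{A}_n,F_n)\le\tau(\mathcal{A}_0,F_n)\to\nu_0<\nu_0^{(K-1)}$, a contradiction. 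Making ``close to that of $\mathcal{A}_n\cap B(R)$'' rigorous for the \emph{empirical} $M$- and $\tau$-scales is itself nontrivial because the scale is implicitly defined; the paper devotes Lemmas~\ref{propiedadderho}--\ref{lemmaLimiteDoble} to this, showing that dropping centers outside $B(R)$ changes $M(\mathcal{A}_n,F_n)$ and $\tau(\mathcal{A}_n,F_n)$ by less than any prescribed $\varepsilon$ once $R$ is large, uniformly in $n$. Your proposal does not supply this ingredient, and without it the confinement to a compact $K^*$ --- on which the rest of your argument relies --- is not established.
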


%
%$\mathcal{A}_{0}$ be the unique $\tau$-optimal centers 
%with a number of  elements less than or equal $K$, corresponding to
%the distribution $F$. Take a random sample of size $n$, and let $\mathcal{A}_{n}$ be the corresponding $\tau-${o}ptimal centers based on the sample. Then, 

Formal definition of Hausdorff metric can be found for example in reference 
\citet{Munkres}, colloquially let $\mathcal{A}, \mathcal{B}$ be subsets of $%
\mathbb{R}^p$, if Hausdorff distance between $\mathcal{A}$ and $\mathcal{B}$
is less than $\delta$ then for each $\mathbf{a} \in \mathcal{A}$, there
exists $\mathbf{b}=\mathbf{b}(\mathbf{a}) \in \mathcal{B}$ satisfying $\| 
\mathbf{a} -\mathbf{b} \| < \delta$ and reciprocally for each $\mathbf{b}
\in \mathcal{B}$ there exists $\mathbf{a}= \mathbf{a}(\mathbf{b}) \in 
\mathcal{A}$ such that $\|\mathbf{b} - \mathbf{a} \| < \delta$. Theorem \ref%
{consistencia} is proved in the Appendix II. %\ref{consistency}

\section{Simulation study}

We conducted a simulation study to compare improved K-Tau with K means and
 trimmed K means. More precisely, the compared procedures are: 
\bigskip

\noindent  \textbf{K means}. The classical K means introduced by \citet%
{hartigan1979algorithm}, using the default parameters in the function 
\texttt{kmeans} \color{black} from the \texttt{R} package \texttt{stats}.
\bigskip 

\noindent  \textbf{TK means}. The robust TK means introduced by \citet%
{EspaniolesTKMeans}, using the function \texttt{tkmeans} from the \texttt{R%
} package \texttt{tclust} by \citet{trimclustPackage}. The trimming constant $%
\alpha $ is set to $0.05,0.1,0.2\text{ and }0.3$ and the remaining tuning
parameters are set to their default values. Outliers are flagged using the mechanism
presented in Section \ref{OutlierRecognition}  with $\beta =0.01$. The confidence ellipsoid $%
\mathcal{E}_{k}$ has center and scatter matrix given by the classical mean
and covariance matrix of the points assigned to the k$^{th}$ cluster.
\bigskip 

\noindent  \textbf{IK-Tau}.The procedure proposed in Section 2.5 using
the function {\texttt{improvedktaucenters}} in the \texttt{R} package \texttt{%
ktaucenters}.  Outliers are flagged using the mechanism
presented in Section \ref{OutlierRecognition} with $\beta =0.01$. The
confidence ellipsoid $\mathcal{E}_{k}$ were obtained by applying the
function \texttt{GSE()} from the \texttt{R} package GSE \citet%
{PACKAGEGSE} to each group separately. 

\subsection*{Models used in the simulation}

Several scenarios, dimension $p=3, 5, 7, 10$ and total
number of clusters $K=3, 5, 7, 10$ are considered. Each cluster has
size $n_{k}=\theta \min \{p,4\}$, where $\theta $ 
%has uniform distribution on the set $\{25,...,50\}$. 
can take the value 25 or 50 with equal probability.

The observations (for each cluster and replication) are generated from
multivariate normal distributions with mean 
%$N(\boldsymbol{\mu}_k,\Sigma_k)$ with
\begin{equation*}
\boldsymbol{\mu}_k= \left\{ 
\begin{array}{crr}
20(-\frac{K}{2}+k)(1,1,1,\dots,1 ) & & %
\mbox{ if K is even} \\ 
&  &  \\ 
20(-\frac{K -1}{2}+k)(1,1,1,\dots,1) &  & %
\mbox{if  K is  odd}, \\ 
&  & 
\end{array}
\right.
\end{equation*}
 where  $k =1,2,\dots,K$, and covariance matrix $\Sigma_k$ generated as follows. First we generate a $%
p\times p$ matrix $U$ with $U_{ij} \sim \mathcal{U}(-1,1)$ and set ${U}{U}^t=%
{P} {\Lambda}{P}^t$, where $P$ is orthogonal. Second we create a diagonal
matrix $D$ with $D_{ii} \sim \mathcal{U}(1,10)$. Finally we set $\Sigma_{k}
= {P}{D}{P}^t.$

A proportion $\alpha=0.05$ of outliers are added to the sample. The outliers
are generated from a uniform distribution on a region obtained as follows.
We first expand by a factor of two the smallest box that contains the clean
data and then remove the points falling inside the  $99\%$ probability ellipsoids of the distributions
used to generate the clusters.

\subsection*{Performance measure}

Suppose that a clustering procedure is performed on a set of $n$
observations with known cluster membership. To evaluate the performance of the clustering procedure, for
each pair of observations $(\mathbf{x}_i, \mathbf{x}_j)$ with $1\leq i<j
\leq n$ we set $I(i,j)=1$ if observations $\mathbf{x}_i$ and $\mathbf{x}_j$
are both together or apart in the two  partitions. Otherwise, we
set $I(i,j)=0$. The \textit{Classification Error Rate} (CER) is then defined
by 
\begin{equation*}
\text{CER}=1 - 2\sum_{i<j}I(i,j)/[n(n-1)],
\end{equation*}
%\color{red}  {\bf JD: please, give a reference for CER}\color{black} 
which is equal to one minus the Rand index
{proposed by \citet{CERREF}.}\color{black}
\noindent Robust clustering procedures are often used to flag outliers. Hence, in our
simulation study we collect the flagged observations in an extra cluster and
report the CER performance on the resulting $K+1$ clusters. % 
%In a simulation context true observation are known, so we take MCR, and CER index over the true observations only.
%Also, we report others performance measures that take into account how outliers are recognized properly.
%We consider Specificity and Sensibility. Sensitivity is the proportion of
%actual outliers that are correctly flagged as such, while specificity is the
%proportion of non-outlying observations correctly identified as not atypical.
%Then, a good robust clustering method should have low values of MCR and CER
%(near zero) and high Specificity and Sensibility values (near one).
%
%
%
%

\begin{figure}[!tbp]
\centering
\includegraphics[scale=0.6]{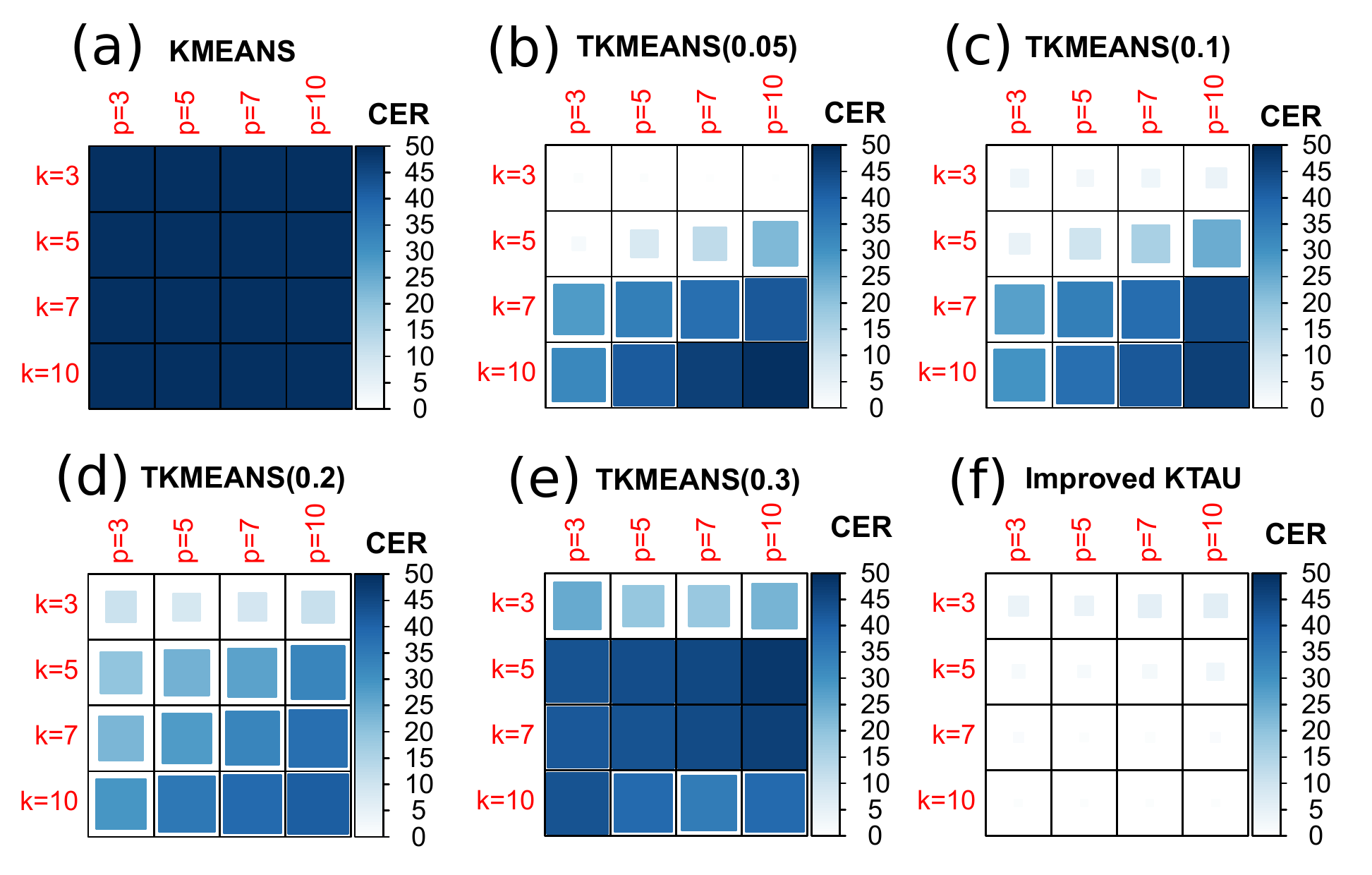}
\caption{Average Values of CER (multiplied by 1000) from different
scenarios. A lighter cell color indicates a better performance. }
\label{fig:resultadosSimulacionCER}
\end{figure}

\subsection*{Simulation results}

The results of our extensive simulation study are concisely presented on
Figure \ref{fig:resultadosSimulacionCER}. In each cell a color indicates the
average CER value for  the scenario identified by the number of clusters $k$ and the
dimension $p$. A darker color corresponds to a higher CER value.
The color-bar at the right indicates the CER-range values.

%Further details are given in the
%Appendix where we show some figures summarizing the \#\# definir sensitivity
%and specificity o sacarlo \#\# *** saque sensitivity *** of the considered methods, regarding their ability to
%flag the outliers.
%\noindent \color{red}{\bf JD: In Figure 3, please, label the panels (a), (b), (c), etc.  } \color{black}
%\color{blue} I did it. 
%\color{black} 
 
  Panel (a) in Figure \ref{fig:resultadosSimulacionCER} shows
that, as expected, K means cannot handle the outliers in the data and gives
overall very poor results. Panels (b)-(e) show the results for TK means for
different values of the trimming constant $\alpha =0.05,0.10,0.20\text{ and }%
0.30$. It is clear from these pictures that the best performance of TK means
correspond to the \textquotedblleft oracle trimming level\textquotedblright\  for this method ( $\alpha =0.05$).
Table \ref{1000*CER} contains the complete numerical results.

\begin{table}[ht]
\centering
\begin{tabular}{rrrrrrrr}
  \hline
%$K$ & $p$ & K means & TK means & TK means & TK means & TK means& IK-Tau \\ 
$K$ & $p$ & K means & \multicolumn{4}{c}{TK Means} & IK-Tau \\ 
 & &  & $\alpha = 0.05$ & $\alpha = 0.10$ & $\alpha = 0.20$ & $\alpha = 0.30$ & \\
  \hline
  3 & 3 & 224.2 & 0.7 & 3.3 & 10.5 & 25.4 & 4.2 \\ 
  3 & 5 & 231.0 & 0.5 & 2.9 & 8.5 & 19.2 & 4.0 \\ 
  3 & 7 & 228.0 & 0.4 & 3.3 & 9.1 & 18.8 & 5.6 \\ 
  3 & 10 & 228.4 & 0.3 & 4.4 & 11.4 & 23.2 & 6.1 \\ 
  5 & 3 & 99.8 & 1.9 & 4.5 & 19.7 & 43.2 & 1.9 \\ 
  5 & 5 & 97.2 & 8.2 & 10.3 & 23.6 & 44.2 & 1.6 \\ 
  5 & 7 & 97.6 & 12.5 & 16.2 & 26.9 & 45.3 & 2.0 \\ 
  5 & 10 & 100.9 & 22.3 & 24.7 & 32.5 & 48.2 & 3.1 \\ 
  7 & 3 & 70.6 & 28.1 & 27.1 & 22.7 & 42.3 & 1.1 \\ 
  7 & 5 & 69.5 & 34.4 & 34.0 & 28.2 & 43.4 & 0.8 \\ 
  7 & 7 & 68.9 & 37.7 & 38.1 & 32.8 & 44.8 & 0.9 \\ 
  7 & 10 & 73.4 & 42.3 & 44.7 & 37.8 & 46.6 & 1.1 \\ 
  10 & 3 & 52.4 & 32.1 & 29.6 & 29.0 & 43.1 & 0.6 \\ 
  10 & 5 & 50.9 & 41.9 & 37.6 & 35.8 & 38.5 & 0.6 \\ 
  10 & 7 & 53.0 & 46.5 & 42.1 & 38.6 & 34.9 & 0.6 \\ 
  10 & 10 & 52.4 & 50.9 & 46.7 & 41.2 & 38.9 & 0.6 \\ 
   \hline
\end{tabular}
\caption{Simulation results (1000$\times$CER) for the considered clustering procedures.}
\label{1000*CER}
%\end{centering} 
\end{table}

%K means cannot handle the outliers and shows a very poor performance. TK means perform well when the number of clusters and the dimension are small, provided the trimming parameter is not too far from the  fraction of outliers in the data. For %higher  dimensions and  larger number of clusters ($K>3$)  the TK means performance deteriorates and IK-Tau is uniformly better. 

\section{Application}

\begin{figure}[!tbp]
\centering
\includegraphics[scale=0.40]{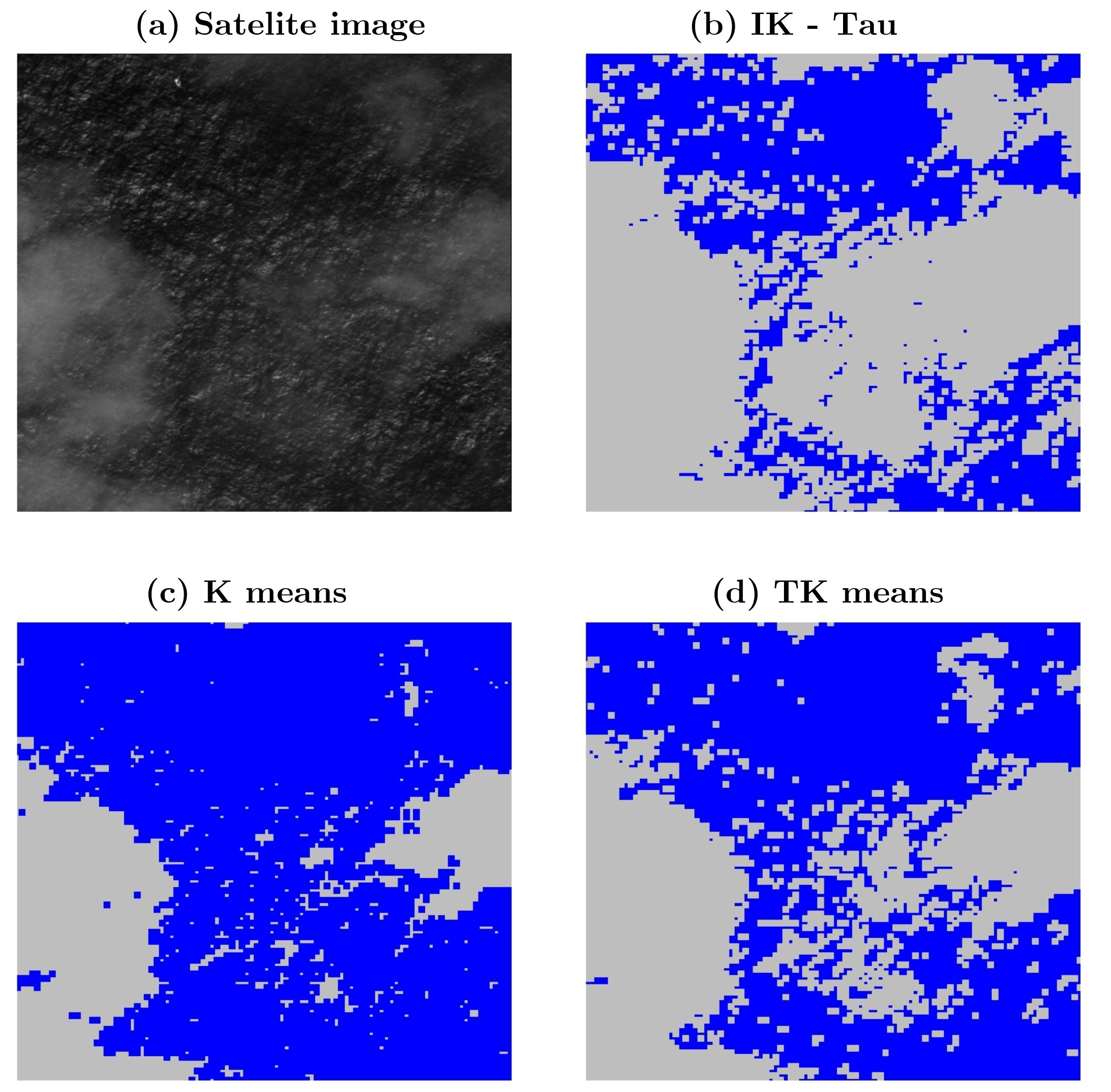}
\caption{Panel (a) gives the riginal Image. Panels (b), (c) and (d) show the image segmentation 
 produced by K means, TK means and IK-Tau.
The 10000 square cells are colored blue or gray according to their assignment to the
water or the clouds  clusters, respectively.}
\label{fig:cuatroJuntosAguaNube}
\end{figure}

\begin{figure}[!tbp]
\centering
\includegraphics[scale=0.5]{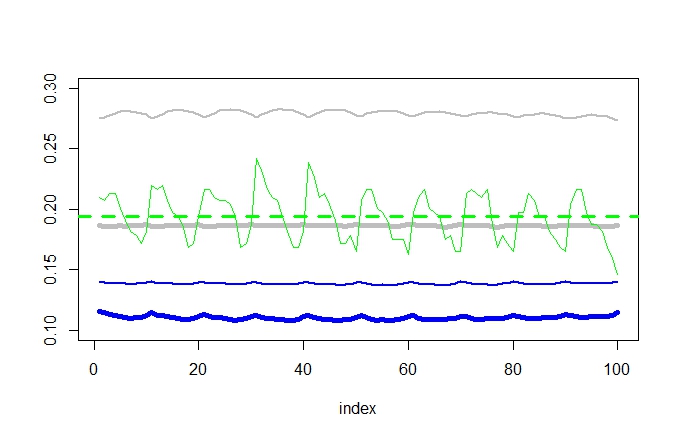}
\caption{Output Clusters Centers for IK-Tau (Wide line) and K means (Narrow
Line). Colors blue and gray correspond to the water and the cloud cluster
centers, respectively. A ramdomly selected observation from a low cloud part of the image is shown in  green. The  constant  green line gives the median value for this observation.}
\label{fig:cuatroJuntosAguaNubeCentros}
\end{figure}

\subsection*{Application 1: Cluster analysis of a satellite image}

Automatic unsupervised segmentation of satellite images is an important
problem in computer vision and automatic anomaly detection.
We analyze a satellite image 
covering $500m^{2}$ of the ocean (image provided by INFOSAT).  Each pixel conveys a gray-level intensity scaled between zero and one. Naturally, the image mostly consists  of two components: clouds and water. For the analysis, the high resolution image ($1$
pixel $=0.02m^{2}$) is divided into 10000  cells, each packing $10\times 10$ pixels. Hence, our dataset consists of 10000 points in the
one--hundred dimensional space $[0,1]^{100}$. Our goal is to
segment the image into two clusters (the \textit{cloud--cluster} and the 
\textit{water--cluster}) using IK-Tau. For comparison purposes we also apply 
 K Means and TK means. Figure \ref%
{fig:cuatroJuntosAguaNube} shows the original image and some clustering
results. Blue--colored cells correspond to water and gray--colored cells
correspond to clouds. 
%The results from Tk-means are not shown in Figure \ref%
%{fig:cuatroJuntosAguaNube} because they are very similar to those from K-TAU
%centers. % 
%Because there is no much difference between SKMeans and TauKMeans, the output of  SKMeans is not shown. 

The high altitude clouds (the brightest areas in the image) are
well recognized by all the considered procedures. On the other hand, due to
its lack of robustness, K means has difficulty segmenting the low--clouds
areas which bear relatively low gray-level intensities. This problem is
mainly caused by the presence of a patch of very high altitude clouds
with a very high gray--level intensity level. These outliers brings up
the intensity level of the K means clouds--cluster center. Figure \ref%
{fig:cuatroJuntosAguaNubeCentros} plots the index versus the gray--level
intensity for the clusters centers of K means, K-TAU, and a randomly chosen
low--cloud observation. The thick lines correspond to K-TAU, and the thin
lines correspond to K means. Low intensity clouds (a randomly chosen low
cloud is depicted by the green thick line in Figure \ref%
{fig:cuatroJuntosAguaNubeCentros} lie closer to the K means water--cluster
center and get mistakenly assigned to the water--cluster. On the other hand,
the considered robust method are not affected by the outliers and are
capable to correctly segment the very low clouds.

\subsection*{Application 2: Cluster analysis of high resolution picture}

\begin{figure}[!tbp]
\centering
%\hspace{1.3cm}%
\includegraphics[scale=0.60]{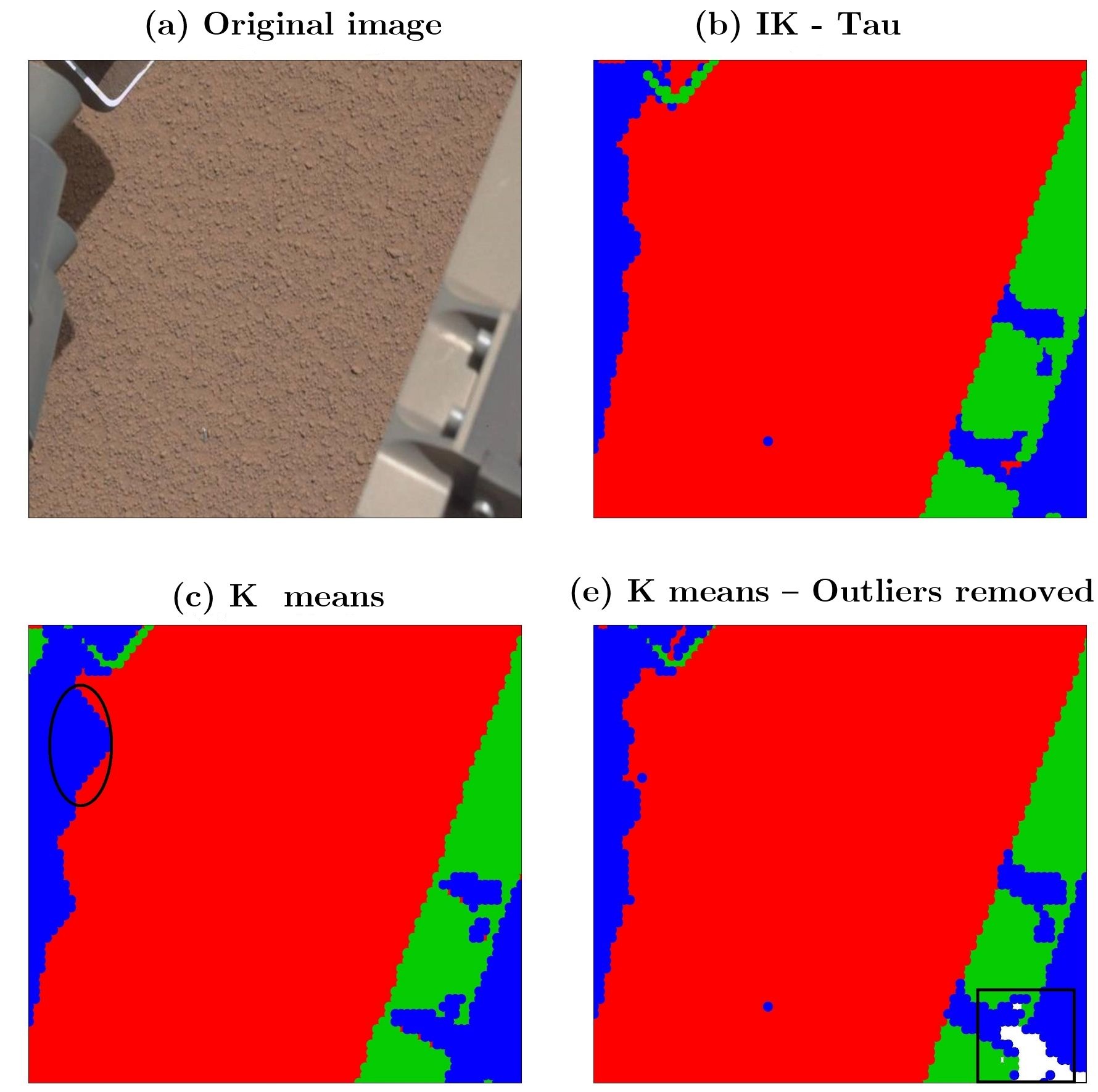}
\caption{ Original Image (panel (a)) and segmentation results obtained from IK-Tau, K means and Kmeans after removing outliers  (panels (b), (c) and (d)).  
Square cells are colored  red, green or blue according to their assignment
to the SND, SHM or OPM clusters. The black circle in panel (c) pinpoints the sand shadow region in the original image. The blank area in panel (d) are to the removed outlying cells.}
\label{fig:cuatroJuntosTornillo}
\end{figure}

\begin{figure}[!tbp]
\centering
\includegraphics[scale=0.4]{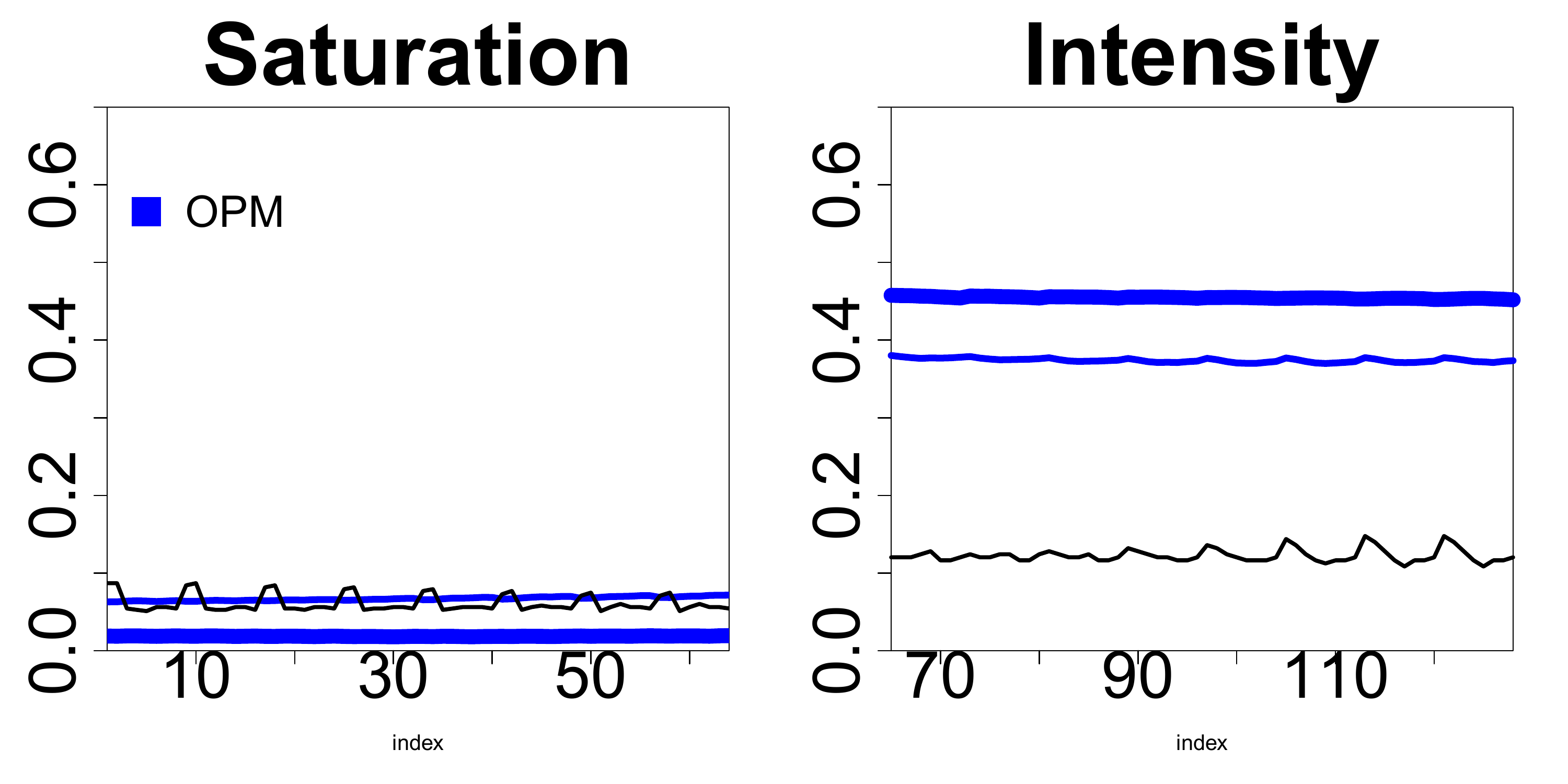} 
%\caption{Output for improved K-TAU (Wide line) and K means (Narrow Line)
%algorithms, Clusters Centers corresponding to Sand (SND), shining Metal
%(SHM) and Opaque Metal (OPM) are shown in red, green or blue respectively.
%Sand-cluster center from the 2 algorithms are very similar and are
%superposed in both plots. An outlier observation is illustrated in thin
%black line.}

\caption{The Opaque Metal cluster Centers corresponding from the improved K-TAU (Wide line) and K means (Narrow Line).  An outlying observation is depicted by the thin
black line.}
\label{fig:centrosRobustosTornillo} 
\end{figure}

In this application we consider is a high resolution colored picture of $495
\times 664$ pixels covering an area of $4.0\times 7.5$ $cm^2$, provided
by NASA \citet{NASA} and displayed in  Figure \ref{fig:cuatroJuntosTornillo} (a). This
image was taken by the NASA's Mars rover Curiosity, and shows the sand soil and metal from the  Mars rover itself. Each pixel has assigned three numbers representing the intensity levels of the R, G and B channels, scaled between $0$ and $1$. For example $(0,0,0)$ represents black, $(1,1,1)$ white, $%
(0,1,0)$ green, etc. Because the $(R,G,B)$ variables are usually highly
correlated, a common practice in image segmentation is to transform
the $(R,G,B)$ values into the \textit{saturation} (S) and \textit{%
intensity} (I) values, where  $I= (R + G + B)/3$ and $S = 1 - \min(R, G,
B)/I$ with $S = 0 $ when $I=0$. See Chen et al, 2001 \cite{HSI}, for
further details.

The pixels in the original image are arranged into  $8\times 8$ square cells. Since,  
each pixel has two numbers $(S,I)$, each cell (observation)  represents a point in  in the one
hundred and twenty eight dimensional space $[0,1]^{128}$. Our objective is to segment the
image into three clusters, namely the \textit{shinning metal} (SHM),
the \textit{opaque metal} (OPM) and the \textit{sand} (SND) clusters. As
in the previous application, we use the three clustering algorithms:  K means,
TK means and IK-Tau  with $K=3$. Since the two robust procedures give
similar results, only those of IK-Tau are displayed in Figure  
\ref{fig:cuatroJuntosTornillo}. The nonrobust K means
is affected by the presence a small fraction of very dark metal cells in the
 lower right corner of Figure \ref{fig:cuatroJuntosTornillo} (a). As shown
in Figure \ref{fig:centrosRobustosTornillo}, the dark metal cells have
very low I level, and very high S level. These outliers bring
up the S level and down the I level of the  OPM cluster center in the case of K means. We notice
that the outliers represent $1.5\%$ of the image and $15\%$ of the
OPM cluster. As a consequence the shaded sand region enclosed by the
ellipse in Figure \ref{fig:cuatroJuntosTornillo} (c) are incorrectly assigned  to the OPM cluster by K means. To
validate this reasoning we recompute the K means clusters after removing
the aforementioned outliers (the cells delimited by the rectangle in
the lower right corner of panel (d) of Figure \ref{fig:cuatroJuntosTornillo}). Now the K means results are consistent with those of the robust clustering
procedures.

\subsection*{ Searching for lost objects}

We  now show examples of  how large image data in conjunction with robust
cluster analysis could be used in computer-aided searches for lost objects. Our
first example continues from Application  1. In this case, the lost object is
the Tunante II, a 12.5 meter-long yacht with four crew on board lost during
a storm off the coast of Brazil. Note that the boat is made of a material
mostly absent in the satellite image. Our second example continues from
Application 2 and the lost object is a small metal screw detached from the
Curiosity Rover, during its exploration of Mars. In this case the lost object is made of a material (metal) well
represented in the image making its automatic finding more difficult.

\subsubsection*{Looking for the lost boat}

Naturally we hope that the small yacht reflects the signal differently from
the water and clouds in the image. Therefore, the cells containing the boat
should appear as outliers in the clustering results. The cell size in
Application 1 has been intendedly chosen so that the boat is fully contained by
at most four neighboring cells. 
%Therefore candidate locations for the Tunante are isolated cluster outliers or small groups of cluster outliers. 
Using the results from Application 1, we identify the most extreme outlier, that is,
the cell lying further away from its cluster center (see Figure \ref%
{fig:mainResult}). Proceeding in this way, all the considered robust and nonrobust cluster algorithms
 succeed in locating the yacht  Tunante. 
%As it can see in  the boat is found by this procedure, there the original image can be observed and the  cell corresponding to the furthest observation is shown in yellow square. 

\subsubsection*{Searching for the lost screw}

In this example we search for a small screw detached form the land rove Mars, 
using the cluster results from Application  2.
The screw is made of a material -- metal -- that makes up $25\%$ of the
image. The image data consists of an $n\times p$ data matrix $X$ with $%
n=5063 $ rows ( each row corresponding to a cell of $8\times 8$ pixels) and $%
p=128$ columns (each column corresponding to the saturation and the
intensity values for the $64$ pixels in each cell). The clustering results
from Application 2 yielded three clusters of sizes $n_{1}=3918$, $n_{2}=617$ and $%
n_{3}=528$ corresponding to sand, opaque metal and shinning metal,
respectively. Assuming that we know the type of material of the missing
object (e.g. a screw made of opaque metal) we can restrict attention
to the $n_{2}\times 2$ geographic submatrix that gives the position of the
cells assigned to the OPM cluster (see Figure \ref{fig:geographic}). We perform a second robust cluster analysis on these geographic data. 
Clearly from Figure \ref{fig:geographic} (a) any isolated outliers from a
second robust cluster analysis of this geographic data are candidates for
the location of the missing screw. The robust analysis exposes  the
remarkable isolated point in Figure \ref{fig:geographic} (a) which indeed
corresponds to the missing screw. The non-robust analysis leads to the much
less informative Figure \ref{fig:geographic} (b).

\begin{figure}[!tbp]
\centering
\includegraphics[scale=0.3]{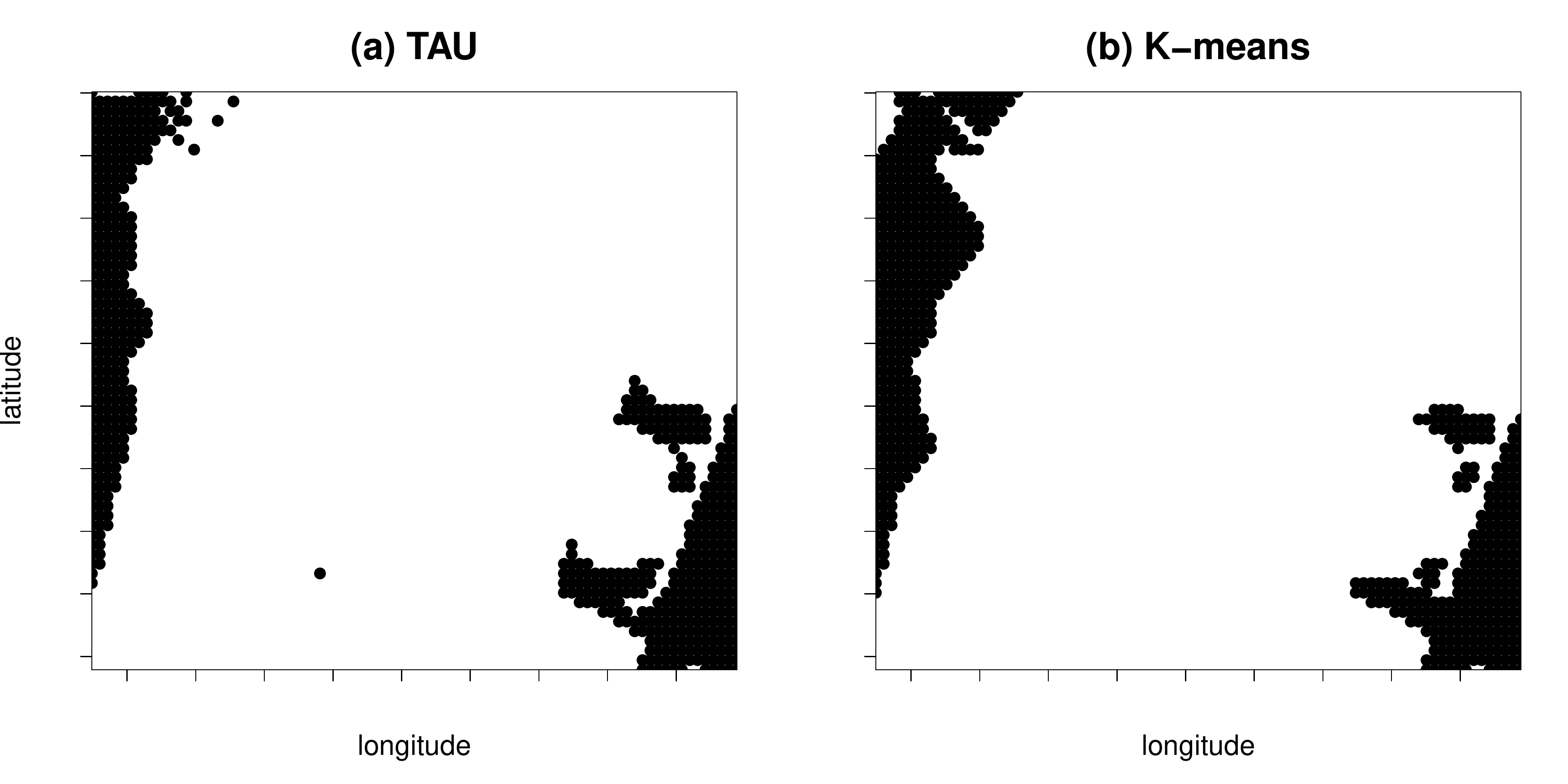}
\caption{geographic submatrix corresponding to Opaque Metal cluster obtained
by (a) K-TAU, (b) K means }
\label{fig:geographic}
\end{figure}

\begin{figure}[!tbp]
\includegraphics[width=0.485\textwidth]{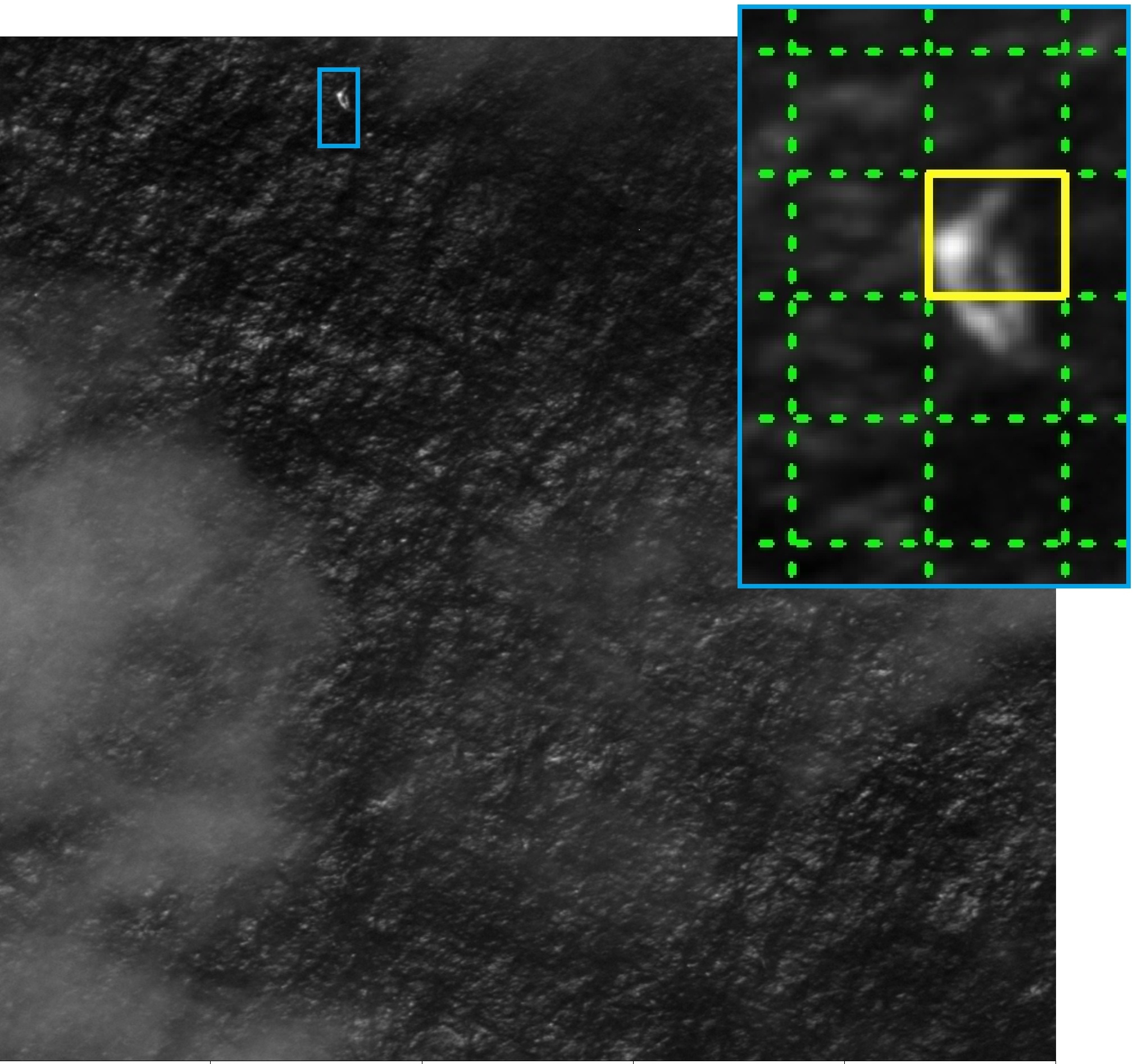} %
\includegraphics[width=0.47\textwidth]{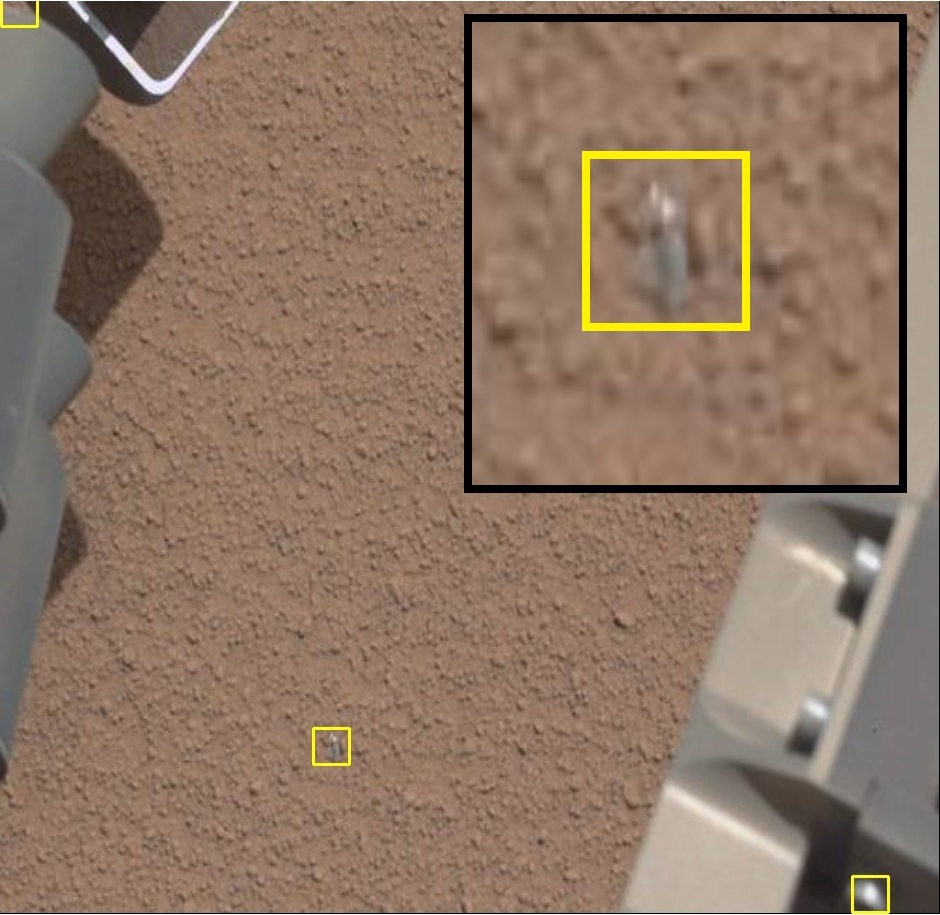}
\caption{\textbf{Left:} The boat found by the algorithm. Square cells of
size $10\times 10$ are shown in green dotted lines, the observation found is
in yellow. \textbf{Right:} The screw found by the 2-step procedure. The
candidates are indicated in yellow squares on the left upper corner, right
lower corner, and in the middle, the region of the screw is expanded in the
black rectangle.}
\label{fig:mainResult}
\end{figure}

\clearpage

%\bibliographystyle{apalike} %%%%%%%%%%%%%%%%%%%%%%
%\bibliography{BiblioClustering} 

\section*{Appendix I: Derivation of Estimating equations}

\label{EstimatingEq}

\subsection*{Notation}

We establish the notation as follows 
\begin{equation*}
\boldsymbol{\mu}= (\boldsymbol{\mu}_{1},\boldsymbol{\mu}_{2},\dots,%
\boldsymbol{\mu}_K), \mbox{where } \boldsymbol{\mu}_{k} \in \mathbb{R}^p,
\end{equation*}
\begin{equation*}
\boldsymbol{d}(\boldsymbol{\mu})=(d_1(\boldsymbol{\mu}),d_2(\boldsymbol{\mu}%
), \dots, d_n(\boldsymbol{\mu})), \quad \mbox{where } \quad d_i(\boldsymbol{%
\mu}) = \min_{1\leq k \leq K} {\|\mathbf{x}_i-\boldsymbol{\mu}_k\|}, \ \
1\leq i \leq n.
\end{equation*}
First of all, we set $k$, $1\leq k\leq K$, and compute the derivatives $%
\partial d_{i}(\boldsymbol{\mu })/\partial \boldsymbol{\mu }_{k}\ $ and $\
\partial s(\boldsymbol{d}(\boldsymbol{\mu }))/\partial \boldsymbol{\mu }%
_{k}\ $

\begin{itemize}
\item derivation of $\partial d_{i}(\boldsymbol{\mu })/\partial \boldsymbol{%
\mu }_{k}$ 
\begin{equation}
\frac{\partial d_{i}(\boldsymbol{\mu })}{\partial \boldsymbol{\mu }_{k}}%
=\left\{ 
\begin{array}{cc}
-\displaystyle\frac{(\mathbf{x}_{i}-\boldsymbol{\mu }_{k})}{||\mathbf{x}_{i}-%
\boldsymbol{\mu }_{k}||} & \quad \mbox{si }\ \mathbf{x}_{i}\in G_{k} \\ 
&  \\ 
0 & \quad \mbox{si }\ \mathbf{x}_{i}\notin G_{k}%
\end{array}%
\right.  \label{derivadadi}
\end{equation}

\item derivation of $\partial s(\boldsymbol{d}(\boldsymbol{\mu }))/\partial 
\boldsymbol{\mu }_{k}$

$s(\boldsymbol{d}(\boldsymbol{\mu }))$, satisfies (\ref{sdefff}), therefore, by
implicit differentiation, 
\begin{equation*}
\frac{\partial }{\partial \boldsymbol{\mu }_{k}}\left( \frac{1}{n}%
\sum_{i=1}^{n}\rho _{1}\left( \frac{d_{i}(\boldsymbol{\mu })}{s(\boldsymbol{d%
}(\boldsymbol{\mu }))}\right) \right) =0
\end{equation*}%
\begin{equation*}
\frac{1}{n}\sum_{i=1}^{n}\psi _{1}\left( \frac{d_{i}(\boldsymbol{\mu })}{s(%
\boldsymbol{d}(\boldsymbol{\mu }))}\right) \frac{\partial }{\partial 
\boldsymbol{\mu }_{k}}\left( \frac{d_{i}(\boldsymbol{\mu })}{s(\boldsymbol{d}%
(\boldsymbol{\mu }))}\right) =0,
\end{equation*}%
this implies 
\begin{equation*}
\sum_{i=1}^{n}\psi _{1}\left( \frac{d_{i}(\boldsymbol{\mu })}{s(\boldsymbol{d%
}(\boldsymbol{\mu }))}\right) \left( \frac{\partial d_{i}(\boldsymbol{\mu })%
}{\partial \boldsymbol{\mu }_{k}}s(\boldsymbol{d}(\boldsymbol{\mu }))-d_{i}(%
\boldsymbol{\mu })\frac{\partial s(\boldsymbol{d}(\boldsymbol{\mu }))}{%
\partial \boldsymbol{\mu }_{k}}\right) =0,
\end{equation*}%
From here, we obtain 
\begin{equation*}
\frac{\partial s(\boldsymbol{d}(\boldsymbol{\mu }))}{\partial \boldsymbol{%
\mu }_{k}}=-\frac{\sum_{i=1}^{n}\psi _{1}\left( \frac{d_{i}(\boldsymbol{\mu }%
)}{s(\boldsymbol{d}(\boldsymbol{\mu }))}\right) \frac{\partial d_{i}(%
\boldsymbol{\mu })}{\partial \boldsymbol{\mu }_{k}}}{\sum_{i=1}^{n}\psi
_{1}\left( \frac{d_{i}(\boldsymbol{\mu })}{s(\boldsymbol{d}(\boldsymbol{\mu }%
))}\right) \frac{d_{i}(\boldsymbol{\mu })}{s(\boldsymbol{d}(\boldsymbol{\mu }%
))}}
\end{equation*}%
Let $B(\boldsymbol{\mu })$ be a real number depending on $\boldsymbol{\mu }$ 
\begin{equation*}
B(\boldsymbol{\mu })=\frac{1}{n}\sum_{i=1}^{n}\psi _{1}\left( \frac{d_{i}(%
\boldsymbol{\mu })}{s(\boldsymbol{d}(\boldsymbol{\mu }))}\right) \frac{d_{i}(%
\boldsymbol{\mu })}{s(\boldsymbol{d}(\boldsymbol{\mu }))}
\end{equation*}%
thus, 
\begin{equation*}
\frac{\partial s(\boldsymbol{d}(\boldsymbol{\mu }))}{\partial \boldsymbol{%
\mu }_{k}}=-\frac{1}{B(\boldsymbol{\mu })}\sum_{i=1}^{n}\psi _{1}\left( 
\frac{d_{i}(\boldsymbol{\mu })}{s(\boldsymbol{d}(\boldsymbol{\mu }))}\right) 
\frac{\partial d_{i}(\boldsymbol{\mu })}{\partial \boldsymbol{\mu }_{k}},
\end{equation*}%
by using (\ref{derivadadi}), we get 
\begin{equation}
\frac{\partial s(\boldsymbol{d}(\boldsymbol{\mu }))}{\partial \boldsymbol{%
\mu }_{k}}=-\frac{1}{B(\boldsymbol{\mu })}\sum_{\mathbf{x}_{i}\in
G_{k}}^{n}\psi _{1}\left( \frac{\mathbf{x}_{i}-\boldsymbol{\mu }_{k}}{s(%
\boldsymbol{d}(\boldsymbol{\mu }))}\right) \frac{(\mathbf{x}_{i}-\boldsymbol{%
\mu }_{k})}{||\mathbf{x}_{i}-\boldsymbol{\mu }_{k}||}.  \label{derivadas1}
\end{equation}
\end{itemize}

Now we find the centers that minimize $J_{\tau }$ given by 
\begin{equation*}
J_{\tau }^{2}\left( \boldsymbol{\mu }_{1},...,\boldsymbol{\mu }_{K}\right)
=s^{2}(\boldsymbol{d}(\boldsymbol{\mu }))\frac{1}{n}\sum_{i=1}^{n}\rho
_{2}\left( \frac{d_{i}(\boldsymbol{\mu })}{s(\boldsymbol{d}(\boldsymbol{\mu }%
))}\right) ,
\end{equation*}

The estimating equations for the clusters centers are obtained by equating
the derivative of $J_{\tau }$ to zero, 
\begin{equation*}
\frac{\partial }{\partial {\boldsymbol{\mu }_{k}}}\left( J_{\tau }\left( 
\boldsymbol{\ \mu _{1}},...,\boldsymbol{\mu _{k}}\right) \right) =0,
\end{equation*}

\begin{eqnarray*}
\frac{\partial }{\partial {\boldsymbol{\mu }_{k}}}\left( J_{\tau }\left( 
\boldsymbol{\mu }_{1},...,\boldsymbol{\mu }_{k}\right) \right) &=&\frac{%
\partial }{\partial \boldsymbol{\mu }_{k}}\left( s^{2}(\boldsymbol{d}(%
\boldsymbol{\mu }))\frac{1}{n}\sum_{i=1}^{n}\rho _{2}\left( \frac{d_{i}(%
\boldsymbol{\mu })}{s(\boldsymbol{d}(\boldsymbol{\mu }))}\right) \right) \\
{} &=&\frac{\partial s^{2}(\boldsymbol{d}(\boldsymbol{\mu }))}{\partial 
\boldsymbol{\mu }_{k}}\frac{1}{n}\sum_{i=1}^{n}\rho _{2}\left( \frac{d_{i}(%
\boldsymbol{\mu })}{s(\boldsymbol{d}(\boldsymbol{\mu }))}\right) +s^{2}(%
\boldsymbol{d}(\boldsymbol{\mu }))\frac{\partial }{\partial \boldsymbol{\mu }%
_{k}}\left( \frac{1}{n}\sum_{i=1}^{n}\rho _{2}\left( \frac{d_{i}(\boldsymbol{%
\mu })}{s(\boldsymbol{d}(\boldsymbol{\mu }))}\right) \right) \\
{} &=&2s(\boldsymbol{d}(\boldsymbol{\mu })){\frac{\partial s(\boldsymbol{d}(%
\boldsymbol{\mu }))}{\partial \boldsymbol{\mu }_{k}}}\frac{1}{n}%
\sum_{i=1}^{n}\rho _{2}\left( \frac{d_{i}(\boldsymbol{\mu })}{s(\boldsymbol{d%
}(\boldsymbol{\mu }))}\right) + \\
{} &&+{s^{2}(\boldsymbol{d}(\boldsymbol{\mu }))}\frac{1}{n}%
\sum_{i=1}^{n}\psi _{2}\left( \frac{d_{i}(\boldsymbol{\mu })}{s(\boldsymbol{d%
}(\boldsymbol{\mu }))}\right) \frac{\frac{\partial d_{i}(\boldsymbol{\mu })}{%
\partial \boldsymbol{\mu }_{k}}s(\boldsymbol{d}(\boldsymbol{\mu }))-d_{i}(%
\boldsymbol{\mu })\frac{\partial s(\boldsymbol{d}(\boldsymbol{\mu }))}{%
\partial \boldsymbol{\mu }_{k}}}{{s^{2}(\boldsymbol{d}(\boldsymbol{\mu }))}},
\\
&&
\end{eqnarray*}%
grouping $s(\boldsymbol{d}(\boldsymbol{\mu }))\frac{\partial s(\boldsymbol{d}(%
\boldsymbol{\mu }))}{\partial \boldsymbol{\mu }_{k}}$ from the above
equation 
\begin{equation*}
0=s(\boldsymbol{d}(\boldsymbol{\mu }))\frac{\partial s(\boldsymbol{d}(%
\boldsymbol{\mu }))}{\partial \boldsymbol{\mu }_{k}}\left\{ \frac{1}{n}%
\sum_{i=1}^{n}2\rho _{2}\left( \frac{d_{i}(\boldsymbol{\mu })}{s(\boldsymbol{%
d}(\boldsymbol{\mu }))}\right) -\psi _{2}\left( \frac{d_{i}(\boldsymbol{\mu }%
)}{s(\boldsymbol{d}(\boldsymbol{\mu }))}\right) \frac{d_{i}(\boldsymbol{\mu }%
)}{s(\boldsymbol{d}(\boldsymbol{\mu }))}\right\} +
\end{equation*}%
\begin{equation*}
+\frac{1}{n}\sum_{i=1}^{n}\psi _{2}\left( \frac{d_{i}(\boldsymbol{\mu })}{s(%
\boldsymbol{d}(\boldsymbol{\mu }))}\right) \frac{\partial d_{i}(\boldsymbol{%
\mu })}{\partial \boldsymbol{\mu }_{k}}s(\boldsymbol{d}(\boldsymbol{\mu })),
\end{equation*}%
defining 
\begin{equation*}
A(\boldsymbol{\mu })=\frac{1}{n}\sum_{i=1}^{n}2\rho _{2}\left( \frac{d_{i}(%
\boldsymbol{\mu })}{s(\boldsymbol{d}(\boldsymbol{\mu }))}\right) -\psi
_{2}\left( \frac{d_{i}(\boldsymbol{\mu })}{s(\boldsymbol{d}(\boldsymbol{\mu }%
))}\right) \frac{d_{i}(\boldsymbol{\mu })}{s(\boldsymbol{d}(\boldsymbol{\mu }%
))},
\end{equation*}%
we get 
\begin{equation*}
0=s(\boldsymbol{d}(\boldsymbol{\mu }))A(\boldsymbol{\mu })\frac{\partial s(%
\boldsymbol{d}(\boldsymbol{\mu }))}{\partial \boldsymbol{\mu }_{k}}+\frac{1}{%
n}\sum_{i=1}^{n}\psi _{2}\left( \frac{d_{i}(\boldsymbol{\mu })}{s(%
\boldsymbol{d}(\boldsymbol{\mu }))}\right) s(\boldsymbol{d}(\boldsymbol{\mu }%
))\frac{\partial d_{i}(\boldsymbol{\mu })}{\partial \boldsymbol{\mu }_{k}}.
\end{equation*}%
By using equations (\ref{derivadadi}), (\ref{derivadas1}), and multiplying
by $(-1)$, 
\begin{equation*}
0=s(\boldsymbol{d}(\boldsymbol{\mu }))\frac{A(\boldsymbol{\mu })}{D(%
\boldsymbol{\mu })}\sum_{\mathbf{x}_{i}\in G_{k}}\psi _{1}\left( \frac{d_{i}(%
\boldsymbol{\mu })}{s(\boldsymbol{d}(\boldsymbol{\mu }))}\right) \frac{(%
\mathbf{x}_{i}-\boldsymbol{\mu }_{k})}{d_{i}(\boldsymbol{\mu })}+
\end{equation*}%
\begin{equation*}
+\frac{1}{n}\sum_{\mathbf{x}_{i}\in G_{k}}\psi _{2}\left( \frac{d_{i}(%
\boldsymbol{\mu })}{s(\boldsymbol{d}(\boldsymbol{\mu }))}\right) s(%
\boldsymbol{d}(\boldsymbol{\mu }))\frac{(\mathbf{x}_{i}-\boldsymbol{\mu }%
_{k})}{d_{i}(\boldsymbol{\mu })},
\end{equation*}%
rearranging, 
\begin{equation*}
0=\sum_{\mathbf{x}_{i}\in G_{k}}\left[ A(\boldsymbol{\mu })\psi _{1}\left( 
\frac{d_{i}(\boldsymbol{\mu })}{s(\boldsymbol{d}(\boldsymbol{\mu }))}\right) 
\frac{s(\boldsymbol{d}(\boldsymbol{\mu }))}{d_{i}(\boldsymbol{\mu })}+D(%
\boldsymbol{\mu })\psi _{2}\left( \frac{d_{i}(\boldsymbol{\mu })}{s(%
\boldsymbol{d}(\boldsymbol{\mu }))}\right) \frac{s(\boldsymbol{d}(%
\boldsymbol{\mu }))}{d_{i}(\boldsymbol{\mu })}\right] (\mathbf{x}_{i}-%
\boldsymbol{\mu }_{k}),
\end{equation*}%
finally, 
\begin{equation*}
0=\sum_{\mathbf{x}_{i}\in G_{k}}\left[ \frac{A(\boldsymbol{\mu })\psi
_{1}\left( t\right) +B(\boldsymbol{\mu })\psi _{2}\left( t\right) }{t}\right]
_{t=\frac{d_{i}(\boldsymbol{\mu })}{s(\boldsymbol{d}(\boldsymbol{\mu }))}}(%
\mathbf{x}_{i}-\boldsymbol{\mu }_{k})
\end{equation*}%
From here (\ref{taueq1}) can be derived easily. \clearpage

\section*{Appendix II: Consistency}
\label{consistency}
Strong consistency for the classic K means was given by \citet{PollardPaper},
while strong consistency for the robust TK means was given by \citet%
{EspaniolesTKMeans}. Both works prove convergence by showing that the
Hausdorff distance ($d_H$) between the true centers an the estimated ones
tends to zero as the sample size tends to infinity. We provide an analogous
result for K-TAU. Our proof uses the results from Lemmas stated and
proved in the following section.

\subsection*{Consistency- Lemmas}

\begin{lemma}
\label{funcCompactas} Let $\varphi: \boldsymbol{\mathcal{K}_0} \to \mathbb{R}
$ be a continuous function with a unique minimum $\nu_0= \varphi(\mathcal{A}%
_0)$, where $\boldsymbol{\mathcal{K}}_0=\{ \mathcal{A}\subseteq \overline{B}%
, \#\mathcal{A} \leq K\}$, and $\overline{B}$ is a closed ball  in $\mathbb{R}^p$. Suppose that $(%
\mathcal{A}_n)_{n\in \mathbb{N}} \subseteq \boldsymbol{\mathcal{K}_0}$ and
assume that the following property is satisfied 
\begin{equation*}
\forall \eta>0 \ \exists n_0: \varphi(\mathcal{A}_n) < \eta + \nu_0 \ \ \
\forall n\geq n_0,
\end{equation*}
then $d_H(\mathcal{A}_n, \mathcal{A}_0) \to 0$.
\end{lemma}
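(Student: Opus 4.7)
The plan is to use a compactness-plus-uniqueness argument by contradiction. The hypothesis on $\varphi(\mathcal{A}_n)$ is equivalent to saying $\varphi(\mathcal{A}_n) \to \nu_0$ (the lower bound $\varphi(\mathcal{A}_n) \geq \nu_0$ is automatic because $\mathcal{A}_0$ is the unique minimizer). So the task reduces to showing that any $(\mathcal{A}_n)$ in $\boldsymbol{\mathcal{K}_0}$ with $\varphi(\mathcal{A}_n) \to \nu_0$ must converge to $\mathcal{A}_0$ in the Hausdorff metric.

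The first step is to establish that $\boldsymbol{\mathcal{K}_0}$ is compact in the Hausdorff metric. This follows from the standard fact that the hyperspace of non-empty closed subsets of a compact metric space is compact under $d_H$; one then checks that $\{\mathcal{A} \subseteq \overline{B}: \#\mathcal{A} \leq K\}$ is a closed subset of that hyperspace (a Hausdorff limit of sets of cardinality at most $K$ has cardinality at most $K$, since each limit point is the limit of a sequence of points drawn from the approximating sets, and one can split these into at most $K$ subsequences by pigeonhole). Therefore $\boldsymbol{\mathcal{K}_0}$ is sequentially compact.

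The second step is the contradiction. Suppose $d_H(\mathcal{A}_n, \mathcal{A}_0) \not\to 0$. Then there exist $\varepsilon > 0$ and a subsequence $(\mathcal{A}_{n_j})$ with $d_H(\mathcal{A}_{n_j}, \mathcal{A}_0) \geq \varepsilon$ for all $j$. By the compactness established above, $(\mathcal{A}_{n_j})$ admits a further subsequence $(\mathcal{A}_{n_{j_\ell}})$ converging in $d_H$ to some $\mathcal{A}^* \in \boldsymbol{\mathcal{K}_0}$. The continuity of $d_H(\,\cdot\,, \mathcal{A}_0)$ gives $d_H(\mathcal{A}^*, \mathcal{A}_0) \geq \varepsilon > 0$, so $\mathcal{A}^* \neq \mathcal{A}_0$.

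The third step uses continuity of $\varphi$ and the uniqueness of the minimizer. By continuity of $\varphi$, $\varphi(\mathcal{A}_{n_{j_\ell}}) \to \varphi(\mathcal{A}^*)$. But $\varphi(\mathcal{A}_{n_{j_\ell}}) \to \nu_0$ by hypothesis, so $\varphi(\mathcal{A}^*) = \nu_0$, contradicting the uniqueness of $\mathcal{A}_0$ as the minimizer. Hence $d_H(\mathcal{A}_n, \mathcal{A}_0) \to 0$.

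The main obstacle is the closedness (and hence compactness) of $\boldsymbol{\mathcal{K}_0}$ inside the hyperspace of closed subsets of $\overline{B}$, specifically the fact that $d_H$-limits of at-most-$K$-point sets remain at-most-$K$-point sets; everything else (the contradiction argument and continuity step) is a routine application of sequential compactness plus uniqueness. If the authors prefer to avoid invoking abstract hyperspace compactness, one can equivalently parametrize $\boldsymbol{\mathcal{K}_0}$ by the compact set $\overline{B}^K$ via $(\boldsymbol{\mu}_1,\dots,\boldsymbol{\mu}_K) \mapsto \{\boldsymbol{\mu}_1,\dots,\boldsymbol{\mu}_K\}$, note this map is continuous into $(\boldsymbol{\mathcal{K}_0}, d_H)$, and extract a convergent subsequence there; the conclusion is the same.
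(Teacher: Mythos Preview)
Your proposal is correct and rests on the same compactness-plus-uniqueness idea as the paper: both use that $(\boldsymbol{\mathcal{K}}_0,d_H)$ is compact and that $\mathcal{A}_0$ is the unique minimizer of the continuous $\varphi$. The only stylistic difference is that the paper argues directly---for each $\varepsilon>0$ the minimum of $\varphi$ on the compact set $\boldsymbol{\mathcal{K}}_0\setminus \mathcal{B}_\varepsilon(\mathcal{A}_0)$ is some $\nu_1>\nu_0$, and the hypothesis with $\eta=\nu_1-\nu_0$ forces $\mathcal{A}_n\in\mathcal{B}_\varepsilon(\mathcal{A}_0)$ eventually---whereas you phrase it as a subsequential contradiction.
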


\begin{proof}
First of all, we note that $(\boldsymbol{\mathcal{K}_{0}},d_{H})$ is a
compact metric space (for a proof we refer for example to \citet{Munkres}).
Let $\varepsilon$ be a positive real number, we consider the open ball of
radius $\varepsilon $ regarding to Hausdorff distance $\mathcal{B}%
_{\varepsilon }(\mathcal{A}_{0})$. The set $\boldsymbol{\mathcal{K}}_{1}=%
\boldsymbol{\mathcal{K}}_{0}\setminus \mathcal{B}_{\varepsilon }(\mathcal{A}%
_{0})$ is compact. Since $\varphi $ is continuous, it is well defined its
minimum over $\boldsymbol{\mathcal{K}}_{1}$, say $\nu _{1}$, namely, 
\begin{equation*}
\nu _{1}=\min_{\mathcal{A}\in \boldsymbol{\mathcal{K}}_{1}}\varphi (\mathcal{%
A}),
\end{equation*}%
provided $\nu _{0}$ is unique, must be $\nu _{1}>\nu _{0}$, hence we can
take $\eta =\nu _{1}-\nu _{0}>0$, the Lemma hypothesis ensures the existence
of $n_{0}$, for which, $\forall n\geq n_{0}$%
\begin{equation*}
\varphi (\mathcal{A}_{n})<(\nu _{1}-\nu _{0})+\nu _{0}=\nu _{1},
\end{equation*}%
then $\mathcal{A}_{n}$ can not belong to $\boldsymbol{\mathcal{K}}_{1}$,
since in that case it would be less than the minimum. Thus, $\mathcal{A}%
_{n}\in \mathcal{B}_{\varepsilon }(\mathcal{A}_{0})$
\end{proof}

%%%%%%%%%%%%%%%%%%%%%%%%%%%%%%%%%%%%%%%%%%%%%%%%%%%%%%%%
%%%%%%%%%%%%% ENGLISH %%%%%%%%%%%%%%%%%%%%%%%%%%%%%%%%%%
%%%%%%%%%%%%%%%%%%%%%%%%%%%%%%%%%%%%%%%%%%%%%%%%%%%%%%%%

\begin{lemma}
\label{unicaSolucionPob} Let $\mathcal{A}\subseteq \mathbb{R}^{p}$ be a set
of at most $K$ points. We consider $H:[0,+\infty )\rightarrow \mathbb{R}$ 
\begin{equation*}
H(t)=\left\{ 
\begin{array}{lrr}
1-\mathbb{P}(\mathbf{x}\in \mathcal{A}) & \mbox{si } & t=0 \\ 
&  &  \\ 
\mathbb{E}\left( \rho (\frac{d(\mathbf{x},\mathcal{A})}{t})\right) & %
\mbox{si } & t>0%
\end{array}%
\right.
\end{equation*}%
Then,

\begin{enumerate}
\item $H$ is continuous

\item $\lim_{t\rightarrow \infty }H(t)=0$

\item Equation $H(t)=1/2$ has unique solution.
\end{enumerate}
\end{lemma}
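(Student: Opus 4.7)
The plan is to prove the three items in turn. Parts (1) and (2) will follow from the dominated convergence theorem applied to the bounded integrand $\rho(d(\mathbf{x},\mathcal{A})/t)$, and part (3) will combine monotonicity of $H$ with assumption A1 and the shape of $\rho$.

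For continuity on $(0,\infty)$, I would note that for each fixed $\mathbf{x}$ the map $t \mapsto \rho(d(\mathbf{x},\mathcal{A})/t)$ is continuous on $(0,\infty)$, and everything is dominated by the constant $\max\rho = 1$; hence dominated convergence gives continuity of $H$ at every $t>0$. For continuity at $t=0$, I would compute the right limit. For $\mathbf{x}\in\mathcal{A}$, $d(\mathbf{x},\mathcal{A})/t = 0$ and the integrand equals $\rho(0)=0$; for $\mathbf{x}\notin\mathcal{A}$, $d(\mathbf{x},\mathcal{A})/t \to +\infty$ and, since $\rho$ is bounded and non-decreasing, $\rho(d/t) \to \sup\rho = 1$. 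Dominated convergence then yields $\lim_{t\to 0^+}H(t) = \mathbb{P}(\mathbf{x}\notin\mathcal{A}) = H(0)$, which is exactly the boundary value used in the definition. The same tool handles (2): as $t\to\infty$, $d(\mathbf{x},\mathcal{A})/t \to 0$, so $\rho(d/t)\to\rho(0)=0$ pointwise, and therefore $H(t)\to 0$.

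For uniqueness in (3), I would first observe that $H$ is non-increasing on $[0,\infty)$, because $\rho$ is non-decreasing on $[0,\infty)$ while $t\mapsto d/t$ is non-increasing for every fixed $\mathbf{x}$. Assumption A1 yields $H(0) = 1 - \mathbb{P}(\mathbf{x}\in\mathcal{A}) \geq 1-a > 1/2$, and by (2), $H(\infty)=0$; continuity then produces at least one solution of $H(t)=1/2$. Suppose there were two, $t_1 < t_2$. Monotonicity forces $H\equiv 1/2$ on $[t_1,t_2]$, so the a.s.\ nonnegative quantity $\rho(d/t_1) - \rho(d/t_2)$ integrates to zero and must vanish almost surely. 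Using that $\rho$ is strictly increasing until it reaches its maximum (as for the smooth hard-rejection loss adopted in the paper), this forces, for almost every $\mathbf{x}$, either $d(\mathbf{x},\mathcal{A})=0$ or $\rho(d(\mathbf{x},\mathcal{A})/t) = 1$ for every $t\in[t_1,t_2]$. Consequently $H(t_2) = \mathbb{P}(\mathbf{x}\notin\mathcal{A}) \geq 1-a > 1/2$, contradicting $H(t_2)=1/2$.

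The main obstacle is the uniqueness step, which requires more than the bare monotonicity of $\rho$: one must exploit strict monotonicity on the non-saturated region and use A1 in a pivotal way to rule out the saturated alternative in which all the mass either sits on $\mathcal{A}$ or lies far enough that $\rho$ is already maximal. The continuity and limit statements are, by contrast, routine once $\max\rho$ is identified as the dominating function.
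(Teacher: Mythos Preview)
Your proposal is correct and follows essentially the same route as the paper: dominated convergence for continuity and the limit at infinity, then for uniqueness the intermediate value theorem for existence and the dichotomy ``either $d(\mathbf{x},\mathcal{A})=0$ or $\rho$ is already saturated'' combined with A1 for the contradiction. The only cosmetic difference is that the paper derives the contradiction via $\mathbb{P}(d/s_1\geq m)>1/2$ and then bounds $H(s_1)$ from below by that probability, whereas you compute $H(t_2)=\mathbb{P}(\mathbf{x}\notin\mathcal{A})$ directly; these are the same argument.
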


\begin{proof}
%%%%%%%%%%%%%%%%%%%%%%%%%%%%%%%%%%%%%%%%%%%%%%%%%%%%%%%%
%%%%%%%%%%%%% ENGLISH %%%%%%%%%%%%%%%%%%%%%%%%%%%%%%%%%%
%%%%%%%%%%%%%%%%%%%%%%%%%%%%%%%%%%%%%%%%%%%%%%%%%%%%%%%%
We will prove {\small 1.} First, we see the continuity for $t>0$, set $%
t_{0}>0$, and take $t_{j}\rightarrow t_{0}$. Let $(f_{j})_{j\in \mathbb{N}}$
be the functions sequence, $f_{j}(\mathbf{x}):=\rho (d(\mathbf{x},\mathcal{A}%
)/t_{j})$, $f_{j}$ are bounded by $1$, and converge pointwise to $\rho ({d(%
\mathbf{x},\mathcal{A})}/{t_{0}})$. Then, by using the Lebesgue dominated
convergence Theorem, it is possible to exchange the expectation and the
limit in the following equation, 
\begin{equation*}
\lim_{j\rightarrow \infty }H(t_{j})=\lim_{j\rightarrow \infty }\mathbb{E}%
\left( \rho \left( \frac{d(\mathbf{x},\mathcal{A})}{t_{j}}\right) \right) =%
\mathbb{E}\left( \lim_{j\rightarrow \infty }\rho \left( \frac{d(\mathbf{x},%
\mathcal{A})}{t_{j}}\right) \right) =H(t_{0}).
\end{equation*}%
In this way, $H$ is continuous at $t_{0}>0$. Now, it remains to see the continuity
at $t=0$, take $t_{j}$ converging decreasingly to $0$, 
\begin{equation*}
H(t_{j})=\mathbb{E}\left( \rho \left( \frac{d(\mathbf{x},\mathcal{A})}{t_{j}}%
\right) \right) =\mathbb{E}\left( \rho \left( \frac{d(\mathbf{x},\mathcal{A})%
}{t_{j}}\right) I_{\{d(\mathbf{x},\mathcal{A})>0\}}\right) .
\end{equation*}%
For $\rho$- function considered here $\rho (0)=0$, and also if $d(\mathbf{x},%
\mathcal{A})>0$, 
\begin{equation*}
\lim_{j\rightarrow \infty }\rho (\frac{d(\mathbf{x},\mathcal{A})}{t_{j}}%
)=\rho (\infty )=1.
\end{equation*}%
So, by applying dominated convergence Theorem, 
\begin{equation*}
\lim_{j\rightarrow \infty }H(t_{j})=\mathbb{E}\left( \lim_{j\rightarrow
\infty }\rho \left( \frac{d(\mathbf{x},\mathcal{A})}{t_{j}}\right) I_{\{d(%
\mathbf{x},\mathcal{A})>0\}}\right) =\mathbb{E}\left( 1\cdot I_{\{d(\mathbf{x%
},\mathcal{A})>0\}}\right) =1-\mathbb{P}(\mathbf{x}\in \mathcal{A}).
\end{equation*}%
Thus, $H(t)$ is continuous at $0$.

%%%%%%%%%%%%%%%%%%%%%%%%%%%%%%%%%%%%%%%%%%%%%%%%%%%%%%%%%%%%%%%%%%%%%%%%%%%%%%%%%%
%%%%%%%%%%%%%%%%%%%%%%%%%%%%%%%%%%%%%%%%%%%%%%%%%%%%%%%%%%%%%%%%%%%%%%%%%%%%%%%%%%%%
%%%%%%%%%%%%%%%%%%%% ENGLISH %%%%%%%%%%%%%%%%%%%%%%%%%%%%%%%%%%%%%%%%%%%%%%%
%%%%%%%%%%%%%%%%%%%%%%%%%%%%%%%%%%%%%%%%%%%%%%%%%%%%%%%%%%%%%%%%%%%%%%%%%

We will prove {\small 2}. Take $t_{j}\rightarrow \infty $, $\rho ({d(\mathbf{%
x},\mathcal{A})}/{t_{j}})\rightarrow 0$ pointwise, thus $H(t_{j})\rightarrow
0$ concluding the proof of {\small 2}.

Finally, we will see item 3 of the Lemma, in first place, {A.1)} Hypothesis
implies that $1-\mathbb{P}(\mathbf{x}\in \mathcal{A})>1/2$. We apply the
intermediate value Theorem to the function $H(t)$: $H(0)>\frac{1}{2}$, $%
H(\infty )=0$, then there exists $s$ such that $H(s)=1/2$, proving the existence,
to see the uniqueness, suppose that $s_{2}>s_{1}$ are two different
solutions, subtracting them, 
\begin{equation*}
0=H(s_{1})-H(s_{2})=\mathbb{E}\left( \rho \left( \frac{d(\mathbf{x},\mathcal{%
A})}{s_{1}}\right) -\rho \left( \frac{d(\mathbf{x},\mathcal{A})}{s_{2}}%
\right) \right) .
\end{equation*}%
Because $\rho $ is monotonous, the argument inside the expectation is
greater or equal than zero, then 
\begin{equation}
\mathbb{P}\left( \rho \left( \frac{d(\mathbf{x},\mathcal{A})}{s_{1}}\right)
-\rho \left( \frac{d(\mathbf{x},\mathcal{A})}{s_{2}}\right) =0\right) =1,
\label{proba1}
\end{equation}%
Thus, 
\begin{equation*}
\rho \left( \frac{d(\mathbf{x},\mathcal{A})}{s_{1}}\right) =\rho \left( 
\frac{d(\mathbf{x},\mathcal{A})}{s_{2}}\right)
\end{equation*}%
all most everywhere. That may happen in two ways:

\begin{itemize}
\item ${d(\mathbf{x},\mathcal{A})}/{s_{1}}={d(\mathbf{x},\mathcal{A})}/{s_{2}%
}$, but given that $s_{1}\neq s_{2}$, the previous equation only happen if $%
d(\mathbf{x},\mathcal{A})=0$, that means that $\mathbf{x}\in \mathcal{A}$.

\item ${d(\mathbf{x},\mathcal{A})}/{s_i} \in [m,+\infty)$, being $m$ the
value from which $\rho(x)=1 \forall x\geq m$.
\end{itemize}

Then, we write equation (\ref{proba1}) according to the previous two events
described, and get 
\begin{equation*}
\mathbb{P}\left( \mathbf{x}\in \mathcal{A}\right) +\mathbb{P}\left( \frac{d(%
\mathbf{x},\mathcal{A})}{s_{1}}\in \lbrack m,+\infty )\right) =1.
\end{equation*}%
from the equation above, and using hypothesis {A.1)} it turns out $\mathbb{P}%
\left( {d(\mathbf{x},\mathcal{A})}/{s_{1}}\in \lbrack m,+\infty )\right)
>1/2.$ Finally, we take the $H$ definition and come to a contradiction 
\begin{equation*}
\frac{1}{2}=H(s_{1})=\mathbb{E}\left( \rho \left( \frac{d(\mathbf{x},%
\mathcal{A})}{s_{1}}\right) \right) \geq \mathbb{E}\left( \rho \left( \frac{%
d(\mathbf{x},\mathcal{A})}{s_{1}}\right) I_{\{\frac{d(\mathbf{x},\mathcal{A})%
}{s_{1}}\in \lbrack m,+\infty )\}}\right)
\end{equation*}%
\begin{equation*}
=\mathbb{P}\left( \frac{d(\mathbf{x},\mathcal{A})}{s_{1}}\in \lbrack
m,+\infty )\right) >\frac{1}{2}.
\end{equation*}%
The absurdity was caused by supposing that there were two different
solutions $s_{1}$ and $s_{2}$.
\end{proof}

%%%%%%%%%%%%%%%%%%%%%%%%%%%%%%%%%%%%%%%%%%%%%%%%%%%%%%%%%%%%%%%%%%%%%%
%%%%%%%%%%%%%%%%%%%%%%%%%%%%%%%%%%%%%%%%%%%%%%%%%%%%%%%%%%%%%%%%%%%%%%

%\subsection{Lema \protect\ref{lemaContinuidad}.}

\begin{lemma}
\label{lemaContinuidad} 

%%%%%%%%%%%%%%%%%%%%%%%%%%%%%%%%%%%%%%%%%%%%%%%%%%%%%%%%%%%%%%%%%%
%%%%%%%%%%%%%%%%%%%%%ENGLISH%%%%%%%%%%%%%%%%%%%%%%%%%%%%%%%%%%%%%%
%%%%%%%%%%%%%%%%%%%%%%%%%%%%%%%%%%%%%%%%%%%%%%%%%%%%%%%%%%%%%%%%%%
Functions $\varphi _{\tau }(\mathcal{A})=\tau (\mathcal{A},F)$ and $\varphi
_{M}(\mathcal{A})=M(\mathcal{A},F)$, are continuous regarding to Hausdorff
distance. 
%%%%%%%%%%%%%%%%%%%%%%%%%%%%%%%%%%%%%%%%%%%%%%%%%%%%%%%%%%%%%%%%%%
\end{lemma}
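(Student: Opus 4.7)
The plan is to first establish continuity of $\varphi_M$ directly from its defining implicit equation together with the uniqueness supplied by Lemma \ref{unicaSolucionPob}, and then to deduce continuity of $\varphi_\tau$ from the formula $\varphi_\tau(\mathcal{A}) = \varphi_M(\mathcal{A}) [E_F \rho_1(D(\mathbf{x},\mathcal{A})/\varphi_M(\mathcal{A}))]^{1/2}$. The starting observation is that for every fixed $\mathbf{x}$, the map $\mathcal{A} \mapsto D(\mathbf{x}, \mathcal{A})$ is $1$-Lipschitz with respect to $d_H$: if $d_H(\mathcal{A}_n, \mathcal{A}) < \varepsilon$, matching each $\boldsymbol{\mu} \in \mathcal{A}$ to a nearby $\boldsymbol{\mu}' \in \mathcal{A}_n$ (and conversely) together with the triangle inequality gives $|D(\mathbf{x}, \mathcal{A}_n) - D(\mathbf{x}, \mathcal{A})| \leq \varepsilon$. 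In particular, $D(\mathbf{x}, \mathcal{A}_n) \to D(\mathbf{x}, \mathcal{A})$ pointwise whenever $\mathcal{A}_n \to \mathcal{A}$ in Hausdorff distance.

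Now fix such a sequence and write $s_n = \varphi_M(\mathcal{A}_n)$ and $s = \varphi_M(\mathcal{A})$. The core step is to show $\{s_n\}$ is contained in a compact subset of $(0,\infty)$. For the upper bound, if $s_n \to \infty$ along some subsequence, then $\rho_1(D(\mathbf{x}, \mathcal{A}_n)/s_n) \to 0$ pointwise, and dominated convergence (with $\rho_1 \leq 1$) forces $E_F \rho_1(D(\mathbf{x}, \mathcal{A}_n)/s_n) \to 0$, contradicting that this expectation equals $1/2$ by definition. For the lower bound, Assumption A1 enters: $\mathbb{P}(D(\mathbf{x},\mathcal{A}) > 0) = 1 - \mathbb{P}(\mathbf{x} \in \mathcal{A}) \geq 1 - a > 1/2$, and if $s_n \to 0$ then on this event $D(\mathbf{x},\mathcal{A}_n)/s_n \to \infty$, so $\rho_1 \to 1$ there and dominated convergence yields a limiting expectation at least $1 - a > 1/2$, again contradicting $E_F \rho_1(D/s_n)=1/2$. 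Hence $\{s_n\}$ is bounded and bounded away from zero. Any convergent subsequence $s_{n_k} \to s^*$ then satisfies, by one more dominated-convergence pass, $E_F \rho_1(D(\mathbf{x},\mathcal{A})/s^*) = 1/2$; the uniqueness part of Lemma \ref{unicaSolucionPob} forces $s^* = s$, so the whole sequence converges to $s$ and $\varphi_M$ is continuous.

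Once $\varphi_M$ is continuous, continuity of $\varphi_\tau$ is routine: with $s_n \to s$ positive and $D(\mathbf{x},\mathcal{A}_n) \to D(\mathbf{x},\mathcal{A})$ pointwise, a further dominated-convergence argument gives $E_F \rho_1(D(\mathbf{x},\mathcal{A}_n)/s_n) \to E_F \rho_1(D(\mathbf{x},\mathcal{A})/s)$, and multiplying by $s_n$ and taking a square root preserves the convergence. I expect the main obstacle to be the lower bound on $\{s_n\}$: it is the one step that genuinely requires Assumption A1, since without the guarantee that $\mathbf{x}$ does not concentrate on the moving $K$-point set $\mathcal{A}_n$, the M-scale could in principle collapse to zero in the limit and destroy both the continuity of $\varphi_M$ and, through the scaling factor, that of $\varphi_\tau$.
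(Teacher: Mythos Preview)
Your argument is correct and shares the paper's overall architecture: pointwise convergence of $D(\mathbf{x},\mathcal{A}_n)$ to $D(\mathbf{x},\mathcal{A})$, dominated convergence (using $\rho_j\le 1$), and the uniqueness furnished by Lemma~\ref{unicaSolucionPob}. The one place where the executions diverge is in extracting $\varphi_M(\mathcal{A}_n)\to\varphi_M(\mathcal{A})$. The paper evaluates the map $t\mapsto E_F[\rho_1(D(\mathbf{x},\mathcal{A}_n)/t)]$ directly at $t=M(\mathcal{A},F)\pm\varepsilon$, uses dominated convergence to get these below and above $1/2$ for large $n$, and concludes by the intermediate value theorem plus uniqueness that $M(\mathcal{A}_n,F)\in(M(\mathcal{A},F)-\varepsilon,\,M(\mathcal{A},F)+\varepsilon)$. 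You instead argue by compactness: rule out $s_n\to\infty$ and $s_n\to 0$ separately, then identify every subsequential limit via uniqueness. Both are standard and equally valid; the paper's sandwich/IVT route is a touch more economical since it bypasses the separate boundedness and positivity steps, while your route makes more explicit where Assumption~A1 is needed (only to prevent $s_n\to 0$). For $\varphi_\tau$ the two proofs are essentially identical; note that the paper's proof (consistent with the original definition of the $\tau$-scale) uses $\rho_2$ rather than $\rho_1$ in the second factor, though this has no bearing on the continuity argument.
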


\begin{proof}
%%%%%%%%%%%%%%%%%%%%%%%%%%%%%%%%%%%%%%%%%%%%%%%%%%%%%%%%%%%%%%%%%%
%%%%%%%%%%%%%%%%%%%%% ENGLISH %%%%%%%%%%%%%%%%%%%%%%%%%%%%%%%%%%%%
%%%%%%%%%%%%%%%%%%%%%%%%%%%%%%%%%%%%%%%%%%%%%%%%%%%%%%%%%%%%%%%%%%
Let $\mathcal{A}_{n}$ be a sequence converging to $\mathcal{A}$ in the
Hausdorff sense, we will see that $\lim_{n\rightarrow \infty }\varphi _{M}(%
\mathcal{A}_{n})=\varphi _{M}(\mathcal{A})$. Take $\varepsilon >0$, it is
easy to see that for each $\mathbf{x}$, $d(\mathbf{x},\mathcal{A}%
_{n})\rightarrow d(\mathbf{x},\mathcal{A})$, then, by using dominated
convergence Theorem, it turns out that 
\begin{equation*}
\mathbb{E}\left( \rho _{1}\left( \frac{d(\mathbf{x},\mathcal{A}_{n})}{M(%
\mathcal{A},F)+\varepsilon }\right) \right) \rightarrow \mathbb{E}\left(
\rho _{1}\left( \frac{d(\mathbf{x},\mathcal{A})}{M(\mathcal{A}%
,F)+\varepsilon }\right) \right) .
\end{equation*}%
On the other hand, 
\begin{equation*}
\mathbb{E}\left( \rho _{1} \left(\frac{d(\mathbf{x},\mathcal{A})}{M(\mathcal{A}%
,F)+\varepsilon }\right)\right) <\frac{1}{2},
\end{equation*}%
then, there exists $n_{0}$ such that $\forall n\geq n_{0}$

\begin{equation}
\mathbb{E}\left( \rho _{1}\left( \frac{d(\mathbf{x},\mathcal{A}_{n})}{M(%
\mathcal{A},F)+\varepsilon }\right) \right) <\frac{1}{2}.
\label{ecu1Intermedio}
\end{equation}
Analogously it can be shown that exists $n_{1}$, such that $\forall n\geq
n_{1}$ 
\begin{equation}
\mathbb{E}\left( \rho _{1}\left( \frac{d(\mathbf{x},\mathcal{A}_{n})}{M(%
\mathcal{A},F)-\varepsilon }\right) )\right) >\frac{1}{2},
\label{ecu2Intermedio}
\end{equation}%
For $n\geq \max \{n_{1},n_{2}\}$, consider the function $H_{n}(t)=\mathbb{E}%
\left( \rho _{1}(\frac{d(\mathbf{x},\mathcal{A}_{n})}{t})\right) $, we apply
Lemma \ref{unicaSolucionPob}, obtaining continuity of $H_{n}(t)$ and
uniqueness for the problem 
\begin{equation*}
\mathbb{E}\left( \rho _{1}\left( \frac{d(\mathbf{x},\mathcal{A}_{n})}{%
t^{\ast }}\right) \right) =\frac{1}{2}
\end{equation*}%
Let $t^{\ast }=M(\mathcal{A}_{n},F)$, through intermediate value theorem for 
$H_{n}(t)$, inequalities (\ref{ecu1Intermedio}) and (\ref{ecu2Intermedio}),
together with uniqueness shown in \ref{unicaSolucionPob}, it is easy to see
that $t^{\ast }\in (M(\mathcal{A},F)-\varepsilon ,M(\mathcal{A}%
,F)+\varepsilon )$. Then, 
\begin{equation*}
|M(\mathcal{A}_{n},F)-M(\mathcal{A},F)|<\varepsilon ,
\end{equation*}%
thus, $\varphi _{M}(\mathcal{A})=M(\mathcal{A},F)$ is continuous as function
of $\mathcal{A}$. 
%%%%%%%%%%%%%%%%%%%%%%%%%%%%%%%%%%%%%%%%%%%%%%%%%%%%%%%%%%%%%%%%%%

%%%%%%%%%%%%%%%%%%%%%%%%%%%%%%%%%%%%%%%%%%%%%%%%%%%%%%%%%%%%%%%%%%
%%%%%%%%%%%%%%%%%%%%%ENGLISH%%%%%%%%%%%%%%%%%%%%%%%%%%%%%%%%%%%%%%
%%%%%%%%%%%%%%%%%%%%%%%%%%%%%%%%%%%%%%%%%%%%%%%%%%%%%%%%%%%%%%%%%%
Now, we need to prove the continuity of $\varphi_{\tau }(\mathcal{A})$, 
\begin{equation*}
\varphi_{\tau }(\mathcal{A})^{2}=(M(A,F))^{2}\mathbb{E}\left( \rho
_{2}\left( \frac{d(\mathbf{x},\mathcal{A})}{M(\mathcal{\ A},F)}\right)
\right)
\end{equation*}%
Take $\mathcal{A}_{n}$ converging to $\mathcal{A}$ in Hausdorff distance,
given that $M(\mathcal{A},F)>0$, and $M(\mathcal{A},F)$ is continuous in $%
\mathcal{A}$, the integrand is continuous and bounded. By using the
dominated convergence Theorem we get that $\tau (\mathcal{A}_{n},F)^{2}$
converges to $\tau ^{2}(\mathcal{A},F)$. Thus, $\varphi _{\tau }(\mathcal{A}%
) $ is continuous as function of $\mathcal{A}$.
\end{proof}

%%%%%%%%%%%%%%%%%%%%%%%%%%%%%%%%%%%%%%%%%%%%%%%%%%%%%%%%%%%%%%%%%%
%%%%%%%%%%%%%%%%%%%%%ENGLISH%%%%%%%%%%%%%%%%%%%%%%%%%%%%%%%%%%%%%%
%%%%%%%%%%%%%%%%%%%%%%%%%%%%%%%%%%%%%%%%%%%%%%%%%%%%%%%%%%%%%%%%%%

\begin{lemma}
{(Uncoupled Uniform Convergence over Compact Sets)} \label{ConvUnifDesacop}
Consider the set $\boldsymbol{\mathcal{K}}_{0}=\{\mathcal{A}\subseteq 
\overline{B},\#\mathcal{A}\leq K\}$, where $\overline{B}$ is a closed ball
in $\mathbb{R}^{p}$, let $\rho $ be a $\rho $- function and $l_{2}>l_{1}>0$.
Let $\mathbf{x}_{1},\mathbf{x}_{2},...,\mathbf{x}_{n}$ independent
observations from a random sample of size $n$. Then 
\begin{equation*}
\lim_{n\rightarrow \infty }\sup_{\mathcal{A}\in \boldsymbol{\mathcal{K}}%
_{0},s\in \lbrack l_{1},l_{2}]}\left\vert \frac{1}{n}\sum_{i=1}^{n}\rho
\left( \frac{d(\mathbf{x}_{i},\mathcal{A})}{s}\right) -E_{F}\left( \rho
\left( \frac{d(\mathbf{x},\mathcal{A})}{s}\right) \right) \right\vert =0\ \
\ \text{a.s.}
\end{equation*}%
%
%
%
%
%
%%%%%%%%%%%%%%%%%%%%%%%%%%%%%%%%%%%%%%%%%%%%%%%%%%%%%%%%%%%%%%%%%%%%%%
\end{lemma}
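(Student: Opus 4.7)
The plan is a classical equicontinuity-plus-finite-cover argument on the compact parameter space $\boldsymbol{\mathcal{K}}_0 \times [l_1,l_2]$. Since $\boldsymbol{\mathcal{K}}_0$ is compact in the Hausdorff metric (as already noted in the proof of Lemma \ref{funcCompactas}) and $[l_1,l_2]$ is compact, given any $\varepsilon>0$ I would cover the product by finitely many $\delta$-balls centered at points $(\mathcal{A}_j,s_j)$, $j=1,\dots,N$. Because $\rho$ is bounded by one, the classical strong law of large numbers applies at each center, and intersecting the $N$ probability-one events yields, for all $n$ sufficiently large,
$$\max_{1\le j\le N}\left|\frac{1}{n}\sum_{i=1}^{n}\rho\!\left(\frac{d(\mathbf{x}_i,\mathcal{A}_j)}{s_j}\right)-E_F\!\left(\rho\!\left(\frac{d(\mathbf{x},\mathcal{A}_j)}{s_j}\right)\right)\right|<\varepsilon/3\quad \text{a.s.}$$

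The crux of the argument is establishing a uniform-in-$\mathbf{x}$ modulus-of-continuity estimate for the map $(\mathcal{A},s)\mapsto \rho(d(\mathbf{x},\mathcal{A})/s)$. Specifically, I would prove
$$\left|\rho\!\left(\frac{d(\mathbf{x},\mathcal{A})}{s}\right)-\rho\!\left(\frac{d(\mathbf{x},\mathcal{A}')}{s'}\right)\right|\le \omega\!\left(\frac{d_H(\mathcal{A},\mathcal{A}')}{l_1}\right)+\omega\!\left(\frac{m\,l_2\,|s-s'|}{l_1^{2}}\right),$$
where $\omega$ is the (uniform) modulus of continuity of $\rho$ on $[0,\infty)$, which exists because $\rho$ is continuous on the compact interval $[0,m]$ and identically one on $[m,\infty)$, with $m$ the saturation threshold. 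The $\mathcal{A}$-term follows at once from the pointwise contraction $|d(\mathbf{x},\mathcal{A})-d(\mathbf{x},\mathcal{A}')|\le d_H(\mathcal{A},\mathcal{A}')$. For the $s$-term I would split on whether $d(\mathbf{x},\mathcal{A}')\ge m\,l_2$: in that case both $d(\mathbf{x},\mathcal{A}')/s$ and $d(\mathbf{x},\mathcal{A}')/s'$ exceed $m$ and $\rho$ evaluates to the same constant on both sides, so the contribution is zero; otherwise $d(\mathbf{x},\mathcal{A}')\le m\,l_2$ is bounded, hence $|d(\mathbf{x},\mathcal{A}')/s-d(\mathbf{x},\mathcal{A}')/s'|\le m\,l_2\,|s-s'|/l_1^{2}$ and uniform continuity of $\rho$ finishes the job. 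Taking expectations preserves the same bound for the population side.

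The main obstacle is precisely producing a bound that is uniform in $\mathbf{x}$: a naive Lipschitz estimate on $u\mapsto\rho(u)$ composed with $(\mathcal{A},s)\mapsto d(\mathbf{x},\mathcal{A})/s$ produces a prefactor proportional to $d(\mathbf{x},\mathcal{A})$, which is unbounded over the sample and ruins the uniform control. Exploiting the flatness of $\rho$ beyond $m$ to discard the unbounded region is the crucial ingredient. Once the equicontinuity bound is established, the proof is routine: choose $\delta>0$ small enough that both $\omega$-terms fall below $\varepsilon/3$ as soon as $d_H(\mathcal{A},\mathcal{A}')+|s-s'|<\delta$; then for any $(\mathcal{A},s)\in \boldsymbol{\mathcal{K}}_0\times[l_1,l_2]$ pick the nearest net point $(\mathcal{A}_j,s_j)$ and combine the SLLN control at the center with the two equicontinuity bounds (one for the empirical average, one for the expectation) to conclude that the supremum stays below $\varepsilon$ a.s.\ for every sufficiently large $n$. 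Letting $\varepsilon\downarrow 0$ along a countable sequence and intersecting the corresponding almost-sure events gives the claimed uniform convergence.
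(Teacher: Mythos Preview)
Your proof is correct and follows a closely related but technically distinct path from the paper's. The paper invokes Pollard's bracketing criterion: it builds a finite family $\mathcal{F}_\varepsilon$ of brackets $f_1 \le g \le f_2$ with $\mathbb{E}(f_2-f_1)\le\varepsilon$, constructed from a $\delta_1$-net $\mathcal{J}\subset\overline{B}$ and a partition of $[l_1,l_2]$, and then appeals to the general theorem in \citet{PollardLibro} to conclude the uniform SLLN. Your argument is more self-contained: you work directly with the modulus of continuity of $\rho$ and the pointwise SLLN at finitely many net centers, bypassing the external bracketing theorem entirely via an $\varepsilon/3$ decomposition. Both proofs hinge on the same key observation --- that the flatness of $\rho$ beyond its saturation point neutralizes the unbounded factor $d(\mathbf{x},\mathcal{A})$ --- but the paper phrases this through the finiteness of $C_\psi=\sup_u\psi(u)u$ combined with the mean value theorem in the $s$-variable, whereas you do it by a clean case split on whether $d(\mathbf{x},\mathcal{A}')\ge m\,l_2$. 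Your route is slightly more elementary and makes the role of saturation more transparent; the paper's route plugs directly into standard empirical-process machinery and would generalize more readily to other function classes.
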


\begin{proof}
We consider the Family 
\begin{equation*}
\mathcal{G}=\left\{ g_{\mathcal{A},s}(\mathbf{x})=\rho \left( \frac{d(%
\mathbf{x},\mathcal{A})}{s}\right) ,\mathcal{A}\in \boldsymbol{\mathcal{K}%
_{0}},s\in \lbrack l_{1},l_{2}]\right\} .
\end{equation*}%
We want to prove an uniform strong law of large numbers (USLLN) for $%
\mathcal{G}$. Sufficient conditions for the theorem to hold are given in 
\citet{PollardLibro}. In particular, it is established that if for each $%
\varepsilon >0$, there exists a finite family $\mathcal{F}_{\varepsilon }$
satisfying 
\begin{equation*}
\forall g\in \mathcal{G}\ \ \exists \ \ f_{1},f_{2}\in \mathcal{F}%
_{\varepsilon },\mbox{ such that }f_{1}\leq g\leq f_{2}\mbox{ and also }%
\mathbb{E}(f_{2}-f_{1})\leq \varepsilon ,
\end{equation*}%
then, family $\mathcal{G}$ has a USLLN.

Given $\varepsilon >0$, we show a $\mathcal{F}_\varepsilon$ that satisfies
this property.

First, because uniform continuity of $\rho $ function, we can choose $\delta
>0$ such that

\begin{equation}
|\rho \left( y+\delta \right) -\rho \left( y\right) |<\frac{1}{3}\varepsilon
\quad \quad \forall y\in \mathbb{R}  \label{detallando1}
\end{equation}%
Let $\delta _{1}$ be such that $0<\delta _{1}\leq {l_{1}}\delta $, as $%
\overline{B}\subseteq \mathbb{R}^{p}$ is a closed ball, it is possible to
take a finite subset $\mathcal{J}$ whose elements are $\mathbf{a}_{j}\in 
\overline{B}$, con $1\leq j\leq N$, $\mathcal{J}=\{\mathbf{a}_{1},\dots ,%
\mathbf{a}_{N}\}$, such that $\forall \mathbf{x}\in \overline{B},\exists
j:\Vert \mathbf{x}-\mathbf{a}_{j}\Vert <\delta _{1}$. Let $0<\delta
_{2}<\varepsilon {l_{1}}/(3C_{\psi })$ be a real number, where $C_{\psi
}=\sup_{u\in \mathbb{R}}\psi (u)u$ is a positive number for the $\rho $
function considered here. Consider the partition $[l_{1},l_{2}]=\cup
_{i=1}^{M}[s_{i-1},s_{i}]$ with interval length $[s_{i-1},s_{i}]$ less than $%
\delta _{2}$. Family $\mathcal{F}_{\varepsilon }$ will be 
\begin{equation}
\mathcal{F}_{\varepsilon }=\left\{ \rho \left( \frac{d(\mathbf{x},\mathcal{A}%
^{\prime })}{s_{i-1}}\pm \delta \right) :\mathcal{A}^{\prime }\subseteq 
\mathcal{J},\#(A^{\prime })\leq k,1\leq i\leq M\right\} .
\end{equation}

Now set an element of $\mathcal{G}$ indexed by $\mathcal{A}$ and $s$, then $%
s\in \lbrack s_{i-1},s_{i}]$, for some $i$, and also there exists a $%
\mathcal{A}^{\prime }\subseteq \mathcal{J}$, such that $d_{H}(\mathcal{A},%
\mathcal{A}^{\prime })\leq \delta _{1}.$ Then

\begin{equation}
\frac{d(\mathbf{x},\mathcal{A})}{s}\leq \frac{d(\mathbf{x},\mathcal{A}%
^{\prime })+\delta _{1}}{s_{i-1}}\leq \frac{d(\mathbf{x},\mathcal{A}^{\prime
})}{s_{i-1}}+\frac{\delta _{1}}{l_{1}},  \label{desAAprime1}
\end{equation}%
as consequence,

\begin{equation}
\frac{d(\mathbf{x},\mathcal{A})}{s}\geq \frac{d(\mathbf{x},\mathcal{A}%
^{\prime })-\delta _{1}}{s_{i}}\geq \frac{d(\mathbf{x},\mathcal{A}^{\prime })%
}{s_{i}}-\frac{\delta _{1}}{l_{2}}\geq \frac{d(\mathbf{x},\mathcal{A}%
^{\prime })}{s_{i}}-\frac{\delta _{1}}{l_{1}},  \label{desAAprime2}
\end{equation}%
Provided $\delta _{1}/l_{1}\leq \delta $, and using the $\rho $ monotonicity
at (\ref{desAAprime1}) and (\ref{desAAprime2}), we get

\begin{equation*}
f_{1}(\mathbf{x})\leq \rho \left( \frac{d(\mathbf{x},\mathcal{A})}{s}\right)
\leq f_{2}(\mathbf{x}),
\end{equation*}%
where 
\begin{equation*}
f_{1}(\mathbf{x})=\rho \left( \frac{d(\mathbf{x},\mathcal{A}^{\prime })}{%
s_{i}}-\delta \right) ,\quad \text{and}\quad f_{2}(\mathbf{x})=\rho \left( 
\frac{d(\mathbf{x},\mathcal{A}^{\prime })}{s_{i-1}}+\delta \right) .
\end{equation*}

It remain to see that choices made on $\delta $, $\delta _{1}$ and $\delta
_{2}$ , imply $\mathbb{E}\left( f_{2}-f_{1}\right) \leq \varepsilon $.
Indeed, 
\begin{equation*}
\begin{array}{rcl}
\mathbb{E}\left( f_{2}-f_{1}\right) & = & \mathbb{E}\left( \rho \left( \frac{%
d(\mathbf{x},\mathcal{A}^{\prime })}{s_{i}}+\delta \right) -\rho \left( 
\frac{d(\mathbf{x},\mathcal{A}^{\prime })}{s_{i-1}}-\delta \right) \right)
\\ 
&  &  \\ 
& = & \mathbb{E}\left( \rho \left( \frac{d(\mathbf{x},\mathcal{A}^{\prime })%
}{s_{i}}+\delta \right) -\rho \left( \frac{d(\mathbf{x},\mathcal{A}^{\prime
})}{s_{i}}\right) \right) +\mathbb{E}\left( \rho \left( \frac{d(\mathbf{x},%
\mathcal{A}^{\prime })}{s_{i-1}}\right) -\rho \left( \frac{d(\mathbf{x},%
\mathcal{A}^{\prime })}{s_{i-1}}-\delta \right) \right) + \\ 
&  &  \\ 
& + & \mathbb{E}\left( \rho \left( \frac{d(\mathbf{x},\mathcal{A}^{\prime })%
}{s_{i}}\right) -\rho \left( \frac{d(\mathbf{x},\mathcal{A}^{\prime })}{%
s_{i-1}}\right) \right) .%
\end{array}%
\end{equation*}%
Applying inequality (\ref{detallando1}) with $y={d(\mathbf{x},\mathcal{A}%
^{\prime })}/{s_{i}}$ in their first term, and with $y={d(\mathbf{x},%
\mathcal{A}^{\prime })}/{s_{i-1}}$ in the second, the absolute value of the
first two terms are less than ${2\varepsilon }/{3}$. To bound the last one,
by middle value theorem, there exists $\xi _{i}^{\ast }\in \lbrack
s_{i-1},s_{i}]$ satisfying 
\begin{equation*}
\rho \left( \frac{d(\mathbf{x},\mathcal{A}^{\prime })}{s_{i}}\right) -\rho
\left( \frac{d(\mathbf{x},\mathcal{A}^{\prime })}{s_{i-1}}\right) =-\psi
\left( \frac{d(\mathbf{x},\mathcal{A}^{\prime })}{\xi ^{\ast }}\right) \frac{%
d(\mathbf{x},\mathcal{A}^{\prime })}{\xi _{i}^{\ast }}\frac{1}{\xi
_{i}^{\ast }}(s_{i}-s_{i-1}).
\end{equation*}%
As $\psi (u)u$ is bounded by $C_{\psi }$ it turns out 
\begin{equation*}
\mathbb{E}\left( \left\vert \rho \left( \frac{d(\mathbf{x},\mathcal{A}%
^{\prime })}{s_{i}}\right) -\rho \left( \frac{d(\mathbf{x},\mathcal{A}%
^{\prime })}{s_{i-1}}\right) \right\vert \right) \leq \frac{C_{\psi }}{l_{1}}%
\delta _{2}.
\end{equation*}%
Then, as $\delta _{2}$ was chosen in such way that $\delta _{2}\leq {%
l_{1}\varepsilon }/({3C_{\psi }})$, ${\varepsilon }/{3}$ is a properly bound
for the previous term. Thus, we get 
\begin{equation*}
\mathbb{E}\left( f_{2}-f_{1}\right) \leq \frac{2}{3}\varepsilon +\frac{1}{3}%
\varepsilon =\varepsilon
\end{equation*}
\end{proof}
\begin{lemma}
{\textit{(Uniform Strong Law of Large Numbers)}} \label{ConvUnifacop}
Consider the set $\boldsymbol{\mathcal{K}}_0=\{ \mathcal{A}\subseteq 
\overline{B}, \#\mathcal{A} \leq K\}$, where $\overline{B}$ is a closed ball
in $\mathbb{R}^p$. Suppose A.1) and $(\mathbf{x}_{n})_{n\in\mathbb{N}}$ are $%
i.i.d^{\prime}s$, then: 
\begin{equation}
\lim_{n\rightarrow\infty}\sup_{\mathcal{A}\in\boldsymbol{\mathcal{K}}%
_0}\left \vert \tau(\mathcal{A},F)-\tau(\mathcal{A},F_{n})\right\vert =0\ \
\ a.s.  \label{tauunif}
\end{equation}
\end{lemma}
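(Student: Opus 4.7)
The plan is to deduce this uniform law from two applications of Lemma \ref{ConvUnifDesacop} — one with $\rho_1$, for the inner $M$-scale, and one with $\rho_2$, for the outer averaging that defines $\tau$. Writing
$$\tau^{2}(\mathcal{A},F_{n})=M(\mathcal{A},F_{n})^{2}\,\frac{1}{n}\sum_{i=1}^{n}\rho_{2}\!\left(\frac{d(\mathbf{x}_{i},\mathcal{A})}{M(\mathcal{A},F_{n})}\right),$$
and similarly for $F$, the strategy is first to establish $\sup_{\mathcal{A}\in\boldsymbol{\mathcal{K}}_{0}}|M(\mathcal{A},F_{n})-M(\mathcal{A},F)|\to 0$ almost surely, then to substitute this into the $\rho_{2}$-average and finish by a triangle inequality.

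First I would pin $M(\cdot,F)$ into a compact positive interval. Since $\boldsymbol{\mathcal{K}}_{0}$ is compact in the Hausdorff metric and $\varphi_{M}=M(\cdot,F)$ is continuous on it by Lemma \ref{lemaContinuidad}, there exist $0<l_{1}\leq l_{2}<\infty$ with $M(\mathcal{A},F)\in[l_{1},l_{2}]$ for every $\mathcal{A}\in\boldsymbol{\mathcal{K}}_{0}$; the strict positivity $l_{1}>0$ invokes assumption A.1 via the argument used for uniqueness in Lemma \ref{unicaSolucionPob}. Applying Lemma \ref{ConvUnifDesacop} with $\rho=\rho_{1}$ on the enlarged interval $[l_{1}/2,2l_{2}]$ gives
$$\sup_{\mathcal{A}\in\boldsymbol{\mathcal{K}}_{0},\,s\in[l_{1}/2,2l_{2}]}\left|\frac{1}{n}\sum_{i=1}^{n}\rho_{1}\!\left(\frac{d(\mathbf{x}_{i},\mathcal{A})}{s}\right)-E_{F}\!\left(\rho_{1}\!\left(\frac{d(\mathbf{x},\mathcal{A})}{s}\right)\right)\right|\to 0\ \text{a.s.}$$
Combining this with strict monotonicity of $s\mapsto E_{F}(\rho_{1}(d(\mathbf{x},\mathcal{A})/s))$ — again via Lemma \ref{unicaSolucionPob} — and with the uniform gap from $1/2$ at the endpoints $s=l_{1}/2$ and $s=2l_{2}$, one concludes that $M(\mathcal{A},F_{n})\in[l_{1}/2,2l_{2}]$ eventually and that $\sup_{\mathcal{A}}|M(\mathcal{A},F_{n})-M(\mathcal{A},F)|\to 0$ almost surely.

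For the tau scale I would split
$$|\tau^{2}(\mathcal{A},F_{n})-\tau^{2}(\mathcal{A},F)|\leq T_{1}(\mathcal{A})+T_{2}(\mathcal{A}),$$
where $T_{1}(\mathcal{A})$ fixes the scale at $M(\mathcal{A},F_{n})$ and replaces the empirical $\rho_{2}$-average by its $F$-expectation, while $T_{2}(\mathcal{A})$ compares the two purely population-level quantities obtained by letting $s$ equal $M(\mathcal{A},F_{n})$ or $M(\mathcal{A},F)$. The first term vanishes uniformly by Lemma \ref{ConvUnifDesacop} applied with $\rho=\rho_{2}$ on $[l_{1}/2,2l_{2}]$, evaluated at the data-dependent $s=M(\mathcal{A},F_{n})$ which lies in that interval eventually. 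The second vanishes uniformly because $(s,\mathcal{A})\mapsto s^{2}E_{F}(\rho_{2}(d(\mathbf{x},\mathcal{A})/s))$ is continuous on the compact set $[l_{1}/2,2l_{2}]\times\boldsymbol{\mathcal{K}}_{0}$, hence uniformly continuous, and $M(\cdot,F_{n})\to M(\cdot,F)$ uniformly. Passing from convergence of $\tau^{2}$ to convergence of $\tau$ is then a one-line application of the Lipschitzness of $\sqrt{\cdot}$ away from zero, using the uniform lower bound $l_{1}>0$.

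The main obstacle is the middle step: converting uniform convergence of the defining $M$-scale integrals into uniform convergence of their implicitly defined roots. A pointwise argument via the intermediate value theorem is easy, but getting a modulus of continuity for the inverse that is independent of $\mathcal{A}$ requires the strict monotonicity from Lemma \ref{unicaSolucionPob} to be quantitative and uniform in $\mathcal{A}$; this is exactly what the compactness of $\boldsymbol{\mathcal{K}}_{0}$, continuity of $\varphi_{M}$, and the uniform lower bound $l_{1}>0$ jointly supply.
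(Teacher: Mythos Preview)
Your approach is essentially the same as the paper's. For the $M$-scale step the paper is somewhat more concrete: rather than invoking an abstract ``uniform quantitative monotonicity,'' it evaluates the population integral at the data-independent perturbed scales $M(\mathcal{A},F)\pm\varepsilon$, defines $g_{1}(\mathcal{A})=E_F\big(\rho_{1}(d(\mathbf{x},\mathcal{A})/(M(\mathcal{A},F)-\varepsilon))\big)$ and $g_{2}$ analogously, and uses continuity on the compact $\boldsymbol{\mathcal{K}}_{0}$ to get $\inf_{\mathcal{A}}g_{1}>1/2>\sup_{\mathcal{A}}g_{2}$; Lemma~\ref{ConvUnifDesacop} then transfers these strict inequalities to the empirical sums, and the intermediate value theorem traps $M(\mathcal{A},F_{n})$ in $(M(\mathcal{A},F)-\varepsilon,M(\mathcal{A},F)+\varepsilon)$ uniformly. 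This is exactly the ``quantitative, uniform-in-$\mathcal{A}$'' inverse you flag as the main obstacle, made explicit. Notably, the paper's written proof actually \emph{stops} once $\sup_{\mathcal{A}}|M(\mathcal{A},F_{n})-M(\mathcal{A},F)|<\varepsilon$ is established and never carries out the passage to $\tau$; your $T_{1}+T_{2}$ decomposition with a second application of Lemma~\ref{ConvUnifDesacop} for $\rho_{2}$ fills that gap correctly.
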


\begin{proof}
Consider $h_{1}=\inf_{\mathcal{A}\in \boldsymbol{\mathcal{K}}_{0}}M(\mathcal{%
A},F)$ and $h_{2}=\sup_{\mathcal{A}\in \boldsymbol{\mathcal{K}}_{0}}M(%
\mathcal{A},F)$. Take $\varepsilon <{h_{1}}/{2}$. Define $g_{1}$ and $g_{2}$
as follows

\begin{equation}
g_{1}(\mathcal{A})=\mathbb{E}\left( \rho _{1}\left( \frac{d(\mathbf{x},%
\mathcal{A})}{M(\mathcal{A},F)-\varepsilon }\right) \right)
\end{equation}%
and%
\begin{equation}
g_{2}(\mathcal{A})=\mathbb{E}\left( \rho _{1}\left( \frac{d(\mathbf{x},%
\mathcal{A})}{M(\mathcal{A},F)+\varepsilon }\right) \right) .
\end{equation}%
As $g_{1}$ and $g_{2}$ are continuous functions regarding to $d_{H}$, and $%
\boldsymbol{\mathcal{K}}_{0}$ is a compact set under $d_{H}$, minimum and
maximum are achieved at $\mathcal{A}_{1}$ and $\mathcal{A}_{2}$
respectively, whose values are $g_{\ell }(\mathcal{A}_{\ell })=d_{\ell
},\ell =1,2$. Define $d_{1}$ and $d_{2}$ as $\inf_{\mathcal{A}\in 
\boldsymbol{\mathcal{K}}_{0}}g_{1}(\mathcal{A})=d_{1}>{1}/{2}$, and $\sup_{%
\mathcal{A}\in \boldsymbol{\mathcal{K}}_{0}}g_{2}(\mathcal{A})=d_{2}<{1}/{2}$%
.

Taking a real number $\delta $, $0<\delta \leq (d_{1}-{1}/{2})$ and $%
0<\delta \leq ({1}/{2}-d_{2})$, and through Lemma \ref{ConvUnifDesacop}, $%
\exists \Omega ^{\prime },\mathbb{P}(\Omega ^{\prime })=1$, satisfying $%
\forall \omega \in \Omega ^{\prime }\exists n_{0}=n_{0}(\omega )$ such that
for all $n>n_{0}$

\begin{equation*}
\sup_{\mathcal{A} \in \boldsymbol{\mathcal{K}}_0, s \in [\frac{h_1}{2},
h_2]} \left| \frac{1}{n}\sum_{i=1}^{n}\rho_1\left(\frac{d(\mathbf{x}_{i},%
\mathcal{A})}{s}\right)- \mathbb{E}\left(\rho_1\left(\frac{d(\mathbf{x},%
\mathcal{A})}{s}\right)\right) \right| < \frac {\delta}{2},
\end{equation*}
where $(\mathbf{x}_i^{(\omega)})_{i\in \mathbb{N}}$ depends on $\omega$, but
we write $(\mathbf{x}_i)_{i\in \mathbb{N}}$ for short.

Suppose that sequence $\mathcal{A}\in \boldsymbol{\mathcal{K}}_{0}$, then $M(%
\mathcal{A},F)-\varepsilon \in (h_{1}-\varepsilon ,h_{2}-\varepsilon
)\subseteq \lbrack {h_{1}}/{2},h_{2}]$, thus

%%%%%%%%%%%%%%%%%%%%%%%%%%%%%%%%%%%%%%%%%%%%%%%%%%%%%%%%%%%%%%%%%%%%%%%%%
\begin{equation*}
\sup_{\mathcal{A}\in \boldsymbol{\mathcal{K}}_{0}}\left\vert \frac{1}{n}%
\sum_{i=1}^{n}\rho _{1}\left( \frac{d(\mathbf{x}_{i},\mathcal{A})}{M(%
\mathcal{A},F)-\varepsilon }\right) -\mathbb{E}\left( \rho _{1}\left( \frac{%
d(\mathbf{x},\mathcal{A})}{M(\mathcal{A},F)-\varepsilon }\right) \right)
\right\vert <\frac{\delta }{2}.
\end{equation*}%
Then, the follow inequality is valid for all $\mathcal{A}\in \boldsymbol{%
\mathcal{K}}_{0}$, 
\begin{equation*}
\frac{1}{n}\sum_{i=1}^{n}\rho _{1}\left( \frac{d(\mathbf{x}_{i},\mathcal{A})%
}{M(\mathcal{A},F)-\varepsilon }\right) >\mathbb{E}\left( \rho _{1}\left( 
\frac{d(\mathbf{x},\mathcal{A})}{M(\mathcal{A},F)-\varepsilon }\right)
\right) -\frac{\delta }{2}.
\end{equation*}%
So, by taking infimum at the right hand side 
\begin{equation*}
\frac{1}{n}\sum_{i=1}^{n}\rho _{1}\left( \frac{d(\mathbf{x}_{i},\mathcal{A})%
}{M(\mathcal{A},F)-\varepsilon }\right) \geq \inf_{\mathcal{A}\in 
\boldsymbol{\mathcal{K}}_{0}}\mathbb{E}\left( \rho _{1}\left( \frac{d(%
\mathbf{x},\mathcal{A})}{M(\mathcal{A},F)-\varepsilon }\right) \right) -%
\frac{\delta }{2},
\end{equation*}%
therefore 
\begin{equation*}
\ \inf_{\mathcal{A}\in \boldsymbol{\mathcal{K}}_{0}}\frac{1}{n}%
\sum_{i=1}^{n}\rho _{1}\left( \frac{d(\mathbf{x}_{i},\mathcal{A})}{M(%
\mathcal{A},F)-\varepsilon }\right) \geq \inf_{\mathcal{A}\in \boldsymbol{%
\mathcal{K}}_{0}}g_{1}(\mathcal{A})-\frac{\delta }{2}=d_{1}-\frac{\delta }{2}%
\geq \frac{1}{2}+\frac{\delta }{2}.
\end{equation*}%
Then we obtain 
\begin{equation}
\inf_{\mathcal{A}\in \boldsymbol{\mathcal{K}}_{0}}\frac{1}{n}%
\sum_{i=1}^{n}\rho _{1}\left( \frac{d(\mathbf{x}_{i},\mathcal{A})}{M(%
\mathcal{A},F)-\varepsilon }\right) \geq \frac{1}{2}+\frac{\delta }{2}.
\label{inf1}
\end{equation}%
Analogously 
\begin{equation}
\sup_{\mathcal{A}\in \boldsymbol{\mathcal{K}}_{0}}\frac{1}{n}%
\sum_{i=1}^{n}\rho _{1}\left( \frac{d(\mathbf{x}_{i},\mathcal{A})}{M(%
\mathcal{A},F)+\varepsilon }\right) \leq \frac{1}{2}-\frac{\delta }{2}
\label{sup1}
\end{equation}%
Now consider the function at the $t$ variable, 
\begin{equation*}
H_{\mathcal{A}}^{n}(t)=\frac{1}{n}\sum_{i=1}^{n}\rho _{1}\left( \frac{d(%
\mathbf{x}_{i},\mathcal{A})}{t}\right) ,
\end{equation*}%
By using (\ref{inf1}) and (\ref{sup1}), by intermediate values theorem, there exists $t^{\ast }\in I$, such that $H_{\mathcal{A}}^{n}(t^{\ast })=%
\frac{1}{2}$, where $I=(M(\mathcal{A},F)-\varepsilon ,M(\mathcal{A}%
,F)+\varepsilon )$, thus, by the uniqueness of $M$ scale, $t^{\ast }=M(%
\mathcal{A},F_{n})$, and we get that 
\begin{equation*}
|M(\mathcal{A},F_{n})-M(\mathcal{A},F)|<\varepsilon .
\end{equation*}%
As $\varepsilon $ is an upper bound, non dependent of $\mathcal{A}$, it is
possible to take supremum over $\mathcal{A}\in \boldsymbol{\mathcal{K}}_{0}$%
, an we obtain 
\begin{equation*}
\sup_{\mathcal{A}\in \boldsymbol{\mathcal{K}}_{0}}|M(\mathcal{A},F_{n})-M(%
\mathcal{A},F)|<\varepsilon
\end{equation*}
\end{proof}

The following lemma says that if $u_{1}<u_{2}$ are two positive numbers
separated from each other, the only way for $\rho (u_{2})-\rho(u_{1})$ to be
arbitrary small, is that $\rho (u_{1})$ to be arbitrary close to one. 
%%%%%%%%%%%%%%%%%%%%%%%%%%%%%%%%%%%%%%%%%%%%%%%%%%%%%%%%%%%%%%%%%%%%%%%%%%%%

\begin{lemma}
%%%%%%%%%%%%%%%%%%%%%%%%%%%%%%%%%%%%%%%%%%%%%%%%%%%%%%%%%%%%%%%%%%%%%%%%%%%%
\label{propiedadderho} Let $\rho $ be a $\rho$-function strictly increasing
in the interval $[0,q)$ and $\rho (u)=1$ for $u\geq q$. Let $u\geq \alpha >0$
and $\Delta \geq t>0$ be real positive numbers, then $\forall \kappa \in
(0,1)$ $\exists \gamma =\gamma (\kappa ,\alpha ,t)$ such that 
\begin{equation*}
\rho (u+\Delta )-\rho (u)\leq \gamma \Rightarrow \rho (u)>1-\kappa .
\end{equation*}
%%%%%%%%%%%%%%%%%%%%%%%%%%%%%%%%%%%%%%%%%%%%%%%%%%%%%%%%%%%%%%%%%%%%%%%%%%%%
\end{lemma}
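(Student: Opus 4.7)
The plan is to prove the contrapositive: given $\kappa\in(0,1)$, $\alpha>0$, and $t>0$, I will exhibit a constant $\gamma=\gamma(\kappa,\alpha,t)>0$ such that whenever $u\ge\alpha$ satisfies $\rho(u)\le 1-\kappa$, we have $\rho(u+\Delta)-\rho(u)>\gamma$ for every $\Delta\ge t$. Equivalently, $\rho(u+\Delta)-\rho(u)\le\gamma$ forces $\rho(u)>1-\kappa$, which is the conclusion.

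The first step is to describe the set of ``bad'' values of $u$. Since $\rho$ is continuous, strictly increasing on $[0,q)$, and identically $1$ on $[q,\infty)$, the set $K_\kappa:=\{u\ge\alpha:\rho(u)\le 1-\kappa\}$ is either empty (in which case there is nothing to prove, and any $\gamma>0$ works) or a compact interval $[\alpha,u_\kappa]$, where $u_\kappa<q$ is the unique solution of $\rho(u_\kappa)=1-\kappa$.

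Next, I reduce to the worst case $\Delta=t$. Since $\rho$ is non-decreasing and $\Delta\ge t$, we have
\begin{equation*}
\rho(u+\Delta)-\rho(u)\;\ge\;\rho(u+t)-\rho(u)\;=:\;g(u).
\end{equation*}
The function $g$ is continuous on $[\alpha,u_\kappa]$ and I claim it is strictly positive there. Indeed, if $u\in K_\kappa$, then $u\le u_\kappa<q$, and two subcases arise: if $u+t<q$ as well, then the strict monotonicity of $\rho$ on $[0,q)$ yields $\rho(u+t)>\rho(u)$; if instead $u+t\ge q$, then $\rho(u+t)=1>1-\kappa\ge\rho(u)$. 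Either way $g(u)>0$.

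Finally, by compactness of $K_\kappa$, $g$ attains its infimum $\gamma_0:=\min_{u\in K_\kappa} g(u)>0$, and this infimum depends only on $\kappa$, $\alpha$, $t$ (through $u_\kappa$, $\alpha$, and $t$). Setting $\gamma:=\gamma_0/2$ finishes the argument: if $u\in K_\kappa$, then $\rho(u+\Delta)-\rho(u)\ge g(u)\ge\gamma_0>\gamma$; taking contrapositives gives exactly the implication stated. I do not expect any real obstacle here — the only care is in handling the split between $u+t<q$ and $u+t\ge q$ when verifying $g>0$, and in noting that $K_\kappa$ is bounded away from $q$ (so that $g$ is bounded away from $0$) precisely because $\rho$ saturates only at $q$.
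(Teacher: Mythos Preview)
Your proof is correct and follows essentially the same strategy as the paper's: both define the infimum of $\rho(u+\Delta)-\rho(u)$ over the set where $\rho(u)\le 1-\kappa$, show it is strictly positive, and take $\gamma$ equal to half of it. The one noteworthy difference is that you first reduce to the worst case $\Delta=t$ via monotonicity of $\rho$, which makes the relevant domain the compact interval $[\alpha,u_\kappa]$ and lets you invoke the extreme value theorem directly; the paper instead takes the infimum over the non-compact set $[\alpha,\rho^{-1}(1-\kappa)]\times[t,\infty)$ and argues by contradiction with sequences, splitting into the cases $\Delta_n$ bounded versus unbounded. Your reduction is a clean shortcut that avoids that case split.
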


\begin{proof}
Let $\kappa \in (0,1)$ consider 
\begin{equation*}
\ell =\inf \{\rho (u+\Delta )-\rho (u):\ \ \alpha \leq u\leq \rho
^{-1}(1-\kappa ),\ \ \Delta \geq t\}.
\end{equation*}%
If $\alpha >\rho ^{-1}(1-\kappa )$ we have $\rho (u)>1-\kappa $ regardless
the choice of $\gamma $, so there is nothing to prove. We can suppose $\alpha
\leq \rho ^{-1}(1-\kappa )$, the subset who is being taken infimum is a non
empty set, and lower bounded by $0$. Assume that $\ell >0$, take $\gamma
=\ell /2$, we will see that the Lemma holds, Let $u\geq \alpha >0$ and $%
\Delta \geq t>0$ two numbers such that $\rho _{1}(u+\Delta )-\rho
_{1}(u)\leq \gamma =\ell /2$, then if $u\leq \rho ^{-1}(1-\kappa )$, there
would be an element belonging to the set considered but also lower than the
infimum, that is a contradiction, therefore $u>\rho ^{-1}(1-\kappa )$, so $%
\rho (u)>1-\kappa $.

To finish the proof, we shall see that $\ell >0$, if $\ell =0$ take $\rho
(u_{n}+\Delta _{n})-\rho (u_{n})\rightarrow \ell =0$. First of all suppose
that $(\Delta _{n})_{n\in \mathbb{N}}$ it is not bounded, then we can find
subsequences such that $\Delta _{n_{j}}\rightarrow \infty $ and $%
u_{n_{j}}\rightarrow u^{\ast }\in \lbrack \alpha ,\rho ^{-1}(1-\kappa )]$.
So, 
\begin{equation*}
0=\ell =\lim_{j\rightarrow \infty }\rho (u_{n_{j}}+\Delta _{n_{j}})-\rho
(u_{n_{j}})=1-\rho (u^{\ast })\geq 1-(1-\kappa )=\kappa >0,
\end{equation*}%
then a contradiction is caused because $\ell =0$ and $(\Delta _{n})_{n\in 
\mathbb{N}}$ is non bounded. Now, we will see that neither occurs $(\Delta
_{n})_{n\in \mathbb{N}}$ is bounded and $\ell =0$. If so, choose
subsequences $\Delta _{n_{j}}\rightarrow \Delta ^{\ast }\geq t$ and $%
u_{n_{j}}\rightarrow u^{\ast }<\rho ^{-1}(1-\kappa )$, then 
\begin{equation*}
0=\ell =\lim_{j\rightarrow \infty }\rho (u_{n_{j}}+\Delta _{n_{j}})-\rho
(u_{n_{j}})=\rho (u^{\ast }+\Delta ^{\ast })-\rho (u^{\ast }).
\end{equation*}%
Thus, $\rho (u^{\ast }+\Delta ^{\ast })=\rho (u^{\ast })$, this can happen
in two ways

\begin{enumerate}
\item Both $u^{\ast }$ and $u^{\ast }+\Delta ^{\ast }$ are in an interval
where $\rho $ is constant, that is impossible because $u^{\ast }\leq \rho
^{-1}(1-\kappa ).$

\item $u^{\ast }=u^{\ast }+\Delta ^{\ast }$, then $\Delta ^{\ast }=0$ which
is absurd since $\Delta ^{\ast }\geq t>0$.
\end{enumerate}

Finally, the case $\ell=0$ have been discarded, that it was what we wanted
to prove
\end{proof}

%\subsection{ Lema \protect\ref{LemaPollardhcero}}

\begin{lemma}
%%%%%%%%%%%%%%%%%%%%%%%%%%%%%%%%%%%%%%%%%%%%%%%%%%%%%%%%%%%%%%%%%%%%%%%%%%%%%%%%%%%%%%%%%%%%%%%%%%%%%%%%%%%
\label{LemaPollardhcero} Define 
\begin{equation*}
\mathcal{B}_{\mathcal{A}}(\delta )=\left\{ \mathbf{x}:d(\mathbf{x},\mathcal{A%
})\leq \delta \right\} ,
\end{equation*}%
where $\mathbf{x}\in \mathbb{R}^{p}$, $\mathcal{A}\subseteq \mathbb{R}^{p}$
is a set of at most $K$ points, and $d(\mathbf{x},\mathcal{A})=\min_{%
\boldsymbol{\mu }\in \mathcal{A}}\Vert \mathbf{x}-\boldsymbol{\mu }\Vert $,
then, (i)%
\begin{equation*}
\lim_{n\rightarrow \infty }\sup_{\mathcal{A},\#\mathcal{A}\leq K}\left\vert 
\mathbb{P}_{n}(\mathcal{B}_{\mathcal{A}}(\delta ))-\mathbb{P}(\mathcal{B}_{%
\mathcal{A}}(\delta ))\right\vert =0\text{ \ a.s.}
\end{equation*}%
and (ii)%
\begin{equation*}
\lim_{n\rightarrow \infty }\inf_{\mathcal{A},\#\mathcal{A}\leq K}\mathbb{P}%
_{n}(\mathcal{B}_{\mathcal{A}}(\delta )^{c})=\inf_{\mathcal{A},\#\mathcal{A}%
\leq K}\mathbb{P}(\mathcal{B}_{\mathcal{A}}(\delta )^{c})\text{ \ a.s.}
\end{equation*}%
where, for a finite sample of size $n$, $\mathbb{P}_{n}(\mathcal{C})=\#(\{%
\mathbf{x}_{i}\in \mathcal{C}\})/n$.

%%%%%%%%%%%%%%%%%%%%%%%%%%%%%%%%%%%%%%%%%%%%%%%%%%%%%%%%%%%%%%%%%%%%%%%%%%%%%%%%%%%%%%%%%%%%%%%%%%%%%%%%%%%%
\end{lemma}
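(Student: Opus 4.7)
The plan is to recognize each set $\mathcal{B}_{\mathcal{A}}(\delta)=\bigcup_{\boldsymbol{\mu}\in\mathcal{A}}\{\mathbf{x}:\|\mathbf{x}-\boldsymbol{\mu}\|\leq\delta\}$ as a union of at most $K$ closed Euclidean balls of common radius $\delta$, so the indexing family
\[
\mathcal{C}_K=\{\mathcal{B}_{\mathcal{A}}(\delta):\#\mathcal{A}\leq K\}
\]
is a Vapnik--Chervonenkis (VC) class of sets. This yields (i) directly via a uniform Glivenko--Cantelli theorem, and (ii) will follow from (i) by an elementary complement-and-sup argument.

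For part (i) I would first verify the VC property. Each ball $\{\mathbf{x}:\|\mathbf{x}-\boldsymbol{\mu}\|\leq\delta\}$ is the sublevel set $\{\mathbf{x}:\|\mathbf{x}\|^2-2\boldsymbol{\mu}^T\mathbf{x}+\|\boldsymbol{\mu}\|^2-\delta^2\leq 0\}$, i.e.\ a sublevel set of a function drawn from the $(p+1)$-dimensional linear space spanned by $\|\mathbf{x}\|^2$, the coordinates of $\mathbf{x}$, and the constants. Classical results (see \citet{PollardLibro}, Chapter II) show such a family has finite VC dimension, and that finite unions of $K$ sets drawn from a VC class again form a VC class. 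Applying the VC uniform strong law of large numbers from \citet{PollardLibro} then gives
\[
\sup_{C\in\mathcal{C}_K}\bigl|\mathbb{P}_n(C)-\mathbb{P}(C)\bigr|\to 0\quad\text{a.s.},
\]
which is exactly the statement of (i).

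For part (ii), the identity $\mathbb{P}_n(\mathcal{B}_{\mathcal{A}}(\delta)^c)-\mathbb{P}(\mathcal{B}_{\mathcal{A}}(\delta)^c)=\mathbb{P}(\mathcal{B}_{\mathcal{A}}(\delta))-\mathbb{P}_n(\mathcal{B}_{\mathcal{A}}(\delta))$ transfers the uniform bound in (i) to the complementary family. Combined with the elementary inequality $|\inf_\mathcal{A} f_n(\mathcal{A})-\inf_\mathcal{A} f(\mathcal{A})|\leq\sup_\mathcal{A}|f_n(\mathcal{A})-f(\mathcal{A})|$, this gives the claimed convergence of the infima.

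The main obstacle is that (i) rests on VC theory, which is external to the paper. If a self-contained proof in the spirit of Lemma \ref{ConvUnifDesacop} is preferred, the alternative is a bracketing argument: pick $R$ large enough that the $F$-mass outside $\overline{B}(0,R)$ is below $\varepsilon$, cover the compact set of centers inside $\overline{B}(0,R+\delta)$ by a finite $\eta$-net $\{\boldsymbol{\mu}'_1,\dots,\boldsymbol{\mu}'_N\}$, and sandwich each indicator $\mathbf{1}_{\{\|\mathbf{x}-\boldsymbol{\mu}\|\leq\delta\}}$ between $\mathbf{1}_{\{\|\mathbf{x}-\boldsymbol{\mu}'\|\leq\delta-\eta\}}$ and $\mathbf{1}_{\{\|\mathbf{x}-\boldsymbol{\mu}'\|\leq\delta+\eta\}}$ for the nearest $\boldsymbol{\mu}'$ in the net; union over the at most $K$ centers in $\mathcal{A}$ and apply Pollard's bracketing theorem. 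The delicate step in that route is uniformly controlling the expected gap between inner and outer brackets, i.e.\ the probability of the thin shell $\{|\|\mathbf{x}-\boldsymbol{\mu}\|-\delta|\leq\eta\}$, which without continuity hypotheses on $F$ requires squeezing the shell between nested balls and invoking monotone convergence on a fixed family before taking the net.
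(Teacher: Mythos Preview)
Your proposal is correct and follows essentially the same route as the paper: both write $\mathcal{B}_{\mathcal{A}}(\delta)$ as a union of at most $K$ closed balls of fixed radius, observe that each ball is the sublevel set $\{\mathbf{x}:\|\mathbf{x}-\boldsymbol{\mu}\|^2-\delta^2\leq 0\}$ of a function lying in a finite-dimensional vector space of polynomials, and then invoke the VC/polynomial-class results from \citet{PollardLibro} (the paper cites Theorem~14 and Lemmas~15 and~18 there) to obtain the uniform strong law for (i); part (ii) is then deduced from (i), and your complement-and-sup argument makes explicit what the paper simply asserts as ``derived directly from (i).'' The alternative bracketing sketch you give is extra and not needed.
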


\begin{proof}
%%%%%%%%%%%%%%%%%%%%%%%%%%%%%%%%%%%%%%%%%%%%%%%%%%%%%%
%%%%%%%%%%%%%%%%%ESPAÑOL%%%%%%%%%%%%%%%%%%%%%%%%%%%%%%
%%%%%%%%%%%%%%%%%%%%%%%%%%%%%%%%%%%%%%%%%%%%%%%%%%%%%%

First, we note that $\mathcal{B}_{\mathcal{A}}(\delta )$ can be written as

\begin{equation*}
\mathcal{B}_{\mathcal{A}}(\delta )=\cup _{\boldsymbol{\mu }\in \mathcal{A}}%
\overline{B}(\boldsymbol{\mu },\delta ),
\end{equation*}%
Therefore, defining the collection of sets $\mathcal{G}$ formed by sets of
at most $K$ closed balls of radius $\delta $ 
\begin{equation*}
\mathcal{G}^{K}=\left\{ \cup _{\boldsymbol{\mu }\in \mathcal{A}}\overline{B}(%
\boldsymbol{\mu },\delta ):\mathcal{A}\subseteq \#\mathcal{A}\leq K\right\} ,
\end{equation*}%
\'{\i}tem (i) of Lemma is expressed as Strong Law of Large Numbers in the
following way: 
\begin{equation*}
\lim_{n\rightarrow \infty }\sup_{g\in \mathcal{G}^{K}}\left\vert \mathbb{P}%
_{n}(g)-\mathbb{P}(g)\right\vert =0\text{ a.s.}
\end{equation*}

We rewrite the class as $\mathcal{G}^{K}=\{\cup _{j=1}^{k_{1}}g_{j}:g_{j}\in 
\mathcal{H},0\leq k_{1}\leq K\}$, where $\mathcal{H}$ is the class of closed
balls centered at $\boldsymbol{\mu }$ with fixed radius $\delta $. From
Theorem (14), and Lemmas (15) and (18) de \citet{PollardLibro}, it is
possible to see that if for each element $g_{i}$ of class $\mathcal{H}$
there exists a function $h\in \mathbb{V}$, such that $g_{j}=\{\mathbf{x}:h(%
\mathbf{x})\leq 0\}$, with $\mathbb{V}$ a function vector space with finite
dimension. Then it is valid an Uniform Strong Law of Large Numbers for $%
\mathcal{G}^{K}$.

Let $g_{j}\in \mathcal{H}$ be an element from $\mathcal{H}$, we will see
that there exists a function $h$ like was described previously. Let $g_{j}=\{%
\mathbf{x}\in B(\boldsymbol{\mu _{j}},\delta )\}$, then, choosing $h(\mathbf{%
x})=\Vert \mathbf{x}-\boldsymbol{\mu }_{j}\Vert ^{2}-\delta ^{2}$, all $%
g_{j}\in \mathcal{H}$ is $g_{j}=\{\mathbf{x}:h(\mathbf{x})\leq 0\}$ with $h$
a multivariate polynomial of degree at most $2$, given that polynomials are
a vector space of finite dimension, it is proven (i). Part (ii) of Lemma is
derived directly from (i).
\end{proof}

%\subsection{Lema \protect\ref{escalaAcotada}}

\begin{lemma}
%%%%%%%%%%%%%%%%%%%%%%%%%%%%%%%%%%%%%%%%%%%%%%%%%%%%%%%%%%%%%%%%%%%%%%%%%%
%%%%%%%%%%%%%%%%%%%%          SPANISH   %%%%%%%%%%%%%%%%%%%%%%%%%%%%%%%%%%
%%%%%%%%%%%%%%%%%%%%%%%%%%%%%%%%%%%%%%%%%%%%%%%%%%%%%%%%%%%%%%%%%%%%%%%%%%%
%\label{escalaAcotada} Sea una sucesi\'{o}n $\mathcal{A}_{n}$ de conjuntos de 
%$k$-elementos cualesquiera, tales que existe $m^{\ast }>0$ para el cual el
%conjunto 
%\begin{equation*}
%\Omega _{0}=\{\omega \in \Omega :\lim \sup M(\mathcal{A}_{n},F_{n}^{(\omega
%)})<m^{\ast }\}
%\end{equation*}%
%satisface $\mathbb{P}(\Omega _{0})=1$. Entonces existe $R_{1}$ y $\Omega
%^{\prime }\subseteq \Omega _{0}$ de probabilidad $1$ cumpliendo 
%\begin{equation*}
%\forall \omega \in \Omega ^{\prime }\text{ }\exists n_{0}(\omega ):si\ n\geq
%n_{0}\quad \Rightarrow \mathcal{A}_{n}\cap B(R_{1})\neq \emptyset .
%\end{equation*}
%%%%%%%%%%%%%%%%%%%%%%%%%%%%%%%%%%%%%%%%%%%%%%%%%%%%%%%%%%%%%%%%%%%%%%%%%%

%%%%%%%%%%%%%%%%%%%%%%%%%%%%%%%%%%%%%%%%%%%%%%%%%%%%%%%%%%%%%%%%%%%%%%%%%%
%%%%%%%%%%%%%%%%%%%%   ENGLISH    %%%%%%%%%%%%%%%%%%%%%%%%%%%%%%%%%%%%%%
%%%%%%%%%%%%%%%%%%%%%%%%%%%%%%%%%%%%%%%%%%%%%%%%%%%%%%%%%%%%%%%%%%%%%%%%%%
\label{escalaAcotada} Let $\mathcal{A}_{n}$ a sequence of sets of $K$
centers, such that there exists $m^{\ast }>0$ for that the set 
\begin{equation*}
\Omega _{0}=\{\omega \in \Omega :\lim \sup M(\mathcal{A}_{n},F_{n}^{(\omega
)})<m^{\ast }\}
\end{equation*}%
has $\mathbb{P}(\Omega _{0})=1$. Then there exists $R_{1}$ and $\Omega
^{\prime }\subseteq \Omega _{0}$ of probability one satisfying 
\begin{equation}
\forall \omega \in \Omega ^{\prime }\text{ }\exists n_{0}(\omega ):si\ n\geq
n_{0}\quad \Rightarrow \mathcal{A}_{n}\cap B(R_{1})\neq \emptyset .
\label{desired}
\end{equation}

%%%%%%%%%%%%%%%%%%%%%%%%%%%%%%%%%%%%%%%%%%%%%%%%%%%%%%%%%%%%%%%%%%%%%%%%%%
%%%%%%%%%%%%%%%%%%%%%%%%%%%%%%%%%%%%%%%%%%%%%%%%%%%%%%%%%%%%%%%%%%%%%%
\end{lemma}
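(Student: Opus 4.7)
The plan is to argue by contradiction: if no center of $\mathcal{A}_n$ were to lie in a fixed large ball $B(R_1)$, then too many sample points would sit far from $\mathcal{A}_n$ relative to the scale $s_n = M(\mathcal{A}_n, F_n)$, forcing $\rho_1(d_i/s_n)$ to saturate at $1$ on more than half of the sample. This would be incompatible with the defining equation $\frac{1}{n}\sum_i \rho_1(d_i/s_n) = 1/2$ of the $M$-scale. Concretely, let $q$ denote the value at which $\rho_1$ saturates, i.e. $\rho_1(t) = 1$ for $t \geq q$ (which exists for the $\rho$-functions used in the paper, e.g.\ the smooth hard-rejection loss).

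First, since $F$ is a probability measure on $\mathbb{R}^p$, I can pick $r > 0$ large enough that $\mathbb{P}(B(r)) > 1/2 + \varepsilon$ for some fixed $\varepsilon > 0$; then set $R_1 = qm^* + r$. Let $\Omega'$ be the intersection of $\Omega_0$ with the almost-sure event on which $\#\{i\le n:\mathbf{x}_i\in B(r)\}/n \to \mathbb{P}(B(r))$ (available from the ordinary SLLN, or as a very special case of Lemma \ref{LemaPollardhcero}); then $\mathbb{P}(\Omega') = 1$. Fix $\omega \in \Omega'$. From $\limsup_n M(\mathcal{A}_n, F_n^{(\omega)}) < m^*$ there exist $\eta > 0$ and $n_0$ with $s_n < m^* - \eta$ for all $n \geq n_0$; enlarging $n_0$, also $\#\{i\le n:\mathbf{x}_i\in B(r)\}/n > 1/2 + \varepsilon/2$.

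Now suppose for contradiction that, along some subsequence $n_k \to \infty$, $\mathcal{A}_{n_k} \cap B(R_1) = \emptyset$. Then every $\boldsymbol{\mu} \in \mathcal{A}_{n_k}$ satisfies $\|\boldsymbol{\mu}\| > R_1$, so for every $\mathbf{x}_i \in B(r)$ one has $d(\mathbf{x}_i, \mathcal{A}_{n_k}) \geq R_1 - r = qm^*$, and therefore
\begin{equation*}
\frac{d(\mathbf{x}_i, \mathcal{A}_{n_k})}{s_{n_k}} \;>\; \frac{qm^*}{m^*} \;=\; q,
\end{equation*}
which forces $\rho_1(d(\mathbf{x}_i, \mathcal{A}_{n_k})/s_{n_k}) = 1$ for every such $i$. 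Summing only over $\{i : \mathbf{x}_i \in B(r)\}$ gives
\begin{equation*}
\frac{1}{n_k}\sum_{i=1}^{n_k} \rho_1\!\left(\frac{d(\mathbf{x}_i,\mathcal{A}_{n_k})}{s_{n_k}}\right) \;\geq\; \frac{\#\{i\le n_k:\mathbf{x}_i\in B(r)\}}{n_k} \;>\; \frac{1}{2}+\frac{\varepsilon}{2},
\end{equation*}
contradicting the $M$-scale equation $\frac{1}{n_k}\sum_i \rho_1(d_i/s_{n_k}) = 1/2$. Hence no such subsequence can exist, i.e.\ eventually $\mathcal{A}_n \cap B(R_1) \neq \emptyset$, which is (\ref{desired}).

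The main obstacle is simply the bookkeeping around the strict inequality $\limsup s_n < m^*$: one has to convert it into a uniform bound $s_n < m^* - \eta$ for large $n$ in order to obtain the strict inequality $d_i/s_{n_k} > q$ needed for $\rho_1$ to saturate. Everything else is either a calibration of $R_1$ (done once, deterministically, from $r$, $m^*$, and $q$) or a one-set law of large numbers, so no uniformity in $\mathcal{A}$ is actually required here.
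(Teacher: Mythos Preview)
Your proof is correct and follows essentially the same strategy as the paper's: argue by contradiction that if all centers escape a fixed large ball along a subsequence, then the sample points inside a smaller ball are uniformly far from $\mathcal{A}_{n_k}$, which forces $\frac{1}{n_k}\sum_i\rho_1(d_i/s_{n_k})$ strictly above $1/2$, contradicting the defining equation of the $M$-scale. The only cosmetic difference is in the calibration of constants: the paper picks $R'$ with $R'\ge \rho_1^{-1}(3/4)\,m^\ast$ and $\mathbb{P}(B(R'))>2/3$, sets $R_1=2R'$, and bounds $\rho_1(d_i/s_{n_k})\ge 3/4$ on $B(R')$, whereas you push $\rho_1$ all the way to its saturation value $1$ by taking $R_1=qm^\ast+r$; both choices yield the same contradiction.
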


\begin{proof}
Take a real number $R^{\prime }$ satisfying simultaneously (a) $R^{\prime
}\geq \rho _{1}^{-1}(3/4)m^{\ast }$, and (b) $\mathbb{P}(B(R^{\prime }))>2/3$%
. Define $\Omega _{1}$ the set of probability $1$, where $\mathbb{P}%
_{n}(B(R^{\prime }))\rightarrow \mathbb{P}(B(R^{\prime }))$. Let $%
R_{1}=2R^{\prime }$, we will show that thesis lemma occurs in the set $%
\Omega _{0}\cap \Omega _{1}$. We define the set $\Omega ^{\ast }$ by denying
what we want to prove 
\begin{equation*}
\Omega ^{\ast }=\{\omega \in \Omega _{0}:\forall n\quad \exists n_{0}(\omega
)\geq n\quad \mbox{satisfying}\mathcal{A}_{n_{0}}\cap B(R_{1})=\emptyset \},
\end{equation*}%
\begin{equation*}
(\Omega _{0}\cap \Omega _{1})=(\Omega _{0}\cap \Omega _{1}\cap \Omega ^{\ast
})\cup (\Omega _{0}\cap \Omega _{1}\cap {\Omega ^{\ast }}^{c}).
\end{equation*}%
We will see that $(\Omega _{0}\cap \Omega _{1}\cap \Omega ^{\ast
})=\emptyset $, therefore defining $\Omega ^{\prime }=(\Omega _{0}\cap
\Omega _{1}\cap {\Omega ^{\ast }}^{c})$, we have that $\mathbb{P}(\Omega
^{\prime })=1$, $\Omega ^{\prime }$ is the set that satisfies the desired
property (\ref{desired}).

Proof of $(\Omega _{0}\cap \Omega _{1}\cap \Omega ^{\ast })=\emptyset $:
Suppose that $\omega \in (\Omega _{0}\cap \Omega _{1}\cap \Omega ^{\ast })$,
then given that $\omega \in \Omega ^{\ast }$, there exists a subsequence $%
\mathcal{A}_{n_{j}}$, such that $\mathcal{A}_{n_{j}}\cap B(R_{1})=\emptyset $%
. In particular, if $\boldsymbol{\mu }\in \mathcal{A}_{n_{j}}$, then $\Vert 
\boldsymbol{\mu }\Vert \geq R_{1}=2R^{\prime }$. Let $\mathbf{x}_{i}\in
B(R^{\prime })$ be a fixed point, then distance $d(\mathbf{x}_{i},\mathcal{A}%
_{n_{j}})$ it is achieved for some $\boldsymbol{\mu }_{n_{j}}^{i}$.
Therefore, 
\begin{equation*}
d(\mathbf{x}_{i},\mathcal{A}_{n_{j}})=\Vert \mathbf{x}_{i}-\boldsymbol{\mu }%
_{n_{j}}^{i}\Vert \geq \Vert \boldsymbol{\mu }_{n_{j}}^{i}\Vert -\Vert 
\mathbf{x}_{i}\Vert \geq 2R^{\prime }-R^{\prime }=R^{\prime }.
\end{equation*}%
By mean of this inequality and monotonousness of $\rho _{1}$, obtain 
\begin{equation*}
\frac{1}{2}=\frac{1}{n_{j}}\sum_{i=1}^{n_{j}}\rho _{1}\left( \frac{d(\mathbf{%
x}_{i},\mathcal{A}_{n_{j}})}{M(\mathcal{A}_{n_{j}},F_{n_{j}})}\right) \geq 
\frac{1}{n_{j}}\sum_{\mathbf{x}_{i}\in B(R^{\prime })}\rho _{1}\left( \frac{%
R^{\prime }}{M(\mathcal{A}_{n_{j}},F_{n_{j}})}\right) .
\end{equation*}%
The Lemma Hypothesis, enable us to take $j_{0}$ such that $M(\mathcal{A}%
_{n_{j}},F_{n_{j}})<m^{\ast }\ \ \forall j>j_{0}$, then, for $j>j_{0}$ we
get 
\begin{equation*}
\frac{1}{2}\geq \frac{1}{n_{j}}\sum_{\mathbf{x}_{i}\in B(R^{\prime })}\rho
_{1}\left( \frac{R^{\prime }}{M(\mathcal{A}_{n_{j}},F_{n_{j}})}\right) \geq 
\frac{1}{n_{j}}\sum_{\mathbf{x}_{i}\in B(R^{\prime })}\rho _{1}\left( \frac{%
R^{\prime }}{m^{\ast }}\right) .
\end{equation*}%
Applying condition $(a)$ over $R^{\prime }$ we obtain 
\begin{equation*}
\frac{1}{2}\geq \frac{1}{n_{j}}\sum_{\mathbf{x}_{i}\in B(R^{\prime })}\frac{3%
}{4}=\frac{3}{4}\mathbb{P}_{n_{j}}(B(R^{\prime })).
\end{equation*}%
As $\omega $ is such that $\mathbb{P}_{n_{j}}(B(R^{\prime }))\rightarrow 
\mathbb{P}(B(R^{\prime }))$ when $j\rightarrow \infty $, taking limit at the
inequality above we arrive to ${1}/{2}\geq {3}\mathbb{P}(B(R^{\prime }))/{4}$%
, but condition (b) at $R^{\prime }$ implies $\mathbb{P}(B(R^{\prime }))>{2}/%
{3}$, therefore 
\begin{equation*}
\frac{1}{2}\geq \frac{3}{4}\mathbb{P}(B(R^{\prime }))>\frac{3}{4}\frac{2}{3}=%
\frac{1}{2},
\end{equation*}%
which is absurd. 
%%%%%%%%%%%%%%%%%%%%%%%%%%%%%%%%%%%%%%%%%%%%%%%%%%%%%%%%%%%%%%%%%%%%%%%%%%
\end{proof}

The following Lemma expresses that if only centers belonging in a ball are
considered, scale estimator changes just a little provided the ball is big
enough.

\begin{lemma}
\label{lemmaLimiteDoble} Let $\mathcal{A}_{n}$ be a sequence of $k$-points,
such that $M(\mathcal{A}_{n},F_{n})$ is bounded in the sense of Lemma \ref%
{escalaAcotada}. Let $B(R)\subseteq \mathbb{R}^{p}$ be the ball centered at $%
0$ with radius $R$. It define $\mathcal{A}_{n}^{R}=\mathcal{A}_{n}\cap B(R)$%
. Besides, suppose $(\mathbf{x}_{n})_{n\in \mathbb{N}}$ are $i.i.d^{\prime
}s $. Then, 
\begin{equation}
\forall \varepsilon >0\quad \exists R\quad \mathbb{P}\left( \{\omega
:\limsup_{n}\ M(\mathcal{A}_{n}^{R},F_{n})-M(\mathcal{A}_{n},F_{n})<\epsilon
\}\right) =1  \label{LimiteDobleM}
\end{equation}

and 
\begin{equation}
\forall \varepsilon >0\quad \exists R\quad \mathbb{P}\left( \{\omega
:\limsup_{n}\ \tau (\mathcal{A}_{n}^{R},F_{n})-\tau (\mathcal{A}%
_{n},F_{n})<\epsilon \}\right) =1.  \label{LimiteDobleTau}
\end{equation}
\end{lemma}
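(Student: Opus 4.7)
The plan is to establish (\ref{LimiteDobleM}) first and then derive (\ref{LimiteDobleTau}) from it. Write $m_n = M(\mathcal{A}_n, F_n)$, $m_n^R = M(\mathcal{A}_n^R, F_n)$, $d_i = d(\mathbf{x}_i, \mathcal{A}_n)$, $d_i^R = d(\mathbf{x}_i, \mathcal{A}_n^R)$. Since $\mathcal{A}_n^R \subseteq \mathcal{A}_n$ we have $d_i^R \geq d_i$, and the $M$-scale defining equation forces $m_n^R \geq m_n$; only $m_n^R < m_n + \varepsilon$ needs to be shown on a full-probability event.

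The first step is geometric. Apply Lemma \ref{escalaAcotada} to obtain $R_1$ and an event $\Omega'$ of probability one on which $\mathcal{A}_n \cap B(R_1) \neq \emptyset$ for all large $n$. Fix $R > R_1$ and set $R' := (R - R_1)/2$. For any $\mathbf{x}_i \in B(R')$ and any $\mu_0 \in \mathcal{A}_n \cap B(R_1)$, $d_i \leq \|\mathbf{x}_i\| + R_1 \leq R - R'$; if the nearest center of $\mathcal{A}_n$ to $\mathbf{x}_i$ lay outside $B(R)$ we would have $d_i > R - R'$, a contradiction. Thus $d_i^R = d_i$ on $I_1 := \{i : \mathbf{x}_i \in B(R')\}$. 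Subtracting the defining equations of $m_n$ and $m_n^R$ and using $\rho_1 \leq 1$ on $I_2 := I_1^c$ gives
\begin{equation*}
\frac{1}{n}\sum_{i \in I_1} \bigl[\rho_1(d_i/m_n) - \rho_1(d_i/m_n^R)\bigr] = \frac{1}{n}\sum_{i \in I_2} \bigl[\rho_1(d_i^R/m_n^R) - \rho_1(d_i/m_n)\bigr] \leq \mathbb{P}_n(B(R')^c).
\end{equation*}

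The mean-value theorem applied to $s \mapsto \rho_1(d_i/s)$ on $[m_n, m_n^R]$ converts the left-hand side into
\begin{equation*}
(m_n^R - m_n) \cdot \frac{1}{n}\sum_{i \in I_1} \psi_1(d_i/\xi_i)\,\frac{d_i}{\xi_i^2}, \qquad \xi_i \in [m_n, m_n^R].
\end{equation*}
If the coefficient of $(m_n^R - m_n)$ is bounded below by a positive constant $c > 0$ uniformly in $n$ on $\Omega'$, then $m_n^R - m_n \leq \mathbb{P}_n(B(R')^c)/c$, which can be made as small as desired by choosing $R$ large (so $\mathbb{P}(B(R')^c)$ is small) and invoking Lemma \ref{LemaPollardhcero} to control $\mathbb{P}_n(B(R')^c)$. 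The uniform lower bound on this ``Fisher-information-like'' quantity is the heart of the proof: at the population level it equals $-\tfrac{d}{ds} E_F \rho_1(d(\mathbf{x}, \mathcal{A})/s)|_{s = M(\mathcal{A}, F)}$, which is positive by the strict monotonicity from Lemma \ref{unicaSolucionPob} (using A.1), and therefore bounded below uniformly over the compact class $\boldsymbol{\mathcal{K}}_0 = \{\mathcal{A} \subseteq \overline{B(R)} : \#\mathcal{A} \leq K\}$ by the continuity in Lemma \ref{lemaContinuidad}. The uniform SLLN of Lemma \ref{ConvUnifDesacop} then transfers the lower bound to the empirical average, using that $\xi_i \in [m_n, m_n^R] \subseteq [l_1, m^* + \varepsilon]$ eventually and that $\mathcal{A}_n^R \in \boldsymbol{\mathcal{K}}_0$. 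This completes (\ref{LimiteDobleM}).

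Statement (\ref{LimiteDobleTau}) follows from (\ref{LimiteDobleM}) via the identity $\tau^2(\mathcal{A}, F_n) = M(\mathcal{A}, F_n)^2 \cdot \tfrac{1}{n}\sum_i \rho_2(d(\mathbf{x}_i, \mathcal{A})/M(\mathcal{A}, F_n))$: the same $I_1/I_2$ splitting makes the $\rho_2$-integrand agree on $I_1$, while the $I_2$ contribution is controlled by $\mathbb{P}_n(B(R')^c)$ since $\rho_2 \leq 1$, and (\ref{LimiteDobleM}) handles the outer $M^2$ factor. The main obstacle throughout is the uniform positive lower bound on the Fisher-information-like coefficient, which is where assumption A.1 enters essentially (through Lemma \ref{unicaSolucionPob}) and where compactness, continuity, and uniform convergence must be combined carefully to prevent $c$ from degenerating as $\mathcal{A}_n^R$ varies.
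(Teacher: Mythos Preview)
Your geometric setup (the $I_1/I_2$ split showing $d_i^R=d_i$ on a large ball) and your derivation of~(\ref{LimiteDobleTau}) from~(\ref{LimiteDobleM}) are essentially the same as the paper's. The gap is in the core of your argument for~(\ref{LimiteDobleM}), namely the uniform positive lower bound on the ``Fisher-information-like'' coefficient
\[
\frac{1}{n}\sum_{i\in I_1}\psi_1\!\left(\frac{d_i}{\xi_i}\right)\frac{d_i}{\xi_i^{2}},\qquad \xi_i\in[m_n,m_n^R].
\]
Two problems arise. First, whether you apply the mean-value theorem termwise (yielding $i$-dependent $\xi_i$) or globally to $H_n(s)=\tfrac1n\sum_{I_1}\rho_1(d_i/s)$ (yielding a single $\xi$), the intermediate point lies in $[m_n,m_n^R]$, an interval you do not yet control: a priori $m_n^R$ may be far from $M(\mathcal{A}_n^R,F)$, so you cannot claim $\xi$ is near the point where the population derivative is known to be positive. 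Invoking the uniform SLLN to transfer a population lower bound requires $\xi$ to sit in a fixed compact interval on which that population quantity is bounded away from zero, but locating $\xi$ there presupposes exactly the smallness of $m_n^R-m_n$ you are trying to prove. Second, Lemma~\ref{unicaSolucionPob} gives uniqueness of the root of $H(s)=1/2$, not strict monotonicity of $H$; for a redescending $\psi_1$ the population derivative $-H'(s)=E[\psi_1(d/s)d/s^2]$ can vanish on subintervals of $[l_1,l_2]$ (whenever $d(\mathbf{x},\mathcal{A})/s$ falls outside the support of $\psi_1$ almost surely), so the infimum over $(\mathcal{A},s)\in\boldsymbol{\mathcal{K}}_0\times[l_1,l_2]$ need not be positive.

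The paper avoids this derivative-control problem entirely. It argues by contradiction: if $m_n^R-m_n\geq\beta$ along a subsequence, the same $I_1/I_2$ split gives
\[
\frac1n\sum_i\left[\rho_1\!\Big(\frac{d_i}{m_n}\Big)-\rho_1\!\Big(\frac{d_i}{m_n+\beta}\Big)\right]\leq \mathbb{P}_n\big(B(R/4)^c\big),
\]
and then invokes Lemma~\ref{propiedadderho} (a structural property of bounded, eventually-constant $\rho$): if $\rho_1(u+\Delta)-\rho_1(u)$ is small with $u\geq\alpha>0$ and $\Delta\geq t>0$, then $\rho_1(u)>1-\kappa_0$. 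Applying this on the set $\{d_i\geq\delta^*\}$, which by hypothesis~A.1 and Lemma~\ref{LemaPollardhcero} has empirical mass exceeding $1/2$, forces $\tfrac1n\sum_i\rho_1(d_i/m_n)>1/2$, contradicting the $M$-scale equation. This route never needs a lower bound on a derivative; the work is instead in the careful choice of the constants $\kappa_0,\gamma_0,\delta^*,R$. Your mean-value approach could perhaps be repaired by arguing directly on $H_n$ at the two specific scales $m_n$ and $m_n+\beta$ rather than at an uncontrolled intermediate point, but as written the lower-bound step is circular.
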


The proof of this Lemma will be done first for the M-scale, and second for
the $\tau $-scale. 
%%%%%%%%%%%%%%%%%%%%%%%%%%%%%%%%%%%%%%%%%%%%%%%%%%%%%%%%%%%%%%%%%%%%%%%%%%

\begin{proof}[ Proof of Lemma \protect\ref{lemmaLimiteDoble} for the M-scale]

From Hypothesis {A.1)}, by a compactness argument, we can obtain a positive
number $\delta ^{\ast }$ such that every collections of $k$ balls of radius $%
\delta ^{\ast }$, has probability less than $0.5$. More precisely, there
exists $\delta ^{\ast }>0$ such that if $\lambda $ is 
\begin{equation}
\lambda =\sup \ \left\{ \mathbb{P}(\mathbf{x}\in \cup _{\boldsymbol{\mu }\in 
\mathcal{A}}B(\boldsymbol{\mu },\delta ^{\ast })):\mathcal{A}\subseteq 
\mathbb{R}^{p},\#(\mathcal{A})\leq k\right\} ,  \label{seDesprendedeH1a}
\end{equation}%
then $\lambda <1/2.$ In turn, by using the identity of the following sets 
\begin{equation*}
\cap _{\boldsymbol{\mu }\in \mathcal{A}}B^{c}(\boldsymbol{\mu },\delta
^{\ast })=\{\mathbf{x}:d(\mathbf{x},\mathcal{A})\geq \delta ^{\ast }\},
\end{equation*}%
and taking probability (\ref{seDesprendedeH1a}) its equivalent to 
\begin{equation}
\frac{1}{2}<a=\inf \ \left\{ \mathbb{P}(\{\mathbf{x}:d(\mathbf{x},\mathcal{A}%
)\geq \delta ^{\ast }\}):\mathcal{A}\subseteq \mathbb{R}^{p},\#(\mathcal{A}%
)\leq k\right\} ,  \label{seDesprendedeH1}
\end{equation}
where $a=1-\lambda >1/2\ $.

We will demonstrate equation (\ref{LimiteDobleM}) of Lemma. Suppose the
opposite holds, then there exists $\beta >0$, such that the set 
\begin{equation}
\Omega ^{\prime }(R_{0})=\left\{ \omega \in \Omega :\limsup_{n}\ M({\mathcal{%
A}_{n}^{(\omega )}}^{R_{0}},F_{n}^{(\omega )})-M(\mathcal{A}_{n}^{(\omega
)},F_{n}^{(\omega )})\geq \beta \right\}  \label{omegaR}
\end{equation}%
has positive probability for all $R_{0}.$ On the other hand, consider $%
\delta ^{\ast }$ as equation (\ref{seDesprendedeH1}), then if 
\begin{equation}
\Omega _{2}^{\prime }=\left\{ \omega \in \Omega :\lim_{n\rightarrow \infty
}\inf_{\substack{ \mathcal{A}\subseteq \mathbb{R}^{p}  \\ \#(\mathcal{A}%
)\leq k}}\frac{1}{n}\#\left( \{i:d(\mathbf{x}_{i}^{(\omega )},\mathcal{A}%
)\geq \delta ^{\ast }\}\right) =\inf_{\substack{ \mathcal{A}\subseteq 
\mathbb{R}^{p},  \\ \#(\mathcal{A})\leq k}}\mathbb{P}(d(\mathbf{x},\mathcal{A%
})\geq \delta ^{\ast })\right\} ,  \label{Omega2}
\end{equation}%
by Lemma \ref{LemaPollardhcero}, $\mathbb{P}(\Omega _{2}^{\prime })=1$. This
will be used later.

To demonstrate the Lemma we will set the constants as follows. Let be $a>{1}/%
{2}$ from equation (\ref{seDesprendedeH1}), from limit inequality 
\begin{equation*}
\lim_{\kappa \rightarrow 0^{+}}(1-\kappa )(a(1-\kappa )-\kappa )=a>\frac{1}{2%
},
\end{equation*}%
we can choose $\kappa _{0}\in (0,1)$ such that 
\begin{equation}
(1-\kappa _{0})(a(1-\kappa _{0})-\kappa _{0})>\frac{1}{2}.
\label{eleccionKappa}
\end{equation}%
On the other hand, let $m^{\ast }$ be the bound for $\lim \sup
M(A_{n},F_{n})<m^{\ast }$. We apply Lemma \ref{propiedadderho} with
constants $t$ and $\alpha $ 
\begin{equation}
t=\frac{\beta \delta ^{\ast }}{m^{\ast }(m^{\ast }+\beta )}>0,\quad \alpha =%
\frac{\delta ^{\ast }}{m^{\ast }+\beta }.  \label{elecciontalpha}
\end{equation}%
Let $\kappa _{0}$ be defined in (\ref{eleccionKappa}), take $\gamma
_{0}:=\gamma (\kappa _{0})>0$ for which Lemma \ref{propiedadderho} holds,
that is 
\begin{equation}
\forall u\geq \alpha ,\Delta \geq t,\quad \mbox{tales que}\quad |\rho
(u+\Delta )-\rho (u)|<\gamma _{0}\quad \Rightarrow \quad \rho (u)>1-\kappa
_{0}.  \label{elecciongamma}
\end{equation}%
In turn, let $R$ be a real number big enough such that 
\begin{equation}
\mathbb{P}(B(R/4))>1-\kappa _{0}\gamma _{0}\quad y\quad R>2R_{1},
\label{eleccionConstR}
\end{equation}%
where $R_{1}$ is obtained from applying el Lemma \ref{escalaAcotada} for
this case. So we determine the subsets needed for the proof as follows, to $%
R $ defined in (\ref{eleccionConstR}), we take $R_{0}=R$ in the equation (%
\ref{omegaR}), so that $\Omega _{1}^{\prime }:=\Omega ^{\prime }(R)$. Let $%
\Omega _{3}^{\prime }$ be the set from Lemma \ref{escalaAcotada}, that is,
for all $\omega $ belonging to $\Omega _{3}^{\prime }$ there exists $%
n_{0}(\omega )$ such that for all $n\geq n_{0}$ is simultaneously valid: 
\begin{equation}
(i)M(\mathcal{A}_{n},F_{n})\leq m^{\ast }\quad \quad (ii)\mathcal{A}_{n}\cap
B(R)\neq \emptyset ,  \label{Omega3}
\end{equation}%
and let $\Omega _{4}^{\prime }$ be the set where $\mathbb{P}%
_{n}(B(R/4))\rightarrow \mathbb{P}(B(R/4))$. $\Omega _{i}^{\prime }$ with $%
i=2,3,4$ has probability $1$. Whereas we have assumed $\mathbb{P}(\Omega
_{1}^{\prime })>0$, then $\mathbb{P}(\Omega _{1}^{\prime }\cap \Omega
_{2}^{\prime }\cap \Omega _{3}^{\prime }\cap \Omega _{4}^{\prime })>0$.
Hence, it is possible to take $\omega $ in the intersection, that will keep
fixed throughout the development of the proof and with which we will arrive
to an absurd. For this $\omega $, 
\begin{equation*}
\limsup_{n}\ M({\mathcal{A}_{n}^{(\omega )}}^{R},F_{n}^{(\omega )})-M(%
\mathcal{A}_{n}^{(\omega )},F_{n}^{(\omega )})\geq \beta ,
\end{equation*}%
then, by considering subsequences, we can suppose that there exists an
infinite set $\mathbb{N}^{\prime }\subseteq \mathbb{N}$ such that 
\begin{equation*}
M(\mathcal{A}_{n}^{R},F_{n})-M(\mathcal{A}_{n},F_{n})\geq \beta ,\quad
\forall n\in \mathbb{N}^{\prime }
\end{equation*}%
operating, we get 
\begin{equation*}
\rho _{1}\left( \frac{d(\mathbf{x}_{i},\mathcal{A}_{n}^{R})}{M(\mathcal{A}%
_{n}^{R},F_{n})}\right) \leq \rho _{1}\left( \frac{d(\mathbf{x}_{i},\mathcal{%
A}_{n}^{R})}{M(\mathcal{A}_{n},F_{n})+\beta }\right)
\end{equation*}%
that implies, 
\begin{equation}
\frac{1}{n}\sum_{i=1}^{n}\rho _{1}\left( \frac{d(\mathbf{x}_{i},\mathcal{A}%
_{n}^{R})}{M(\mathcal{A}_{n}^{R},F_{n})}\right) \leq \frac{1}{n}%
\sum_{i=1}^{n}\rho _{1}\left( \frac{d(\mathbf{x}_{i},\mathcal{A}_{n}^{R})}{M(%
\mathcal{A}_{n},F_{n})+\beta }\right) .  \label{eqpivote}
\end{equation}%
As $\omega \in \Omega _{3},\exists \ n_{0}$ such that $\forall n>n_{0},n\in 
\mathbb{N}^{\prime }$, there are always be at least an element, say $%
\boldsymbol{\mu }_{1}^{n}$, in the ball of radius $R_{1}$, where $R_{1}$ is
given by Lemma \ref{escalaAcotada}. The choice of $R$ made in (\ref%
{eleccionConstR}) implies that $R_{1}<R/2$. Then, $\mathbf{x}\in B(R/4)$
verifies:%
\begin{equation*}
d(\mathbf{x},\mathcal{A}_{n})\leq \Vert \mathbf{x}-\boldsymbol{\mu }%
_{1}^{n}\Vert \leq \Vert \mathbf{x}\Vert +\Vert \boldsymbol{\mu }%
_{1}^{n}\Vert \leq R/4+R_{1}<R/4+R/2=3R/4.
\end{equation*}%
Besides, if $\boldsymbol{\mu }_{a}^{n}\in \mathcal{A}_{n}\setminus \mathcal{A%
}_{n}^{R}$, then $\Vert \mathbf{x}-\boldsymbol{\mu }_{a}^{n}\Vert \geq \Vert 
\boldsymbol{\mu }_{a}^{n}\Vert -\Vert \mathbf{x}\Vert \geq 3R/4$, therefore, 
$\mathbf{x}$ always be nearer from $\boldsymbol{\mu }_{1}^{n}$ than from any
other center outside the ball $B(R)$, and then $d(\mathbf{x},\mathcal{A}%
_{n})=d(\mathbf{x},\mathcal{A}_{n}^{R})$. So, we can rewrite the right hand
side of equation (\ref{eqpivote}), by dividing it into two summations
regarding the belonging of $\mathbf{x}_{i}$ to $B(R/4)$:%
\begin{align}
& \frac{1}{n}\sum_{\mathbf{x}_{i}\in B(R/4)}\rho _{1}\left( \frac{d(\mathbf{x%
}_{i},\mathcal{A}_{n}^{R})}{M(\mathcal{A}_{n},F_{n})+\beta }\right) & +& 
\frac{1}{n}\sum_{\mathbf{x}_{i}\notin B(R/4)}\rho _{1}\left( \frac{d(\mathbf{%
x}_{i},\mathcal{A}_{n}^{R})}{M(\mathcal{A}_{n},F_{n})+\beta }\right) \leq 
\notag \\
{}\leq & \frac{1}{n}\sum_{\mathbf{x}_{i}\in B(R/4)}\rho _{1}\left( \frac{d(%
\mathbf{x}_{i},{\mathcal{A}_{n}})}{M(\mathcal{A}_{n},F_{n})+\beta }\right) & 
+& \frac{\#(\{\mathbf{x}_{i}\notin B(R/4)\})}{n}\leq  \notag \\
{}\leq & \frac{1}{n}\sum_{{i=1}}^{{n}}\rho _{1}\left( \frac{d(\mathbf{x}_{i},%
\mathcal{A}_{n})}{M(\mathcal{A}_{n},F_{n})+\beta }\right) & +& \frac{\#(\{%
\mathbf{x}_{i}\notin B(R/4)\})}{n},  \label{quitarUnaR}
\end{align}%
combining equations (\ref{eqpivote}) and (\ref{quitarUnaR}), 
\begin{equation*}
\frac{1}{n}\sum_{i=1}^{n}\rho _{1}\left( \frac{d(\mathbf{x}_{i},\mathcal{A}%
_{n}^{R})}{M(\mathcal{A}_{n}^{R},F_{n})}\right) \leq \frac{1}{n}\sum_{{i=1}%
}^{{n}}\rho _{1}\left( \frac{d(\mathbf{x}_{i},\mathcal{A}_{n})}{M(\mathcal{A}%
_{n},F_{n})+\beta }\right) +\frac{\#(\{\mathbf{x}_{i}\notin B(R/4)\})}{n},
\end{equation*}%
denoting 
\begin{equation*}
\mathbb{P}_{n}(B(R/4)^{c})=\frac{\#(\{\mathbf{x}_{i}\notin B(R/4)\})}{n},
\end{equation*}%
Making a passage of terms to the left in the previous equation we obtain 
\begin{equation}
0\leq \frac{1}{n}\sum_{i=1}^{n}\rho _{1}\left( \frac{d(\mathbf{x}_{i},%
\mathcal{A}_{n})}{M(\mathcal{A}_{n},F_{n})}\right) -\rho _{1}\left( \frac{d(%
\mathbf{x}_{i},\mathcal{A}_{n})}{M(\mathcal{A}_{n},F_{n})+\beta }\right)
\leq \mathbb{P}_{n}(B(R/4)^{c}).  \label{eqdesigualdadPn}
\end{equation}%
Consider the sets 
\begin{equation*}
\mathcal{C}_{n}=\{\mathbf{x}_{i}:d(\mathbf{x}_{i},\mathcal{A}_{n})\geq
\delta ^{\ast }\}\ 
\end{equation*}%
and%
\begin{equation*}
\mathcal{D}_{n}(\gamma _{0})=\left\{ \mathbf{x}_{i}:\rho _{1}\left( \frac{d(%
\mathbf{x}_{i},\mathcal{A}_{n})}{M(\mathcal{A}_{n},F_{n})}\right) -\rho
_{1}\left( \frac{d(\mathbf{x}_{i},\mathcal{A}_{n})}{M(\mathcal{A}%
_{n},F_{n})+\beta }\right) \geq \gamma _{0}\right\} ,
\end{equation*}%
where $\delta ^{\ast }$ is the constant defined in (\ref{seDesprendedeH1})
and $\gamma _{0}>0$ is the choice corresponding to (\ref{elecciongamma}).
From equation (\ref{eqdesigualdadPn}) it can be deduced straightforwardly
that 
\begin{equation}
\mathbb{P}_{n}(\mathcal{D}_{n}(\gamma _{0}))\leq \frac{1}{\gamma _{0}}%
\mathbb{P}_{n}(B(R/4)^{c}).  \label{cotaParaDn}
\end{equation}%
On the other hand, values that were taken in (\ref{eleccionKappa}) and (\ref%
{elecciontalpha}) for determining $\gamma _{0}$, allow us to apply the Lemma %
\ref{propiedadderho}, and in this way we get 
\begin{equation}
\mathbf{x}_{i}\in \mathcal{C}_{n}\cap \mathcal{D}_{n}(\gamma _{0})^{c},\quad %
\mbox{entonces }\quad \rho _{1}\left( \frac{d(\mathbf{x}_{i},\mathcal{A}_{n})%
}{M(\mathcal{A}_{n},F_{n})}\right) >1-\kappa _{0}.  \label{costoUnHuevo}
\end{equation}%
Indeed, defining 
\begin{equation*}
u_{i}=\frac{d(\mathbf{x}_{i},\mathcal{A}_{n})}{M(\mathcal{A}%
_{n},F_{n})+\beta },
\end{equation*}%
thus, ${D}_{n}(\gamma _{0})^{c}$ can be represented as 
\begin{equation*}
{D}_{n}(\gamma _{0})^{c}=\left\{ \mathbf{x}_{i}:\rho _{1}(u_{i}+\Delta
_{i})-\rho (u_{i})\leq \gamma _{0}\right\} ,
\end{equation*}%
where 
\begin{equation*}
\Delta _{i}:=\frac{d(\mathbf{x}_{i},\mathcal{A}_{n})}{M(\mathcal{A}%
_{n},F_{n})}-\frac{d(\mathbf{x}_{i},\mathcal{A}_{n})}{M(\mathcal{A}%
_{n},F_{n})+\beta }
\end{equation*}%
To see that, it is possible to apply Lemma \ref{propiedadderho}, but first,
we must establish conditions that ensure $u_{i}\geq \alpha $ and $\Delta
_{i}\geq t$, where $\alpha $ and $t$ are defined in (\ref{elecciontalpha}).
Namely, for condition about $n$: $\omega \in \Omega _{3}$, then we have that 
$\lim \sup M(\mathcal{A}_{n},F_{n})<m^{\ast }$. So we can take $n_{1}>n_{0}$%
, from which $M(\mathcal{A}_{n},F_{n})\leq m^{\ast }$. On the other hand,
let $\mathbf{x}_{i}$ be a point from $\mathcal{C}_{n}$, then $d(\mathbf{x}%
_{i},\mathcal{A}_{n})\geq \delta ^{\ast }$. It is easy to see that 
\begin{equation*}
u_{i}\geq \frac{\delta ^{\ast }}{(m^{\ast }+\beta )}=\alpha .
\end{equation*}%
By last, condition on $\Delta _{i}$ follows from a straight forward
computation, 
\begin{equation*}
\Delta _{i}\geq \frac{\beta \delta ^{\ast }}{m^{\ast }(m^{\ast }+\beta )}=t.
\end{equation*}%
In this way, we apply the Lemma \ref{propiedadderho} for each $u_{i}$ under
conditions: $\mathbf{x}_{i}\in \mathcal{C}_{n}\cap {D}_{n}(\gamma _{0})^{c}$%
, $n\in \mathbb{N}^{\prime },$ and $n>n_{1}$. We obtain $\rho
_{1}(u_{i})>1-\kappa _{0}$, that means, 
\begin{equation*}
\rho _{1}\left( \frac{d(\mathbf{x}_{i},\mathcal{A}_{n})}{M(\mathcal{A}%
_{n},F_{n})+\beta }\right) >1-\kappa _{0}.
\end{equation*}%
Finally, given the monotonicity of $\rho _{1}$, we have that 
\begin{equation*}
\rho _{1}\left( \frac{d(\mathbf{x}_{i},\mathcal{A}_{n})}{M(\mathcal{A}%
_{n},F_{n})}\right) >1-\kappa _{0}
\end{equation*}%
as we wanted to demonstrate. Hence, implication (\ref{costoUnHuevo}) has
been proved, and we use it in the following inequality 
\begin{equation*}
\frac{1}{2}=\frac{1}{n}\sum_{i=1}^{n}\rho _{1}\left( \frac{d(\mathbf{x}_{i},%
\mathcal{A}_{n})}{M(\mathcal{A}_{n},F_{n})}\right) \geq \frac{1}{n}\sum_{%
\mathbf{x}_{i}\in \mathcal{C}_{n}\cap \mathcal{D}_{n}(\gamma _{0})^{c}}\rho
_{1}\left( \frac{d(\mathbf{x}_{i},\mathcal{A}_{n})}{M(\mathcal{A}_{n},F_{n})}%
\right) \geq
\end{equation*}%
\begin{equation}
\geq \frac{1}{n}\sum_{\mathbf{x}_{i}\in \mathcal{C}_{n}\cap \mathcal{D}%
_{n}(\gamma _{0})^{c}}1-\kappa _{0}=(1-\kappa _{0})\mathbb{P}_{n}(\mathcal{C}%
_{n}\cap \mathcal{D}_{n}(\gamma _{0})^{c}).  \label{23}
\end{equation}%
Using $\mathbb{P}_{n}(\mathcal{C}_{n}\cap \mathcal{D}_{n}(\gamma
_{0})^{c})\geq \mathbb{P}_{n}(\mathcal{C}_{n})+\mathbb{P}_{n}(\mathcal{D}%
_{n}(\gamma _{0})^{c})-1$ in equation (\ref{23}), we obtain 
\begin{equation*}
\frac{1}{2}\geq (1-\kappa _{0})\left( \mathbb{P}_{n}(\mathcal{C}_{n})+%
\mathbb{P}_{n}(\mathcal{D}_{n}(\gamma _{0})^{c})-1\right) =(1-\kappa
_{0})\left( \mathbb{P}_{n}(\mathcal{C}_{n})-\mathbb{P}_{n}(\mathcal{D}%
_{n}(\gamma _{0}))\right) .
\end{equation*}%
From previous equation and equation (\ref{cotaParaDn}) we derive 
\begin{equation}
\frac{1}{2}\geq (1-\kappa _{0})(\mathbb{P}_{n}(\mathcal{C}_{n})-\frac{1}{%
\gamma _{0}}\mathbb{P}_{n}(B(R/4)^{c})).  \label{ultimaDesigualdad}
\end{equation}%
As $\omega \in \Omega _{2}^{\prime }\cap \Omega _{4}^{\prime }$, we can
choose, $n_{2}\in \mathbb{N}^{\prime },n_{2}>n_{1}$ such that 
%\begin{itemize}
%\item \#\#SACARbullets%
\begin{equation}
\mathbb{P}_{n_{2}}(\mathcal{C}_{n_{2}})\geq \inf_{\mathcal{A}}\frac{1}{n_{2}}%
\#\left( \{\mathbf{x}_{i}:d(\mathbf{x}_{i},\mathcal{A})\geq \delta ^{\ast
}\}\right) \geq a(1-\kappa _{0}),  \label{bul1}
\end{equation}%
where $\kappa _{0}\in (0,1)$ is defined in (\ref{eleccionKappa}). Indeed,
this is because that the empirical probability 
\begin{equation*}
\inf_{\mathcal{A}}\frac{1}{n}\#\left( \{\mathbf{x}_{i}:d(\mathbf{x}_{i},%
\mathcal{A})\geq \delta ^{\ast }\}\right)
\end{equation*}%
converges to the probability of it population version, that is, 
\begin{equation}
\lim_{n\rightarrow \infty }\inf_{\mathcal{A}}\frac{1}{n}\#\left( \{\mathbf{x}%
_{i}:d(\mathbf{x}_{i},\mathcal{A})\geq \delta ^{\ast }\}\right) =\inf_{%
\mathcal{A}}\mathbb{P}\left( \{\mathbf{x}:d(\mathbf{x}_{i},\mathcal{A})\geq
\delta ^{\ast }\}\right) =a>a(1-\kappa _{0}),  \label{whatweneed}
\end{equation}%
where $a$ is defined in (\ref{seDesprendedeH1}).

%We are in a situation where $\lim_{n\rightarrow
%\infty }b_{n}=a$, then for a given $n_{1}$, $\exists {n_{2}}>n_{1}$
%satisfying $b_{n_{2}}\geq a(1-\kappa _{0})$. 
From (\ref{whatweneed}), we obtain the existence of some $n_{2}$ fulfilling (%
\ref{bul1}). On the other hand, let's see that %\item 
\begin{equation}
\mathbb{P}_{n_{2}}(B(R/4)^{c})<\gamma _{0}\kappa _{0},  \label{bul2}
\end{equation}%
since $\omega \in \Omega _{4}^{\prime }$, the set where $\mathbb{P}%
_{n}(B(R/4))\rightarrow \mathbb{P}(B(R/4)),$ and in turn as is deducted from
(\ref{eleccionConstR}), $\mathbb{P}(B(R/4)^{c})<\gamma _{0}\kappa _{0}$. 
%\end{itemize}

Hence, applying inequalities (\ref{bul1}) and (\ref{bul2}) on (\ref%
{ultimaDesigualdad}), we have that 
\begin{equation*}
\frac{1}{2}\geq (1-\kappa _{0})(a(1-\kappa _{0})-\kappa _{0}),
\end{equation*}%
which contradicts the choice of $\kappa _{0}$ made in (\ref{eleccionKappa}).
This concludes the proof, because we arrive to an contradiction from
supposing that there are $R$ and $\varepsilon $ for which the thesis that
Lemma does not occur.
\end{proof}

\bigskip

\begin{proof}[Proof of Lemma \protect\ref{lemmaLimiteDoble} for the $\protect%
\tau $ -scale]
\label{pruebaLemaR} First of all, we will use the mean value Theorem for the
function $f(s)=s^{2}\rho _{2}(\frac{r}{s})$: 
\begin{equation}
\left\vert s_{2}^{2}\rho _{2}\left( \frac{r}{s_{2}}\right) -s_{1}^{2}\rho
_{2}\left( \frac{r}{s_{1}}\right) \right\vert \leq C_{0}\xi |s_{2}-s_{1}|,
\label{TVM}
\end{equation}%
where $C_{0}=\sup_{u\in \mathbb{R}}{\ |2\rho _{2}(u)-\psi _{2}(u)u|}$, and $%
\xi \in (s_{1},s_{2})$. Let $m^{\ast }=\lim \sup M(A_{n},F_{n})$. Let $%
\varepsilon >0$ be a small number, we take $R$ satisfying simultaneously, 
\begin{equation}
\mathbb{P}(B(R/4))>1-\frac{\varepsilon }{2}\quad ,\quad R>2R_{1},
\label{eleccionConstRTau}
\end{equation}%
where $R_{1}$ is defined in equation (\ref{eleccionConstR}). Besides, by
applying (\ref{LimiteDobleM}), we can choose $R$ that also satisfies 
\begin{equation}
\limsup_{n}\ M(\mathcal{A}_{n}^{R},F_{n})-M(\mathcal{A}_{n},F_{n})<\frac{%
\varepsilon }{2C_{0}(1+m^{\ast })}.  \label{eleccionConstRTau2}
\end{equation}%
Now, we demonstrate equation (\ref{LimiteDobleTau}) from Lemma, for this, 
\begin{equation*}
\tau ^{2}(\mathcal{A}_{n}^{R},F_{n})-\tau ^{2}(\mathcal{A}_{n},F_{n})=\frac{1%
}{n}\sum_{i=1}^{n}M(\mathcal{A}_{n}^{R},F_{n})^{2}\rho _{2}\left( \frac{d(%
\mathbf{x}_{i},\mathcal{A}_{n}^{R})}{M(\mathcal{A}_{n}^{R},F_{n})}\right) -M(%
\mathcal{A}_{n},F_{n})^{2}\rho _{2}\left( \frac{d(\mathbf{x}_{i},\mathcal{A}%
_{n})}{M(\mathcal{A}_{n},F_{n})}\right) .
\end{equation*}%
Reasoning analogously to equation (\ref{quitarUnaR}), the previous equation
becomes 
\begin{equation*}
\leq \frac{1}{n}\sum_{i=1}^{n}M(\mathcal{A}_{n}^{R},F_{n})^{2}\rho
_{2}\left( \frac{d(\mathbf{x}_{i},\mathcal{A}_{n})}{M(\mathcal{A}%
_{n}^{R},F_{n})}\right) -M(\mathcal{A}_{n},F_{n})^{2}\rho _{2}\left( \frac{d(%
\mathbf{x}_{i},\mathcal{A}_{n})}{M(\mathcal{A}_{n},F_{n})}\right) +\frac{%
\#(\{\mathbf{x}_{i}\notin B(R/4)\})}{n},
\end{equation*}%
by applying results given in (\ref{TVM}), we obtain 
\begin{equation*}
\leq \frac{1}{n}\sum_{i=1}^{n}C_{0}\xi _{i}|M(\mathcal{A}_{n}^{R},F_{n})-M(%
\mathcal{A}_{n},F_{n})|+\frac{\#(\{\mathbf{x}_{i}\notin B(R/4)\})}{n},
\end{equation*}%
where $\xi _{i}\in (M(\mathcal{A}_{n},F_{n}),M(\mathcal{A}_{n}^{R},F_{n}))$.
Then, $\xi _{i}\leq M(\mathcal{A}_{n}^{R},F_{n})$, and $M(\mathcal{A}%
_{n}^{R},F_{n})$ is bounded by $1+m^{\ast }$ since (\ref{LimiteDobleM}).
Thus, 
\begin{equation*}
\tau ^{2}(\mathcal{A}_{n}^{R},F_{n})-\tau ^{2}(\mathcal{A}_{n},F_{n})\leq
C_{0}(1+m^{\ast })|M(\mathcal{A}_{n}^{R},F_{n})-M(\mathcal{A}_{n},F_{n})|+%
\mathbb{P}_{n}(B(R/4)).
\end{equation*}%
Taking limsup at the right hand side, and using condition for $R$ in (\ref%
{eleccionConstRTau2}) and (\ref{eleccionConstRTau}) we get 
\begin{equation*}
\lim \sup C_{0}(1+m^{\ast })|M(\mathcal{A}_{n}^{R},F_{n})-M(\mathcal{A}%
_{n},F_{n})|+\mathbb{P}_{n}(B(R/4))<C_{0}(1+m^{\ast })\frac{\varepsilon }{%
2C_{0}(1+m^{\ast })}+\frac{\varepsilon }{2}=\varepsilon ,
\end{equation*}%
and therefore, there exists $n_{0}$ such that 
\begin{equation*}
\tau ^{2}(\mathcal{A}_{n}^{R},F_{n})-\tau ^{2}(\mathcal{A}_{n},F_{n})\leq
\varepsilon ,
\end{equation*}%
from this, the conclusion of Lemma can be easily obtained.
\end{proof}

\begin{lemma}
\label{lemaCompacto1} Let $(\mathbf{x}_{i})_{i\in \mathbb{N}}$ be i.i.d's,
defined on a probability space $(\Omega ,\mathcal{F},\mathbb{P})$. Let $%
\mathcal{A}_{n}$ be the corresponding optimal $\tau $-centers with a number
of centers less than or equal $K$ based on $\{\mathbf{x}_{1},\dots ,\mathbf{x%
}_{n}\}$. Under the general Hypothesis A.1) and A.2), exists a constant $C>0$
and a set $\Omega ^{\prime }\subseteq \Omega ,\mathbb{P}(\Omega ^{\prime
})=1 $, satisfying

\begin{equation}
\limsup_{n} d_H( \mathcal{A}_{n}^{(\omega)},\mathcal{A}_{0}) \leq C \ \ \
\forall \omega \in \Omega^{\prime }  \label{centroidesCompacto}
\end{equation}
\end{lemma}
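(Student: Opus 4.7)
The plan is to show that the sample centers $\mathcal{A}_n$ eventually lie in a fixed bounded region of $\mathbb{R}^p$, from which the Hausdorff bound follows because $\mathcal{A}_0$ is itself bounded. The starting point is the optimality of $\mathcal{A}_n$ as the sample $\tau$-minimizer: $\tau(\mathcal{A}_n,F_n)\le \tau(\mathcal{A}_0,F_n)$, combined with Lemma~\ref{ConvUnifacop} applied at $\mathcal{A}_0$, which gives $\tau(\mathcal{A}_0,F_n)\to \tau(\mathcal{A}_0,F)=:\tau_0$ almost surely, so that $\limsup_n \tau(\mathcal{A}_n,F_n)\le \tau_0$ on some probability-one set $\Omega_1$.

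The key and most delicate step is to show that $M(\mathcal{A}_n,F_n)$ is also eventually bounded almost surely, since Lemmas~\ref{escalaAcotada} and~\ref{lemmaLimiteDoble} both require this. I would argue by contradiction: if along a subsequence $(n_k)$ every center of $\mathcal{A}_{n_k}$ had norm diverging to infinity, then choosing $R_0$ with $\mathbb{P}(\|\mathbf{x}\|\le R_0)>3/4$ and invoking Lemma~\ref{LemaPollardhcero} (or the plain SLLN), more than three-quarters of the observations would satisfy $d(\mathbf{x}_i,\mathcal{A}_{n_k})\to\infty$. The M-scale identity $\frac{1}{n}\sum\rho_1(d_i/M)=1/2$ would then force $M(\mathcal{A}_{n_k},F_{n_k})$ to diverge at a comparable rate, so that $d_i/M$ stays bounded away from zero on the majority of the sample and $\tau^2 = M^2\cdot \frac{1}{n}\sum\rho_2(d_i/M)$ itself diverges, contradicting $\limsup \tau\le \tau_0$. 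Hence some center $\boldsymbol{\mu}_n^\ast\in \mathcal{A}_n$ stays in a fixed ball $\overline{B(R_1)}$ eventually; then $d(\mathbf{x},\mathcal{A}_n)\le \|\mathbf{x}\|+R_1$, and the defining M-scale identity yields $M(\mathcal{A}_n,F_n)\le M^{\ast\ast}$ for a constant $M^{\ast\ast}$ determined by the distribution of $\|\mathbf{x}\|+R_1$.

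Once $M$ is bounded, Lemma~\ref{escalaAcotada} gives a probability-one set on which $\mathcal{A}_n\cap B(R_1)\neq\emptyset$ eventually, and for $\varepsilon>0$ Lemma~\ref{lemmaLimiteDoble} produces $R\ge R_1$ such that, setting $\mathcal{A}_n^R=\mathcal{A}_n\cap B(R)$, one has $\tau(\mathcal{A}_n^R,F_n)\le \tau(\mathcal{A}_n,F_n)+\varepsilon\le \tau_0+2\varepsilon$ eventually. Since $\mathcal{A}_n^R$ lies in the $d_H$-compact space $\boldsymbol{\mathcal{K}}_0=\{\mathcal{A}\subseteq \overline{B(R)}:\#\mathcal{A}\le K\}$, every subsequence admits a further $d_H$-convergent sub-subsequence $\mathcal{A}_{n_j}^R\to \mathcal{A}^\ast$. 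Combining Lemma~\ref{lemaContinuidad} (continuity of $\tau(\cdot,F)$) with Lemma~\ref{ConvUnifacop} yields $\tau(\mathcal{A}^\ast,F)\le \tau_0+2\varepsilon$, and letting $\varepsilon\downarrow 0$ while invoking the uniqueness in A.2) forces $\mathcal{A}^\ast=\mathcal{A}_0$, so that $\mathcal{A}_n^R\to \mathcal{A}_0$ in $d_H$.

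Finally, since $\mathcal{A}_0\subseteq \overline{B(R_0')}$ for a fixed $R_0'$ and $\mathcal{A}_n^R\subseteq \overline{B(R)}$, any residual centers of $\mathcal{A}_n\setminus \mathcal{A}_n^R$ can be controlled either by re-running the contradiction of the second paragraph on each stray center, or (when $|\mathcal{A}_0|=K$) by observing that $\mathcal{A}_n^R\to \mathcal{A}_0$ forces $\#\mathcal{A}_n^R=K$ eventually and hence $\mathcal{A}_n=\mathcal{A}_n^R\subseteq \overline{B(R)}$; choosing $C=R+R_0'$ then yields $\limsup_n d_H(\mathcal{A}_n,\mathcal{A}_0)\le C$ a.s. The principal obstacle throughout is the $M$-bound of the second paragraph: everything afterwards is a direct assembly of the previously established lemmas, but that intermediate bound cannot be obtained by naive inequalities like $\tau\le M/\sqrt{2}$ and requires the contradiction argument sketched above, resting on a delicate interplay between A.1), the M-scale equation, and the empirical versions provided by Lemma~\ref{LemaPollardhcero}.
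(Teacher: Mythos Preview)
Your approach diverges from the paper's and, in its final steps, has a real gap.

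The paper's argument is more direct. It exploits the strict gap between the optimal $\tau$-value with $K$ centers and with at most $K-1$ centers: setting $\nu_0^{(K-1)}=\inf_{\#\mathcal{A}\leq K-1}\tau(\mathcal{A},F)$, assumption A.2) gives $\nu_0^{(K-1)}>\nu_0$. One fixes $\varepsilon=(\nu_0^{(K-1)}-\nu_0)/2$, takes the corresponding $R_0$ from Lemma~\ref{lemmaLimiteDoble}, and argues by contradiction: if $d_H(\mathcal{A}_n,\mathcal{A}_0)$ were unbounded along a subsequence, some center of $\mathcal{A}_n$ would lie outside $B(R_0)$, so $\mathcal{A}_n^{R_0}$ would have at most $K-1$ points. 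Any subsequential limit $\mathcal{A}^*$ of $\mathcal{A}_n^{R_0}$ then satisfies $\tau(\mathcal{A}^*,F)\geq \nu_0^{(K-1)}$, while Lemma~\ref{lemmaLimiteDoble} together with optimality of $\mathcal{A}_n$ gives $\tau(\mathcal{A}_n^{R_0},F_n)\leq \tau(\mathcal{A}_n,F_n)+\varepsilon\leq \nu_0+o(1)+\varepsilon$, a numerical contradiction. No convergence $\mathcal{A}_n^R\to\mathcal{A}_0$ is ever invoked.

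Your route instead tries to establish $\mathcal{A}_n^R\to\mathcal{A}_0$ as an intermediate step, which is stronger than the lemma asks and is where the argument breaks. The step ``letting $\varepsilon\downarrow 0$ while invoking the uniqueness in A.2) forces $\mathcal{A}^*=\mathcal{A}_0$'' is not justified: changing $\varepsilon$ changes $R$, hence changes the truncated sequence $\mathcal{A}_n^R$ and its subsequential limit $\mathcal{A}^*$, so no single $\mathcal{A}^*$ survives the limit $\varepsilon\downarrow 0$. Similarly, ``re-running the contradiction of the second paragraph on each stray center'' does not work, since that contradiction requires \emph{all} centers to escape to infinity, not a single one. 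The natural repair for both problems is exactly the paper's observation: if a stray center exists then $\#\mathcal{A}_n^R\leq K-1$, so any limit has $\tau$-value at least $\nu_0^{(K-1)}>\nu_0$, contradicting the bound $\tau_0+2\varepsilon$ once $\varepsilon$ is fixed small. With that repair your argument collapses into the paper's.

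You are right, though, to flag that Lemma~\ref{lemmaLimiteDoble} requires $M(\mathcal{A}_n,F_n)$ bounded, which the paper invokes without verification. A quicker route than your paragraph-two contradiction is available: from $\tfrac{1}{n}\sum_i\rho_1(d_i/M)=1/2$, fixing $u_0$ with $\rho_1(u_0)=1/4$ forces at least a third of the sample to have $d_i/M\geq u_0$, whence $\tfrac{1}{n}\sum_i\rho_2(d_i/M)\geq \tfrac{1}{3}\rho_2(u_0)>0$ and $M^2\leq 3\tau^2/\rho_2(u_0)$; the almost-sure bound on $\tau(\mathcal{A}_n,F_n)$ then transfers to $M(\mathcal{A}_n,F_n)$ directly.
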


\begin{proof}
%\textbf{Proof of Lemma \ref{lemaCompacto1}}

\textit{Lemma \ref{lemaCompacto1} will be proved by absurd, reasoning as
follows, if clusters centers are not into a compact set then we could
suppose that there will be at least a center which is located in a very far
region with small probability. Then, considering the optimal cluster without
its furthest center will have a negligible impact in the $tau$ scale, that
would mean that the optimal scale value obtained with $K$ centers can be
compared to the optimal with $K-1 $ centers. But, that would be a
contradiction, because the $K$ optimal center it is well separated from the
optimal with $K-1$.}

Define $\nu_{0}^{(k-1)}$ to infimum of $\tau$ scale of $k-1$ centers, that
is, 
\begin{equation*}
\nu _{0}=\inf_{\#\mathcal{A}\leq k}\tau (\mathcal{A},F),\quad \quad \nu
_{0}^{(k-1)}=\inf_{\#\mathcal{A}\leq k-1}\tau (\mathcal{A},F).
\end{equation*}

Due to {A.2} $\nu _{0}^{(k-1)}-\nu _{0}>0$ and take $\varepsilon=(\nu
_{0}^{(k-1)}-\nu _{0})/2>0$. Choose the subset $\Omega _{1}^{\prime } $ from
Lemma \ref{lemmaLimiteDoble}, whit probability $1$ and its corresponding $%
R_{0}$ . Thus $\Omega _{1}^{\prime }$ has the following property: $\forall
\omega \in \Omega _{1}^{\prime }$ $\exists n_{0}(\omega )$ such that 
\begin{equation}
\tau ({\mathcal{A}_{n}^{R_{0}}},F_{n})-\tau (\mathcal{A}_{n},F_{n})<%
\varepsilon \ \ \ \forall n\geq n_{0}.  \label{propiedadOmega}
\end{equation}%
For that $R_{0}$ (independently of $\omega $), take the compact set $%
\boldsymbol{\mathcal{K}}_{1}$ 
\begin{equation}
\boldsymbol{\mathcal{K}}_{1}=\left\{ \mathcal{A}:\mathcal{A}\subseteq 
\overline{B(R_{0})},\#\left( \mathcal{A}\right) \leq k-1\right\} .
\label{compactoK1}
\end{equation}%
Then, considering the set of probability $1$ $\Omega _{2}^{\prime }$, where
Lemma \ref{ConvUnifacop} occurs for $\boldsymbol{\mathcal{K}}_{1}$. That is, 
$\forall \omega \in \Omega _{2}^{\prime }$, 
\begin{equation*}
\lim_{n\rightarrow \infty }\sup_{\mathcal{A}\in \boldsymbol{\mathcal{K}}%
_{1}}\left\vert \tau (\mathcal{A},F)-\tau (\mathcal{A},F_{n}^{(\omega
)})\right\vert =0.
\end{equation*}%
Finally, let $\Omega _{3}^{\prime }$ be the set where the following limit
happens 
\begin{equation*}
\tau (\mathcal{A}_{0},F_{n})\rightarrow \tau (\mathcal{A}_{0},F)=\nu _{0},
\end{equation*}%
We will prove that $\mathbb{P}(\Omega _{3}^{\prime })=1$. We use an absurd
reasoning , if we suppose the opposite to the we want to prove ( denying
equation (\ref{centroidesCompacto})), we obtain 
\begin{equation*}
\forall C>0\quad \mathbb{P}\left( \left\{ \omega \in \Omega
:\limsup_{n}d_{H}(\mathcal{A}_{n}^{(\omega )},\mathcal{A}_{0})>C\right\}
\right) \neq 0
\end{equation*}%
Define $C_{0}:=R_{0}+\max_{\boldsymbol{\mu }\in \mathcal{A}_{0}}\boldsymbol{%
\mu }$, and consider the set whit positive probability $\Omega
_{C_{0}}=\{\omega \in \Omega :\limsup_{n}d_{H}(\mathcal{A}_{n}^{(\omega )},%
\mathcal{A}_{0})>C_{0}\}$. By guessing $\mathbb{P}(\Omega _{C_{0}})>0$, it
is possible to take $\omega \in \Omega _{C_{0}}\cap \Omega _{1}^{\prime
}\cap \Omega _{2}^{\prime }\cap \Omega _{3}^{\prime }$, We will keep fixed
this $\omega $ thorough the proof and arrive to a contradiction. Let $(%
\mathcal{A}_{n})_{n\in \mathbb{N}_{1}}$ be a subsequence such that $d_{H}(%
\mathcal{A}_{n}^{(\omega )},\mathcal{A}_{0})>C_{0}$, then, as Hausdorff
distance for finite set is always achieved, there exists, $\widetilde{%
\boldsymbol{\mu }}_{n}\in \mathcal{A}_{n}^{(\omega )}$ and $\widetilde{%
\boldsymbol{\mu }}(n)\in \mathcal{A}_{0}$, such that 
\begin{equation*}
\Vert \widetilde{\boldsymbol{\mu }}_{n}-\widetilde{\boldsymbol{\mu }}%
(n)\Vert =d_{H}(\mathcal{A}_{n}^{(\omega )},\mathcal{A}_{0})>C_{0}.
\end{equation*}%
As $\Vert \widetilde{\boldsymbol{\mu }}_{n}\Vert +\Vert \boldsymbol{\mu }%
(n)\Vert \geq \Vert \widetilde{\boldsymbol{\mu }}_{n}-\widetilde{\boldsymbol{%
\mu }}(n)\Vert $, then 
\begin{equation*}
\Vert \widetilde{\boldsymbol{\mu }}_{n}\Vert >C_{0}-\Vert \widetilde{%
\boldsymbol{\mu }}(n)\Vert \geq C_{0}-\min_{\boldsymbol{\mu }\in \mathcal{A}%
_{0}}\Vert \boldsymbol{\mu }\Vert =R_{0}.
\end{equation*}

Therefore, $\mathcal{A}_{n}$ always has a center outside of $B(R_{0})$, then 
$\mathcal{A}_{n}^{R_{0}}=\mathcal{A}_{n}\cap B(R_{0})$ has at most $K-1$
centers, we also note that $\mathcal{A}_{n}^{R_{0}}$ are in the compact $%
\boldsymbol{\mathcal{K}}_{1}$ defined in (\ref{compactoK1}), as $\boldsymbol{%
\mathcal{K}}_{1}$ is a compact set regarding to Hausdorff distance, by
taking subsequences, we can suppose that there exists $\mathcal{A}^{\ast
}\in \boldsymbol{\mathcal{K}}_{1}$ of at most $K-1$ elements, such that $(%
\mathcal{A}_{n}^{R_{0}})_{n\in \mathbb{N}_{2}}$ converge to $\mathcal{A}%
^{\ast }$ in Hausdorff distance. By using Lemma \ref{ConvUnifacop}, and
continuity of $\tau (\mathcal{A},F)$ regarding its first argument, it is
easy to see that 
\begin{equation}
\tau (\mathcal{A}_{n}^{R_{0}},F_{n})\rightarrow \tau (\mathcal{A}^{\ast
},F)\geq \nu _{0}^{(k-1)}.  \label{absurdo1}
\end{equation}%
Indeed, 
\begin{equation*}
|\tau (\mathcal{A}_{n}^{R_{0}},F_{n})-\tau (\mathcal{A}^{\ast },F)|\leq
|\tau (\mathcal{A}_{n}^{R_{0}},F_{n})-\tau (\mathcal{A}_{n}^{R_{0}},F)|+|%
\tau (\mathcal{A}_{n}^{R_{0}},F)-\tau (\mathcal{A}^{\ast },F)|
\end{equation*}%
\begin{equation*}
\leq \sup_{\mathcal{A}\in \boldsymbol{\mathcal{K}}_{1}}|\tau (\mathcal{A}%
,F_{n})-\tau (\mathcal{A},F)|+|\tau (\mathcal{A}_{n}^{R_{0}},F)-\tau (%
\mathcal{A}^{\ast },F)|,
\end{equation*}%
and taking limit in $n$ at the previous inequality, (\ref{absurdo1}) holds.
Also, given that $\mathcal{A}_{n}$ are sample optimum, results 
\begin{equation}
\tau (\mathcal{A}_{n},F_{n})\leq \tau (\mathcal{A}_{0},F_{n})\rightarrow \nu
_{0}.  \label{absurdo2}
\end{equation}
We can find $n_{1}\in \mathbb{N}_{2}$ satisfying simultaneously:

\begin{description}
\item (a) $\tau (\mathcal{A}_{n_{1}}^{R_{0}},F_{n_{1}})\geq \nu
_{0}^{(k-1)}-\varepsilon /4$, this is possible due to equation (\ref%
{absurdo1}) % \textbf{(a)} \textbf{(b)} \textbf{(c)}

\item (b) $\tau (\mathcal{A}_{n_{1}},F_{n_{1}})\leq \nu _{0}+\varepsilon /4$%
, this is possible due to equation (\ref{absurdo2})

\item (c) $n_{1}\geq n_{0}$, where $n_{0}$ is defined in (\ref%
{propiedadOmega}).
\end{description}

By computing (a) $+(-1)$ {(b)}, we obtain 
\begin{equation*}
\tau (\mathcal{A}_{n_{1}}^{R_{0}},F_{n_{1}})-\tau (\mathcal{A}%
_{n_{1}},F_{n_{1}})\geq \nu _{0}^{(k-1)}-\nu _{0}-\frac{\varepsilon }{2}.
\end{equation*}%
Due to (c) $\varepsilon \geq \tau (\mathcal{A}_{n_{1}}^{R_{0}},F_{n_{1}})-%
\tau (\mathcal{A}_{n_{1}},F_{n_{1}})$, then 
\begin{equation*}
\varepsilon \geq \nu _{0}^{(k-1)}-\nu _{0}-\varepsilon /2\Rightarrow \frac{3%
}{2}\varepsilon \geq \nu _{0}^{(k-1)}-\nu _{0}.
\end{equation*}%
Noticing that $\varepsilon $ was chosen as $\varepsilon =(\nu
_{0}^{(k-1)}-\nu _{0})/2$ we get 
\begin{equation*}
\frac{3}{4}(\nu _{0}^{(k-1)}-\nu _{0})\geq (\nu _{0}^{(k-1)}-\nu
_{0})\Rightarrow \frac{3}{4}\geq 1,
\end{equation*}
which is an absurdity
\end{proof}

\subsection*{Proof of the main result (Strong Consistency)}

Now we can give a proof of Theorem 1.

\begin{proof}
\begin{enumerate}
\item \textbf{Construction of the Compact-set:} We take the $C$ constant and
the subset of probability $1$ from Lemma \ref{lemaCompacto1}, it is easy to
see that centers of $\mathcal{A}_{n}$ are inside a closed ball. Indeed, let $%
\boldsymbol{\mu}_n \in \mathcal{A}_{n}$ be an arbitrary center, by
definition of Hausdorff distance, exists $\boldsymbol{\mu} \in \mathcal{A}_0$%
, such that $\|\boldsymbol{\mu}_n-\boldsymbol{\mu}\| \leq d_H( \mathcal{A}%
_{n},\mathcal{A}_{0}) \leq C $, if $n>n_0$, then $\|\boldsymbol{\mu}_n\|
\leq \|\boldsymbol{\mu}\| + C$, taking maximum over $\|\boldsymbol{\mu}\|
\in \mathcal{A}_0$, we can find a properly upper bound:

\begin{equation*}
\|\boldsymbol{\mu}_n\| \leq C + \max_{\boldsymbol{\mu} \in \mathcal{A}_0} \|%
\boldsymbol{\mu} \| \quad \quad a.s
\end{equation*}

so that, we can assert, $\mathcal{A}_n \in \boldsymbol{\mathcal{K}}_0$,
being $\boldsymbol{\mathcal{K}}_0 = \{ \mathcal{A}\subseteq \overline{B}, \#%
\mathcal{A} \leq k\}$, where $\overline{B}$ is the closed ball the closed
ball centered on zero and radio equal to $C + \max_{\boldsymbol{\mu} \in 
\mathcal{A}_0} \|\boldsymbol{\mu} \|.$

\item \textbf{Convergence within the compact-set:} In order to prove
convergence, we apply Lemma \ref{funcCompactas} to $\varphi(\mathcal{A})=
\tau(\mathcal{A},F)$, indeed, we only must see that $\tau(\mathcal{A}_n,F)$
is close to $\tau(\mathcal{A}_0,F)$ in the sense given by Lemma \ref%
{funcCompactas}. Hence, we take $\tau(\mathcal{A}_n,F)$, then by adding and
subtracting $\tau(\mathcal{A}_n,F_n)$ we get, 
\begin{equation*}
\tau(\mathcal{A}_n,F) = \left[ \tau(\mathcal{A}_n,F) - \tau(\mathcal{A}%
_n,F_n) \right] + \tau(\mathcal{A}_n,F_n) \leq
\end{equation*}
\begin{equation*}
\leq \left[ \sup_{A\in \boldsymbol{\mathcal{K}_0}} \left|\tau(\mathcal{A},F)
- \tau(\mathcal{A},F_n)\right| \right] + \tau(\mathcal{A}_n,F_n)
\end{equation*}
Term between brackets is $o(1)$ %\footnote{%
%A sequence of random variables $Z_n$ is $o(1)$ if $Z_n \to 0 \quad a.s$} 
because of the Uniform Strong Law of Larger numbers over the compact set $%
\boldsymbol{\mathcal{K}}_0$ from Lemma \ref{ConvUnifacop}. Adding and
substracting $\tau(\mathcal{A}_0,F_n)$ results 
\begin{equation*}
\tau(\mathcal{A}_n,F) \leq o(1) + \left[ \tau(\mathcal{A}_n,F_n) - \tau(%
\mathcal{A}_0,F_n)\right]+ \tau(\mathcal{A}_0,F_n),
\end{equation*}
the term between brackets is not positive, by definition of $\mathcal{A}_n$,
thus, 
\begin{equation*}
\tau(\mathcal{A}_n,F) \leq o(1) + \tau(\mathcal{A}_0,F_n)
\end{equation*}
The strong law of the large numbers implies that term $\tau(\mathcal{A}%
_0,F_n) = o(1) + \tau(\mathcal{A}_0,F)$. Therefore, we obtain 
\begin{equation*}
\tau(\mathcal{A}_n,F) \leq o(1) + \tau(\mathcal{A}_0,F)= o(1) + \nu_0,
\end{equation*}
hence, it is verified the hypothesis of Lemma \ref{funcCompactas}, which
leads to 
\begin{equation*}
d_H(\mathcal{A}_n, \mathcal{A}_0) \to 0
\end{equation*}
\end{enumerate}
\end{proof}


\begin{thebibliography}{}

\bibitem[Agostinelli et~al., 2015]{GSEPaper}
Agostinelli, C., Leung, A., Yohai, V.~J., and Zamar, R.~H. (2015).
\newblock Robust estimation of multivariate location and scatter in the
  presence of cellwise and casewise contamination.
\newblock {\em Test}, 24(3):441--461.

\bibitem[Al~Hasan et~al., 2009]{ROBIN}
Al~Hasan, M., Chaoji, V., Salem, S., and Zaki, M.~J. (2009).
\newblock Robust partitional clustering by outlier and density insensitive
  seeding.
\newblock {\em Pattern Recognition Letters}, 30(11):994--1002.

\bibitem[Cheng et~al., 2001]{HSI}
Cheng, H.-D., Jiang, X.~H., Sun, Y., and Wang, J. (2001).
\newblock Color image segmentation: advances and prospects.
\newblock {\em Pattern recognition}, 34(12):2259--2281.

\bibitem[Cuesta-Albertos et~al., 1997]{EspaniolesTKMeans}
Cuesta-Albertos, J., Gordaliza, A., and Matr{\'a}n, C. (1997).
\newblock Trimmed $ k $-means: An attempt to robustify quantizers.
\newblock {\em The Annals of Statistics}, 25(2):553--576.

\bibitem[Fritz et~al., 2012]{trimclustPackage}
Fritz, H., Garc{\i}a-Escudero, L.~A., and Mayo-Iscar, A. (2012).
\newblock tclust: An r package for a trimming approach to cluster analysis.
\newblock {\em Journal of Statistical Software}, 47(12):1--26.

\bibitem[Garc{\'\i}a-Escudero et~al., 2008]{garcia2008general}
Garc{\'\i}a-Escudero, L.~A., Gordaliza, A., Matr{\'a}n, C., and Mayo-Iscar, A.
  (2008).
\newblock A general trimming approach to robust cluster analysis.
\newblock {\em The Annals of Statistics}, pages 1324--1345.

\bibitem[Hartigan and Wong, 1979]{hartigan1979algorithm}
Hartigan, J.~A. and Wong, M.~A. (1979).
\newblock Algorithm as 136: A k-means clustering algorithm.
\newblock {\em Journal of the Royal Statistical Society. Series C (Applied
  Statistics)}, 28(1):100--108.

\bibitem[Leung et~al., 2015]{PACKAGEGSE}
Leung, A., Danilov, M., Yohai, V., and Zamar, R. (2015).
\newblock Gse: Robust estimation in the presence of cellwise and casewise
  contamination and missing data.
\newblock {\em Test}, page R package.

\bibitem[Lloyd, 1982]{lloyd1982least}
Lloyd, S. (1982).
\newblock Least squares quantization in pcm.
\newblock {\em IEEE transactions on information theory}, 28(2):129--137.

\bibitem[MacQueen, 1967]{macqueen1967some}
MacQueen, J. (1967).
\newblock Some methods for classification and analysis of multivariate
  observations.
\newblock In {\em Proceedings of the fifth Berkeley symposium on mathematical
  statistics and probability}, number~14, pages 281--297. Oakland, CA, USA.

\bibitem[Maronna and Yohai, 2017]{maronna2017robust}
Maronna, R.~A. and Yohai, V.~J. (2017).
\newblock Robust and efficient estimation of multivariate scatter and location.
\newblock {\em Computational Statistics \& Data Analysis}, 109:64--75.

\bibitem[Munkres, 2000]{Munkres}
Munkres, J. (2000).
\newblock {\em Topology}.
\newblock Prentice Hall, Upper Saddle River, NJ.

\bibitem[NASA, 2016]{NASA}
NASA (2016).
\newblock View of Curiosity's First Scoop Also Shows Bright Object.

\bibitem[Pollard, 1981]{PollardPaper}
Pollard, D. (1981).
\newblock Strong consistency of $k$-means clustering.
\newblock {\em The Annals of Statistics}, 9(1):135--140.

\bibitem[Pollard, 1984]{PollardLibro}
Pollard, D. (1984).
\newblock {\em Convergence of stochastic processes}.
\newblock Springer, New York.

\bibitem[Rand, 1971]{CERREF}
Rand, W.~M. (1971).
\newblock Objective criteria for the evaluation of clustering methods.
\newblock {\em Journal of the American Statistical association},
  66(336):846--850.

\bibitem[Steinhaus, 1956]{steinhaus1956division}
Steinhaus, H. (1956).
\newblock Sur la division des corp materiels en parties.
\newblock {\em Bull. Acad. Polon. Sci}, 1(804):801.

\bibitem[Yohai and Zamar, 1988]{yohai1988high}
Yohai, V.~J. and Zamar, R.~H. (1988).
\newblock High breakdown-point estimates of regression by means of the
  minimization of an efficient scale.
\newblock {\em Journal of the American statistical association},
  83(402):406--413.

\end{thebibliography}
\end{document}